\theoremstyle{plain}
      \newtheorem{theorem}{Theorem}
      \newtheorem{lemma}[theorem]{Lemma}
      \newtheorem{proposition}[theorem]{Proposition}
      \theoremstyle{definition}
      \theoremstyle{ex}
      \newtheorem{ex}[theorem]{Example}
 \theoremstyle{remark}
      \newtheorem{remark}[theorem]{Remark}
      \theoremstyle{proof}
\newcounter{Step}
\newenvironment{step}[0]{\bigskip\addtocounter{Step}{1}\noindent\textbf{Step \theStep :} }{\
  \begin{flushright} \end{flushright}}
\def\N{\mbox{I\hspace{-.15em}N} }
\def\R{\mbox{I\hspace{-.15em}R} }
\def\C{\hspace{.17em}\mbox{l\hspace{-.47em}C} }
\def\o{\otimes}
\begin{document}
\title[Functional properties of Generalized H\"ormander spaces of distributions II]{Functional properties of Generalized H\"ormander spaces of distributions II :\\
{ Multilinear maps and applications to spaces of functionals with wave front set conditions}}
%\titlerunning{Functional properties of Generalized H\"ormander spaces of distributions I}
\author{Yoann Dabrowski}
\address{ Universit\'{e} de Lyon\\ 
Universit\'{e} Lyon 1\\
Institut Camille Jordan UMR 5208\\
43 blvd. du 11 novembre 1918\\
F-69622 Villeurbanne cedex\\
France
}
\email{ dabrowski@math.univ-lyon1.fr}
\subjclass[2010]{Primary 46F05; Secondary : 81T20}
\keywords{Wave front set, Microcausal functionals, Retarded brackets, Infinite dimensional Poisson algebras.}
%\date{}
\begin{abstract}
We continue our study and applications of  generalized H\"ormander spaces of distributions $\mathcal{D}'_{\gamma,\Lambda}$ with $C^\infty$ wavefront
set included in a cone $\Lambda$ and the union of $H^s$-wave front sets in a second cone $\gamma\subset \Lambda$. We give hypocontinuity results and failure of continuity of tensor multiplication maps between these spaces and deduce hypocontinuity results for various compositions on spaces of multilinear maps. We apply this study to a generalization of microcausal functionals from algebraic quantum field theory with derivatives controlled by spaces either of the form $\mathcal{D}'_{\gamma,\Lambda}$ or some $\epsilon$-tensor product of them. We prove nuclearity and completeness results and give general results to build  Poisson algebra structures (with at least hypocontinuous bilinear products). We also apply our general framework to build retarded products with field dependent propagators.
\end{abstract}

\maketitle

\begin{center}\section*{\textsc{Introduction}}
\end{center}

In 1992, Radzikowski~\cite{Radzikowski-92-PhD,Radzikowski} showed 
the wave front set of distributions 
to be  a key
concept to define quantum fields in curved spacetime.
This idea was fully developed into a renormalized
scalar field theory in curved spacetimes by
Brunetti and Fredenhagen~\cite{Brunetti2}, followed by
Hollands and Wald~\cite{Hollands2} and later extended to more general fields, for instance Dirac fields~\cite{Kratzert-00,Hollands-01,Antoni-06,
Dappiaggi-09,Sanders-10-Dirac,Rejzner-11},
gauge fields~\cite{Hollands-08,RejznerYangMills,RejznerYangMills2}
and even some attempts of quantization of gravitation~\cite{RejznerBF}. Moreover, recent simplifications strongly improved the mathematical understanding of the theory \cite{KM}.

Following those developments, the natural space where quantum field
theory seem to take place is not the space of distributions
$\mathcal{D}'$, but the space $\mathcal{D}'_\Gamma$ of distributions having
their wave front set in a 
specified closed cone $\Gamma$.
This space and its simplest properties were described
by H{\"o}rmander in 1971~\cite{Hormander-71}. Since recent developments \cite{DutschBF}, most papers in algebraic quantum field theory used microcausal functionals where the natural space to control the wave front set is rather the dual of the previous space $\mathcal{E}'_\Lambda$ with control of the wave front set by an open cone $\Lambda=-\Gamma^c$, we started recently in \cite{BrouderDabrowski} the investigation of functional analytic properties of these spaces.  The first paper of this series \cite{Dab14a} then computed their completion and bornologification.

We called ``dual wave front set" the union of usual $H^s$-wave front sets (see e.g. \cite[p8]{Delort}) of  a distribution : $DWF(u)=\bigcup_{s\in \R}WF_s(u)$ so that $WF(u)=\overline{DWF(u)}$ is recovered as its closure (see  \cite{Dab14a} for more details). With this definition the completion $\widehat{\mathcal{E}'_\Lambda}$ is nothing but : $$\widehat{\mathcal{E}'_\Lambda}=\{u\in \mathcal{E}', DWF(u)\subset \Lambda\}.$$
We thus introduced and studied the main functional analytic properties of the following spaces, for $\gamma\subset \Lambda\subset \overline{\gamma}$ cones, we define :
$$\mathcal{E}'_{\gamma,\Lambda}(U)=\{u\in \mathcal{E}'(U) : DWF(u)\subset \gamma, WF(u)\subset \Lambda\},$$
$$\mathcal{D}'_{\gamma,\Lambda}(U)=\{u\in \mathcal{D}'(U) : DWF(u)\subset \gamma, WF(u)\subset \Lambda\}.$$

This class of spaces is stable by topological and bornological duality, completion, bornologification and their main properties  established in \cite{Dab14a} are summarized in section 1.1 below for the reader's convenience.

After this general study, our second goal is to provide tools to study spaces of functionals with wave front set conditions on their derivatives. For this we gather  miscellaneous results on tensor products of our spaces and spaces of multilinear maps in part 1 of this second paper in the series. The third paper of this series will investigate more systematically those tensor products, following the general advice of Grothendieck, and mostly because general results don't give a full understanding of those tensor products. As for the completion above, there is a need for concrete (microlocal) representation to obtain a fully satisfactory functional analytic understanding. In this paper, we will be content to generalize hypocontinuity results of \cite{BrouderDangHelein} to our spaces in proposition \ref{hypocontinuity}. We will deduce some non-continuity results showing the use of hypocontinuity is unavoidable in proposition \ref{NONContinuity}. This is to be contrasted with vague statements of ``continuity" meaning sequential continuity spread out in the literature, and which become dangerous when mixed with projective tensor products statements which are related to full continuity.
We will finally give general relations of tensor products with our generalized H\"{o}rmander spaces. This will be crucial to use standard information on propagators and to relate our functionals with usual microcausal functionals of algebraic quantum field theory. Finally, we define in section 4 composition maps on spaces of multilinear maps. These spaces thus form a kind of topological operad.

Concerning spaces of functionals, we first solve technical problems in getting complete nuclear topologies on multilocal functionals and variants of microcausal functionals in  Theorem \ref{AppliedFA}. This problem remained open in \cite{RibeiroBF} when some nuclear topology was found on ordinary microcausal functionals.

However, we go beyond this functional analytic study of already used spaces to suggest a more algebraic and functional analytic approach on them to build efficiently Poisson algebras in infinite dimension with at least hypocontinuity results for the binary operations. Our idea about controlling, by wave front set conditions on derivatives, spaces of functionals is that it is much easier to control derivatives in spaces of multilinear maps on our spaces rather than by our spaces themselves. This especially restore some continuity of products (instead of hypocontinuity, in the full support case, see Theorem \ref{AppliedFA}) and this reduces definition of maps like retarded or  Poisson products (Peierls brackets) even with field dependent propagators to applications of composition of multilinear maps, see theorem \ref{AppliedComposition}. We emphasize that our theorem is a scheme of result to build various maps on spaces of functionals. Especially, as an application, we give a construction of retarded products in the case of field dependent propagators, fixing an issue in \cite{RibeiroBF}. This is the goal of section \ref{Retarded}.

Let us now describe in more detail the content of this paper, as a guiding summary of our tools for the reader.

Section 1 gathers preliminary material, mostly coming from \cite{Dab14a}, notably in subsection 1.1.
Subsection 1.2 recalls various notation and definitions of tensor products we will use extensively later.

Subsection 1.3 is new and completes the proofs of some results already  stated in our first paper on the series. It proves approximation properties for our spaces, identifies completions with more concrete quasi-completions and give a crucial property of vector valued distributions based on our spaces to be tested scalarly by duality, the so-called property $(\epsilon)$ of Schwartz.

As explained before, part 1 starts the study of tensor products building on previous results in \cite{BrouderDangHelein} and our general functional analytic properties. Section 2 contains continuity, and hypocontinuity results (most are generalizations of those in the quoted paper but one, proposition \ref{hypocontinuityImproved}, is strictly stronger even in the closed cone case they consider). Those results are based on known stability properties of hypocontinuity. This section also contains the advertised non-continuity result. The issue comes from zero sections where cones are not the most natural set of control for classical tensor products. The non-continuity result is thus based on our improved continuity and proves using duality results and nuclearity that continuity would imply an isomorphism of the hypocontinuous tensor product with some space of distributions. But plenty of distributions in this space, typically an example of Hormander with wave front set contained in one direction, with one side on zero sections, cannot come from the image of the hypocontinuous product. Since the cones considered in this non-continuity result are of the form appearing in physics for microcausal functionals, this result is of physical significance.

Section 3 gives inclusions between $\epsilon$-products and our generalized H\"ormander spaces of distributions, in waiting for an exact microlocal characterization of the former in the third paper of the series. 
Section 4 is concerned with multilinear maps.

 Part 2 gives the already described applications to functionals. Schwartz' papers on vector valued distributions \cite{Schwartz3} play a key role here as well as the notion of convenient smoothness \cite{KrieglMichor} and our composition of multilinear maps from section 4. Section 5 defines our general spaces of functionals and gives completeness, nuclearity and results of (hypo)continuity of products. Section 6 contains a general theorem to construct hypocontinuous bilinear maps and  section 7 applies this to retarded products.

\medskip 

\subsection*{Acknowledgments} The author is grateful to Katarzyna Rejzner and Christian Brouder for helpful discussions on the physical motivation of part 2. He also thanks Christian Brouder for many comments on previous versions of this paper that helped improving its exposition. 

Finally, the author acknowledges the support and hospitality of the Erwin Schr\"odinger Institute during the workshop ``Algebraic Quantum Field Theory: Its Status and Its Future" in May 2014. He also acknowledges the organizers for the stimulating program. Again, he thinks (and hopes) the physical relevance of the content of part 2 greatly benefited from the participation at this workshop.

\section{Preliminaries}
We start by recalling the setting of the first paper of the series without giving the definition of the topologies but we still state the main results we will most often use in this second paper.

Let $U\subset M$ an open set in a smooth connected manifold (implicitly assumed orientable 
 $\sigma$-compact without boundary) of dimension $d$.
We assume given on $M$ a complete Riemannian metric $D$ giving the topology (so that, by Hopf-Rinow theorem, closed balls for $D$ are compact.)

 Let $E\mapsto U$ a smooth real vector bundle (with finite dimensional fiber of real dimension $e$).

We will use a fixed smooth partition of unity $f_i$, indexed by $I$, subordinated to a covering $U_i$, with $\overline{U_i}\subset U$ compact and $(\overline{U_i})_{i\in I}$ locally finite % note this is not restrictive, if ({U_i})_{i\in I} is locally finite, by induction, we find a refinement with \overline{V_i}\subset U_i and V_i still a cover, taking J_i the set of k\neq i with U_k\cap\overline{U_i}\neq \emptyset K=\overline{U_i}\cap (\cup_{k\in J_i}U_k)^c is compact and included in U_i (since the points of the closure not in U_i are necessarily in some U_k, k\in J_i by the covering property. Since the points in K are the only one not covered by other terms, it suffices to take V_i a neighborhhod of K with \overlin{V_i}\subset U_i.
%and geodesically convex for $D$,
 $U_i$ smoothly isomorphic via a chart $\varphi_i:U_i\to \R^d$ to an open set in $\R^d$, extending homeomorphically to $\overline{U_i}$ and  trivializing  $E\mapsto U$ %(This will be fixed except when we will prove our spaces are independent of this choice).

 We consider $\mathcal{D}'(U;E)$ the space of distributional sections of $E$ with support in $U$ and $\mathcal{E}'(U;E)$ the space of distributional sections of $E$ with compact support in $U$.  Note that $\mathcal{D}'(U_i,E)\simeq \mathcal{D}'(\varphi_i(U_i),\R^e)\simeq (\mathcal{D}'(\varphi_i(U_i)))^e $ (see e.g. \cite[(3.16) p 234]{GrosserKOS}), via the map written above $u\mapsto u\circ \varphi_i^{-1}$.%Note that $\mathcal{D}'(U_i,E)\simeq \mathcal{D}'(U_i,\R^e)\simeq (\mathcal{D}'(U_i))^e .$
 Likewise we call  $\mathcal{E}(U;E)$ the space of smooth sections of $E$ on $U$ and $\mathcal{D}(U;E)$ the space of smooth sections of $E$ with compact support in $U$. We write as usual $E'$ for the dual bundle, and $E^*=E'\otimes \R^*$ its version twisted by the density bundle $\R^*$ (cf e.g. \cite[chapter 3]{GrosserKOS}). We of course don't write $E$ in any notation when $E$ is the trivial line bundle over $\R$. Note that, since we don't make explicit our trivialization of vector bundles, we make the choice for those of $E,E^*$  so that for $g\in\mathcal{D}(U_i),v\in \mathcal{D}(U)$: 
$$\langle f_iu, gv\rangle=\langle (f_iu)\circ \varphi_i^{-1},(gv)\circ \varphi_i^{-1}\rangle.$$

This reduces duality pairings to those on $\R^n$ and the emphasis on the difficult analytic part of pullback rather than on the multiplication by smooth map part involved in change of bundle trivialization.
%Likewise we call  $\mathcal{E}(U;E)$ the space of smooth sections of $E$ on $U$ and $\mathcal{D}(U;E)$ the space of smooth sections of $E$ with compact support in $U$. We write as usual $E'$ for the dual bundle, and $E^*=E'\otimes \R^*$ its version twisted by the density bundle $\R^*$ (cf e.g. \cite[chapter 3]{GrosserKOS}). We of course don't write $E$ in any notation when $E$ is the trivial line bundle over $\R$.

\subsection{Duality and functional analytic results} Recall that all the definition of topologies are given in section 3 there but they are also characterized by some of the properties given in the next result. The notions related to support properties are defined there in section 2. The most used support conditions $\mathcal{C}$ will be $\mathcal{K}$ the family of compact sets, $\mathcal{F}$ the family of all closed sets and on globally hyperbolic manifolds those explained in \cite[Ex 16]{Dab14a} (cf also \cite{Sanders}) :
timelike-compact closed sets $\mathcal{K}_T=\mathcal{K}_F\cap \mathcal{K}_P$, future-compact closed sets $\mathcal{K}_F,$ past-compact closed sets $\mathcal{K}_P,$ spacelike-compact closed sets $\mathcal{SK},$ future-spacelike-compact %(or advanced)
   closed sets $\mathcal{SK}_F$ and  past-spacelike-compact %(or retarded)
    closed sets $\mathcal{SK}_P$. In these cases the class $(\mathscr{O}_\mathcal{C})^o$ in duality formulas is described as follows $(\mathscr{O}_\mathcal{K})^o=\mathcal{F}, (\mathscr{O}_\mathcal{F})^o=\mathcal{K}$, $\mathcal{SK}_F= (\mathscr{O}_{\mathcal{K}_P})^o$, $\mathcal{SK}_P= (\mathscr{O}_{\mathcal{K}_F})^o,$
 $\mathcal{K}_P= (\mathscr{O}_{\mathcal{SK}_F})^o$, $\mathcal{K}_F= (\mathscr{O}_{\mathcal{SK}_P})^o, $ $(\mathscr{O}_{\mathcal{SK}})^{o}=(\mathscr{O}_{\mathcal{SK}_F})^{o}\cap (\mathscr{O}_{\mathcal{SK}_P})^{o}=\mathcal{K}_T,(\mathscr{O}_{\mathcal{K}_T})^{o}%=\mathcal{SK}_F\vee \mathcal{SK}_P
 =\mathcal{SK}.$ They satisfy the assumptions below since they are all enlargeable, by definition polar, and either $\mathcal{C}$ or $(\mathscr{O}_\mathcal{C})^o$ is countably generated in the sense of \cite[Ex 3, Def 13]{Dab14a}.
 
 Recall the notation for our spaces in the vector bundle case :
$$\mathcal{D}'_{\gamma,\Lambda}(U,\mathcal{C};E)=\{u\in \mathcal{D}'(U;E)\ | \ WF(u)\subset \Lambda , DWF(u)\subset \gamma, \text{supp}(u)\in \mathcal{C}\}.$$ 

The following theorem is our main result from the first part \cite[Prop 34]{Dab14a}.

\begin{theorem}\label{FAGeneral2}Let $\gamma$ %\subset \Lambda\subset \overline{\gamma}$ 
a cone  and  $\mathcal{C}=\mathcal{C}^{oo}$ an enlargeable polar family of closed sets in $U$ and let $\lambda=-\gamma^c.$
 The bounded sets on $\mathcal{D}'_{\gamma,\overline{\gamma}}(U,\mathcal{C};E)$ coincide for $\mathcal{I}_{ppp}$ and $\mathcal{I}_{iii}$ and this last inductive limit is regular. %is also complete, nuclear thus semireflexive (even completely reflexive) and semi-Montel and, as a consequence, on its dual the strong, Mackey and Arens topologies (the latter being of uniform convergence on balanced compact convex subsets) coincide and moreover they coincide when $\alpha=pip$ with the following limt on closed cones of bornologifications :\begin{align*}(\mathcal{D}'_{\lambda,\lambda}(U,(\mathscr{O}_\mathcal{C})^o),\mathcal{I}_{b}):=\underrightarrow{\lim}_{ \Pi\subset \lambda}\left(\mathcal{D}'_{\Pi,\Pi}(U,(\mathscr{O}_\mathcal{C})^o),\mathcal{I}_{H,iii}^{born}\right),\end{align*}
Moreover on $\mathcal{D}'_{\lambda,\lambda}(U,(\mathscr{O}_\mathcal{C})^o;E^*)$ we have $\mathcal{I}_{b}=\mathcal{I}_{ibi}=\mathcal{I}_{pbp}.$ %which is also an infra-Schwartz locally convex space.
  $(\mathcal{D}'_{\lambda,\overline{\lambda}}(U,(\mathscr{O}_\mathcal{C})^o;E^*),\mathcal{I}_{ibi})$ is the strong and Mackey dual of $(\mathcal{D}'_{\gamma,\overline{\gamma}}(U,\mathcal{C};E), \mathcal{I}_{iii}^{born}=\mathcal{I}_{ppp}^{born})$, the bornologification $\mathcal{I}_{ppp}^{born},$ the completion of $(\mathcal{D}'_{\lambda,{\lambda}}(U,(\mathscr{O}_\mathcal{C})^o;E^*),\mathcal{I}_{b})$ and also for any cone $\lambda\subset \Lambda\subset \overline{\lambda}$ the completion of $(\mathcal{D}'_{\lambda,{\Lambda}}(U,(\mathscr{O}_\mathcal{C})^o;E^*),\mathcal{I}_{ibi})$, which is also nuclear. 
  
  Thus, $(\mathcal{D}'_{\lambda,\overline{\lambda}}(U,(\mathscr{O}_\mathcal{C})^o;E^*),\mathcal{I}_{ibi})$ is  complete ultrabornological nuclear, %, infra-Schwartz 
especially a Montel space. Likewise, $(\mathcal{D}'_{\lambda,\lambda}(U,(\mathscr{O}_\mathcal{C})^o;E^*), \mathcal{I}_b=\mathcal{I}_{ibi})$ is the bornologification of $\mathcal{I}_{iii},$ so that the situation is summarized in the two following commuting diagrams where $i$ is the canonical injection from a space to its completion, $b$ the canonical map  with bounded inverse between a bornologification and the original space :
$$\begin{diagram}[inline]
&&(\mathcal{D}'_{\gamma,\overline{\gamma}}(U,\mathcal{C};E),\mathcal{I}_{ibi})
\\
&\ldTo^b&\uInto^i 
\\
(\mathcal{D}'_{\gamma,\overline{\gamma}}(U,\mathcal{C};E),\mathcal{I}_{ppp})& &(\mathcal{D}'_{\gamma,{\gamma}}(U,\mathcal{C};E),\mathcal{I}_{ibi})
\\
\uInto^i&\ldTo^b& 
\\
(\mathcal{D}'_{\gamma,{\gamma}}(U,\mathcal{C};E),\mathcal{I}_{ppp})& &
\end{diagram}\quad \quad \quad \begin{diagram}[inline]
(\mathcal{D}'_{\lambda,\overline{\lambda}}(U,(\mathscr{O}_\mathcal{C})^o;E^*),\mathcal{I}_{ibi})&&
\\
\dTo^b& \luInto^i&
\\
(\mathcal{D}'_{\lambda,\overline{\lambda}}(U,(\mathscr{O}_\mathcal{C})^o;E^*),\mathcal{I}_{ppp})& &(\mathcal{D}'_{\lambda,{\lambda}}(U,(\mathscr{O}_\mathcal{C})^o;E^*),\mathcal{I}_{ibi})
\\
&\luInto^i&\dTo^b
\\& & 
(\mathcal{D}'_{\lambda,{\lambda}}(U,(\mathscr{O}_\mathcal{C})^o;E^*),\mathcal{I}_{ppp})
\end{diagram}$$

Spaces symmetric with respect to the middle vertical line are Mackey duals of one another. All the spaces involved are nuclear locally convex spaces, and in each of them, bounded sets which are closed in the completion are metrisable compact sets and are equicontinuous sets from the stated dualities. When $\lambda,\gamma$ are $\mathbf{\Delta_2^0}$-cones (i.e. both $F_\sigma$,$G_\delta$) and if we assume
either $\mathcal{C}$ or $(\mathscr{O}_\mathcal{C})^o$  countably generated, all the space involved are moreover quasi-LB spaces of class $\mathfrak{G}$.  

\end{theorem}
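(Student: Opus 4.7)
The theorem is the main result of the first paper of the series, stated as \cite[Prop~34]{Dab14a}, so my plan mirrors the strategy of that proof. Each of the topologies $\mathcal{I}_{ppp},\mathcal{I}_{iii},\mathcal{I}_{ibi},\mathcal{I}_{pbp},\mathcal{I}_{b}$ is built as a triple limit over three parameters: a refinement of closed cones approximating $\Lambda$ (or $\overline{\lambda}$) in cotangent space, a Sobolev exponent $s$ coding the $DWF\subset\gamma$ condition, and a support filtration coming from $\mathcal{C}$ (resp.\ $(\mathscr{O}_\mathcal{C})^o$). The plan is to unwind each such triple limit to weighted Sobolev Hilbert building blocks, then exploit the standard permanence of nuclearity, regularity of inductive limits, and Mackey--Arens duality. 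Localization via the partition of unity $(f_i)_{i\in I}$ and the charts $\varphi_i$ reduces everything to spaces on open subsets of $\R^d$ where concrete Fourier-analytic seminorms are available.

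First I would prove regularity of the inductive limit over the support parameter: for $\mathcal{C}$ polar and enlargeable a bounded set in $\mathcal{D}'_{\gamma,\overline{\gamma}}(U,\mathcal{C};E)$ is already bounded inside sections supported in a fixed element of $\mathcal{C}$, by a Banach--Steinhaus argument using a countable enlargement exhaustion. This gives the coincidence of bounded sets for $\mathcal{I}_{ppp}$ and $\mathcal{I}_{iii}$ (the bounded-set computation can be exchanged with the outer limit), and on the dual side the equality $\mathcal{I}_b=\mathcal{I}_{ibi}=\mathcal{I}_{pbp}$ follows because bornologification of a nuclear space coincides with the strong topology whenever bounded sets are already metrisable compact. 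The duality pairing is then obtained by computing polars of the defining equicontinuous base step by step, using $\lambda=-\gamma^c$, the correspondence $\mathcal{C}\leftrightarrow(\mathscr{O}_\mathcal{C})^o$, and the local duality of $H^s$ with $H^{-s}$; metrisable compactness of the polars, needed to identify the strong and Mackey dual, falls out of nuclearity of the Hilbertian building blocks.

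Next I would identify completions directly: Cauchy nets for $\mathcal{I}_{ibi}$ satisfy uniform $H^s$-estimates in every cone direction of $\lambda$ and converge in $\mathcal{D}'$ to a limit whose $H^s$-wavefront set lies in $\lambda$ and whose $C^\infty$-wavefront set lies in $\overline{\lambda}$, which is exactly $\mathcal{D}'_{\lambda,\overline{\lambda}}$; combined with the dual identification this yields the bornologification identity $\mathcal{I}_{iii}^{born}=\mathcal{I}_{ppp}^{born}$, completeness on both sides, and the two commuting diagrams. Nuclearity propagates from the Hilbertian building blocks through projective and regular inductive limits; a complete ultrabornological nuclear space is automatically Montel, and metrisability plus equicontinuity of closed bounded sets in the completion then follow from nuclearity and duality. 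The quasi-LB class $\mathfrak{G}$ statement under the $\mathbf{\Delta_2^0}$ and countable generation hypotheses is handled by producing an explicit countable Banach inductive presentation on one side of the duality and transferring class $\mathfrak{G}$ by its standard permanence properties. The main obstacle throughout, and the reason three-step limits and the middle bornological step are unavoidable, is the tension between the $WF/DWF$ dichotomy in cotangent space and the support filtration: the two filtrations act in opposite directions under duality, so a bornological intermediate is needed to make strong dual, completion, and Mackey dual coincide.
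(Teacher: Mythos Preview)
The paper does not prove this theorem; it is recalled verbatim from \cite[Prop~34]{Dab14a} as a preliminary, so there is no proof here to compare against. You recognize this yourself and offer a sketch of what you believe the argument in \cite{Dab14a} looks like. The hints scattered in the present paper (notably the proof of Lemma~\ref{BoundedRelated}, which refers to ``the beginning of the proof of \cite[Prop~34]{Dab14a}'' for the bounded-set identification and to ``the end of the proof'' for equicontinuity of absolutely convex compacts) are broadly consistent with the order of your sketch: first identify bounded sets across $\mathcal{I}_{ppp}$ and $\mathcal{I}_{iii}$, then run duality and bornologification, then handle completions and nuclearity.

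One point in your sketch deserves a caveat. You argue regularity of the support inductive limit ``by a Banach--Steinhaus argument using a countable enlargement exhaustion.'' But the theorem asserts regularity of $\mathcal{I}_{iii}$ and the coincidence of bounded sets \emph{without} any countable-generation hypothesis on $\mathcal{C}$ or $(\mathscr{O}_\mathcal{C})^o$; that hypothesis only enters for the quasi-LB/class~$\mathfrak{G}$ conclusion in the last sentence. So your regularity step, as written, assumes more than is available in the general case. The actual mechanism in \cite{Dab14a} presumably uses the polar--enlargeable structure of $\mathcal{C}$ directly (via the $(\cdot)^{oo}$ closure and the controlled $\epsilon$-enlargements) rather than a countable exhaustion. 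Similarly, your line ``bornologification of a nuclear space coincides with the strong topology whenever bounded sets are already metrisable compact'' is not a standard permanence statement and would need to be unpacked into the specific Hogbe-Nlend-type identities the paper invokes (cf.\ the reference to \cite[lemma~27]{Dab14a} in the proof of Lemma~\ref{BoundedRelated}). These are not fatal to the outline, but they are exactly the places where the real work in \cite{Dab14a} lives.
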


The following extension of \cite[Prop 6.1]{BrouderDangHelein} was proven in \cite[Prop 36]{Dab14a}.
\begin{proposition}\label{pullback}Let $\gamma\subset \Lambda\subset \overline{\gamma}$ be cones on $U_2$ and $f:U_1\to U_2$ a smooth map. Define the cone  $df^*\gamma=\{(x,df^{*}(x)(\xi)):(f(x),\xi)\in\gamma\}$ and  for an enlargeable polar family of closed sets $\mathcal{C}$ define $f^{-1}(\mathcal{C})=\{f^{-1}(C), C\in \mathcal{C},\},$ and its polar enlargeable variant $f_e^{-1}(\mathcal{C})=\{(f^{-1}(C))_{(1-1/n)\epsilon(C)}, C\in \mathcal{C},\epsilon_i(C)>0\}^{oo}$ (depending on any function $\epsilon:\mathcal{C}\to ]0,1[^I$)

Assume $df^*\gamma\subset \dot{T}^*U_1$, and $ df^*\Lambda\subset \dot{T}^*U_1$. Then we have  continuous maps for $\mathcal{I}$ either $\mathcal{I}_{ppp}$ or $\mathcal{I}_{ibi}$ : $$f^*:(\mathcal{D}'_{\gamma,{\Lambda}}(U_2,\mathcal{C}),\mathcal{I})%\to (\mathcal{D}'_{df^*\gamma,df^*{\Lambda}}(U_1,f^{-1}(\mathcal{C})),\mathcal{I})
\to (\mathcal{D}'_{df^*\gamma,df^*{\Lambda}}(U_1,f_e^{-1}(\mathcal{C})),\mathcal{I}),$$
$$f^*:(\mathcal{D}'_{\gamma,\overline{\gamma}}(U_2,\mathcal{C}),\mathcal{I})%\to (\mathcal{D}'_{df^*\gamma,df^*{\Lambda}}(U_1,f^{-1}(\mathcal{C})),\mathcal{I})
\to (\mathcal{D}'_{df^*\gamma,\overline{df^*{\gamma}}}(U_1,f_e^{-1}(\mathcal{C})),\mathcal{I}).$$
%$$f^*:(\mathcal{D}'_{\gamma,\overline{\gamma}}(U_2,\mathcal{C}),\mathcal{I}_{ppp})\to (\mathcal{D}'_{df^*\gamma,\overline{df^*{\gamma}}}(U_1,f_e^{-1}(\mathcal{C})),\mathcal{I}_{ppp}).$$
Especially, $DWF(f^*u)\subset df^*(DWF(u)).$
\end{proposition}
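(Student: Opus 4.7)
The plan is to reduce the statement to a local, Euclidean form and then combine the classical pullback estimate on the $C^\infty$-wave front set with the $H^s$-wave front set variant used in \cite{BrouderDangHelein}, finally transferring continuity to our topologies via the structural description of $\mathcal{I}_{ppp}$ and $\mathcal{I}_{ibi}$ as inductive/projective limits. Since the hypothesis $df^*\Lambda\subset\dot T^*U_1$ already ensures $df^*\gamma\subset\dot T^*U_1$ and hence that $f^*u$ is a well-defined distribution whose $C^\infty$-wave front satisfies $WF(f^*u)\subset df^*WF(u)\subset df^*\Lambda$ (resp.\ $\subset \overline{df^*\gamma}$ in the second variant), only the $DWF$-bound, the support bound, and continuity remain to be proved.

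First I would fix a locally finite countable refinement of the covering $\{U_i\}$ together with smaller open subsets $V\Subset V'\subset U_1$ and $W\Subset W'\subset U_2$ trivialized by charts, chosen so that $f(V')\subset W'$, and cutoff functions $\chi\in\mathcal{D}(W)$ and $\psi\in\mathcal{D}(V)$ with $\psi\equiv 1$ where needed. Using the partition $f_i$ and the charts $\varphi_i$ I would reduce everything to the case $f:V'\subset\R^{d_1}\to W'\subset\R^{d_2}$, $u$ compactly supported in $W$, and prove the quantitative microlocal estimate: for every closed cone $\gamma'$ with $df^*\gamma'\cap 0_{V}=\emptyset$ and every $s\in\R$, there exists a continuous Hörmander-type $H^s$-seminorm on $\mathcal{D}'_{\gamma',\overline{\gamma'}}(W',\mathcal{K})$ dominating a corresponding $H^s$-seminorm of $\psi f^*u$ in a cone disjoint from $df^*\gamma'$. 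The key analytic input is the standard Littlewood--Paley / oscillatory integral representation of pullback: writing $\widehat{\psi f^*u}(\xi)=\int e^{i(f(x)\cdot\eta-x\cdot\xi)}\psi(x)\chi(x)\,dx\,d\hat u(\eta)$, the stationary phase / non-stationary phase split on the cones where $\xi\neq df(x)^T\eta$ gives $O((1+|\xi|)^{-N})$ decay, while in the stationary region one controls the $H^s$-norm by the $H^s$-content of $\hat u$ near $df(x)^T\xi$, exactly as in \cite{BrouderDangHelein} for the usual wave front set but using $L^2(\langle\eta\rangle^{2s}d\eta)$ norms in conic neighborhoods rather than $L^\infty$ ones. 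This yields both $DWF(f^*u)\subset df^*DWF(u)$ and a continuous estimate of the defining $\mathcal{I}_{ppp}$-seminorms of $f^*u$ by those of $u$.

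Second, the support bound $\mathrm{supp}(f^*u)\subset f^{-1}(\mathrm{supp}\, u)$ is immediate; since $f^{-1}(C)$ need not produce a polar enlargeable family, we take as target the bipolar $f_e^{-1}(\mathcal{C})=\{(f^{-1}(C))_{(1-1/n)\epsilon(C)}\}^{oo}$, whose enlargeability is built in and whose polarity is automatic. Continuity in $\mathcal{I}_{ppp}$ then follows from the above estimate together with the universal property of the projective part (control by each $(\gamma',C)$ in the outer projective system) and of the inner inductive part over closed subcones $\gamma'\Subset\gamma$ and closed subsets $C\in\mathcal{C}$: the image of $\gamma'$ under $df^*$ is a closed cone contained in $df^*\gamma$ and similarly for supports, so pullback is a continuous linear map from each step of the source system into a step of the target system. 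For $\mathcal{I}_{ibi}$ the bornological identification $\mathcal{I}_b=\mathcal{I}_{ibi}$ of Theorem \ref{FAGeneral2} together with preservation of bounded sets by the same seminorm estimates gives continuity, since $f^*$ is automatically bounded between the corresponding bornologifications once the $\mathcal{I}_{ppp}$-continuity and the identification of bounded sets is established.

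The main obstacle will be the $H^s$-microlocal estimate itself in the general (possibly non-closed) cone setting: one must verify that the oscillatory-integral argument quantifies continuously in seminorms of $u$ that are \emph{simultaneously} compatible with the open-cone control defining $\gamma$ and the closed-cone control defining $\Lambda$, and that the resulting $H^s$-bound on $\psi f^*u$ takes values in a closed cone arbitrarily close to $df^*\gamma$ from inside $\dot T^*U_1$. Once this uniform continuous-in-$s$ estimate is established in charts, the globalization via the partition of unity and the case of the pair $(\gamma,\overline{\gamma})$ reduce to taking unions/closures of closed subcones, since the closure operation on cones commutes with $df^*$ on the non-zero part under the standing hypothesis $df^*\overline{\gamma}\subset\dot T^*U_1$.
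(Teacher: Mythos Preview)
The paper does not prove this proposition here; it is quoted from \cite[Prop 36]{Dab14a} as a preliminary result, so there is no in-paper argument to compare against. Your strategy---localize to charts, represent $\widehat{\psi f^*u}$ as an oscillatory integral, split into stationary and non-stationary regions relative to $\xi=df(x)^T\eta$, and read off continuity in the defining seminorms---is the natural one and is what the proof in \cite{Dab14a} (extending \cite[Prop 6.1]{BrouderDangHelein}) essentially does.

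Two points deserve tightening. First, the $H^s$-microlocal estimate under pullback does not in general preserve the Sobolev index: one controls an $H^s$-seminorm of $f^*u$ by an $H^{s'}$-seminorm of $u$ with a shift $s'\neq s$ depending on $f$ (think of restriction to a submanifold). This is harmless for $DWF$ precisely because $DWF$ is the union over all $s$, but your sentence ``there exists a continuous H\"ormander-type $H^s$-seminorm \dots\ dominating a corresponding $H^s$-seminorm'' should allow the two indices to differ; otherwise the claim is false as stated. Second, the identification $\mathcal{I}_b=\mathcal{I}_{ibi}$ from Theorem \ref{FAGeneral2} holds on $\mathcal{D}'_{\gamma,\gamma}$ (equal cones), not directly on $\mathcal{D}'_{\gamma,\Lambda}$ for $\gamma\subsetneq\Lambda$. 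The correct route for $\mathcal{I}_{ibi}$ is: prove the map for $\mathcal{I}_b$ on $\mathcal{D}'_{\gamma,\gamma}$, pass to the completion to reach $\Lambda=\overline{\gamma}$, then use that the $\mathcal{I}_{ibi}$-topology for intermediate $\Lambda$ is induced from that completion. Finally, your phrase ``closure commutes with $df^*$'' is not literally true as an equality; what you need and what holds is the inclusion $df^*\overline{\gamma}\subset\overline{df^*\gamma}$ (valid whenever $df^*\overline{\gamma}\subset\dot T^*U_1$), which is exactly what the target of the second displayed map requires.
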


We will also need some identification of topologies obtained by computing equicontinuous sets in respective duals,  we obtained them  in \cite[lemma 28]{Dab14a} for $\lambda,\gamma\subset\Lambda\subset \overline{\gamma}$ any cones:
\begin{equation}\label{Alternativeppp}(\mathcal{D}'_{\gamma,\Lambda}(U,\mathcal{C};E),\mathcal{I}_{H,p pp})\simeq\underleftarrow{\lim}_{ \lambda_1\supset \gamma}\underleftarrow{\lim}_{\lambda_2\supset \lambda_1\cup \Lambda}\left(\mathcal{D}'_{\lambda_1,\overline{\lambda_1}\cap \lambda_2}(U,\mathcal{C};E),\mathcal{I}_{H,pmp}\right),\end{equation}
\begin{equation}\label{Alternativepip}(\mathcal{D}'_{\gamma,\overline{\gamma}}(U,\mathcal{C};E),\mathcal{I}_{H,p ip})\simeq\underleftarrow{\lim}_{ \lambda_1\supset \gamma}\left(\mathcal{D}'_{\lambda_1,\overline{\lambda_1}}(U,\mathcal{C};E),\mathcal{I}_{H,pip}\right),\end{equation}
\begin{equation}\label{Alternativeiii}(\mathcal{D}'_{\lambda,\lambda}(U,(\mathscr{O}_\mathcal{C})^o;E^*),\mathcal{I}_{iii})\simeq\underrightarrow{\lim}_{\Pi\subset \lambda}(\mathcal{D}'_{\Pi,\Pi}(U,(\mathscr{O}_\mathcal{C})^o;E^*),\mathcal{I}_{iii}).\end{equation}

Here $\lambda_i$ indexes projective limits over open cones, $\Pi$ inductive limits over closed cones.

We finally gather in the next lemma the descriptions of bounded, absolutely convex compact and equicontinuous sets in the spaces above which are scattered  in the proofs of \cite{Dab14a}. These are the most relevant bornologies for the tensor products we will consider. Recall that we introduced in \cite[section 3]{Dab14a} the following technical space $$\mathcal{D}'_{\overline{\gamma}}(U,\mathcal{F}:\delta;E)=\{u\in \mathcal{D}'(U;E)\ | \ %WF(u)\subset \Lambda , 
\forall n ,DWF_n(u)\subset \Gamma_n \},$$ for an increasing sequence of closed cones $\Gamma_n\subset \gamma$ gathered in $\delta=(\Gamma_n)$. Modulo, equivalence by inclusion, this is basically a space where we fix the $H^n $ wave front set (which is similar to $DWF_n$, but different) in a family of closed cones included in the cone $\gamma.$

\begin{lemma}\label{BoundedRelated}With the setting of theorem \ref{FAGeneral2}, on 
$(\mathcal{D}'_{\gamma,\overline{\gamma}}(U,\mathcal{C};E),\mathcal{I}_{ibi})$ and 
$(\mathcal{D}'_{\gamma,\overline{\gamma}}(U,\mathcal{C};E),\mathcal{I}_{ppp}),$
bounded sets, precompact sets, sets included in an absolutely convex compact set and equicontinuous sets induced by their duals coincide. We have for them the same description as for bounded sets for
$(\mathcal{D}'_{\gamma,{\gamma}}(U,\mathcal{C};E),\mathcal{I}_{b})$ and 
$(\mathcal{D}'_{\gamma,{\gamma}}(U,\mathcal{C};E),\mathcal{I}_{iii})$, namely sets bounded in the sense of the bornology induced by the regular inductive limit $(\mathcal{D}'_{\gamma,\overline{\gamma}}(U,\mathcal{C};E),\mathcal{I}_{iii})$, i.e. uniformly supported in a $C\in \mathcal{C}$ and included and bounded for some $\delta=(\Gamma_n)\in \Delta(\gamma)$ in $\mathcal{D}'_{\overline{\gamma}}(U,\mathcal{F}:\delta;E).$ 

On $(\mathcal{D}'_{\gamma,{\gamma}}(U,\mathcal{C};E),\mathcal{I}_{b})$ and 
$(\mathcal{D}'_{\gamma,{\gamma}}(U,\mathcal{C};E),\mathcal{I}_{iii})$, equicontinuous sets from their dual coincides with sets included in absolutely convex compact sets which are those associated to the bornological inductive limit associated to $\mathcal{I}_{iii}$, namely sets of distributions uniformly supported on some $C\in \mathcal{C}$ and bounded in $\mathcal{D}'_{\Pi,{\Pi}}(U,\mathcal{F};E)$ for some closed cone $\Pi\subset \gamma.$

Finally, the bornologification maps $b:\mathcal{D}'_{\gamma,\overline{\gamma}}(U,\mathcal{C};E),\mathcal{I}_{ibi}\to \mathcal{I}_{ppp}, \mathcal{D}'_{\gamma,{\gamma}}(U,\mathcal{C};E),\mathcal{I}_{ibi}\to \mathcal{I}_{ppp}$ are proper.
\end{lemma}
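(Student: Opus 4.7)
\noindent\emph{Plan.} The strategy combines Theorem \ref{FAGeneral2} (coincidence of bounded sets for the relevant topologies, regularity of $\mathcal{I}_{iii}$, nuclearity, and the equicontinuity statement ``bounded sets closed in the completion are equicontinuous'') with the explicit regular inductive limit description (\ref{Alternativeiii}) and with standard collapse arguments in nuclear locally convex spaces.

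\noindent\emph{Step 1 (bounded sets on $\mathcal{D}'_{\gamma,\overline{\gamma}}$).} Theorem \ref{FAGeneral2} gives that bounded sets for $\mathcal{I}_{ppp}$ and $\mathcal{I}_{iii}$ coincide, and that $\mathcal{I}_{iii}$ is a regular inductive limit; since the bornologification map $b$ has bounded inverse, bounded sets for $\mathcal{I}_{ibi}$ are the same. Regularity means every bounded set sits and is bounded inside one step, and by the construction recalled in \cite{Dab14a} these steps are precisely the Fr\'echet/Hilbertian spaces $\mathcal{D}'_{\overline{\gamma}}(U,\mathcal{F}:\delta;E)$ intersected with the support condition $C\in\mathcal{C}$, parametrized by $\delta=(\Gamma_n)\in\Delta(\gamma)$. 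This yields the announced concrete description.

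\noindent\emph{Step 2 (collapse of the four classes on $\mathcal{D}'_{\gamma,\overline{\gamma}}$).} Nuclearity, stated in Theorem \ref{FAGeneral2}, forces every bounded set to be precompact. Each step in Step 1 is a nuclear Fr\'echet space, hence its bounded sets are relatively compact with compact absolutely convex closed hull; this compactness transfers to the ambient $\mathcal{I}_{ppp}$ and $\mathcal{I}_{ibi}$ via the continuous inclusion of the step, which identifies the bornology ``bounded $+$ supported in $C$'' with the set of sets sitting in an absolutely convex compact. For equicontinuity, the statement in Theorem \ref{FAGeneral2} that bounded-closed-in-completion sets are equicontinuous from the stated dualities, together with the precompactness already obtained, shows every bounded set lies in an equicontinuous one (pass to the closed absolutely convex hull in the completion). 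The converse (equicontinuous $\Rightarrow$ bounded) is standard.

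\noindent\emph{Step 3 (the $\mathcal{D}'_{\gamma,\gamma}$ case and properness of $b$).} By Theorem \ref{FAGeneral2}, $\mathcal{I}_b=\mathcal{I}_{ibi}$ on $\mathcal{D}'_{\gamma,\gamma}(U,\mathcal{C};E)$, and this space is the strong/Mackey dual of a complete nuclear Montel space from the twin diagram. Equicontinuous subsets of its dual therefore coincide, by Mackey--Arens and the Montel property of the predual, with absolutely convex strongly compact subsets of $\mathcal{D}'_{\gamma,\gamma}$. On the other side, (\ref{Alternativeiii}) expresses $\mathcal{I}_{iii}$ as a regular inductive limit over closed subcones $\Pi\subset\gamma$ with steps $\mathcal{D}'_{\Pi,\Pi}(U,\mathcal{F};E)$ intersected with the support condition; an equicontinuous set must lie in finitely many such steps and hence, by a standard argument for bornological inductive limits of Fr\'echet spaces, in a single one, giving the stated description. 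Finally, properness of $b$ amounts to coincidence of (absolutely convex) compact sets in the two topologies, which is an immediate corollary of Steps 1--3 since $b$ is the identity as a map of underlying vector spaces.

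\noindent\emph{Expected main obstacle.} The genuinely delicate point is Step 2 for the projective topology $\mathcal{I}_{ppp}$: one cannot read off equicontinuous subsets of the dual directly from a projective description, and the argument routes through the regular inductive limit structure and crucially invokes the equicontinuity statement already built into Theorem \ref{FAGeneral2}. The same delicacy resurfaces in Step 3, but there it is compensated by Mackey duality with a Montel space, which is what makes absolutely convex compact sets and equicontinuous sets automatically coincide.
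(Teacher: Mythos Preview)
Your plan is essentially the paper's own: invoke Theorem~\ref{FAGeneral2} for coincidence and regularity of bounded sets, use nuclearity to collapse bounded/precompact/absolutely-convex-compact, and pull the equicontinuity statement already built into Theorem~\ref{FAGeneral2}. Steps~1--2 and the properness argument match the paper closely (the paper phrases properness as ``inverse image of compact is closed by continuity of $b$, bounded since bounded sets coincide, hence compact by completeness and nuclearity of $\mathcal{I}_{ibi}$'', which is exactly your conclusion unpacked).

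The one genuine divergence is in Step~3, where you try to extract the concrete description ``bounded in $\mathcal{D}'_{\Pi,\Pi}$ for some closed $\Pi\subset\gamma$'' from the inductive limit \eqref{Alternativeiii} via a ``standard argument for bornological inductive limits of Fr\'echet spaces''. This is not quite justified as stated: the steps $(\mathcal{D}'_{\Pi,\Pi},\mathcal{I}_{iii})$ are not Fr\'echet, the indexing family is uncountable, and \eqref{Alternativeiii} is not asserted to be regular in this paper. The paper sidesteps this by citing \cite[lemma~28]{Dab14a} directly for the identification of equicontinuous sets, then gets the coincidence with absolutely convex compact sets from \cite[p.~519]{Treves} (closed equicontinuous sets in the dual of a nuclear space are compact) together with the converse proved at the end of \cite[Prop.~34]{Dab14a}. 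Your Mackey/Montel route to ``equicontinuous $=$ absolutely convex compact'' is a legitimate alternative to the last two citations (the predual with $\mathcal{I}_{ibi}$ is ultrabornological hence barrelled, and nuclearity gives semi-Montel), but the concrete $\Pi$-description really does require the external input from \cite{Dab14a} rather than a generic inductive-limit absorption argument.
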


Especially, all the spaces above are defined from those bounded in a bornological inductive limit associated to $\mathcal{I}_{iii}$ on some space.

\begin{proof}
From the beginning of the proof of \cite[Prop 34]{Dab14a} a set bounded in  $(\mathcal{D}'_{\gamma,\overline{\gamma}}(U,\mathcal{C};E),\mathcal{I}_{ppp})$ is bounded in the regular inductive limit $\mathcal{I}_{iii}$ which corresponds to the  description above. For general reasons about bounded sets in completion and bornologification, this gives the description of bounded sets in any of the spaces above. Since from \cite[Prop 34]{Dab14a}, $(\mathcal{D}'_{\gamma,\overline{\gamma}}(U,\mathcal{C};E),\mathcal{I}_{ibi})$ is nuclear, bounded sets are exactly precompact=``relatively compact" sets. Thus the inverse image of a compact set for $\mathcal{I}_{ppp}$ is closed since the bornologification map is continuous, bounded since both topologies have same bounded sets thus compact by the above nuclearity. This gives that the bornologification map is proper. For the second proper map we reason by diagram chasing, a compact set is also compact in the completion thus from the complete case in the bornologification, and since the topology of the bornologification without completion agrees with the one of the completeion of the bornologification (for our spaces), it is compact there.

In $\mathcal{D}'_{\gamma,\overline{\gamma}}(U,\mathcal{C};E)$, for  $\mathcal{I}_{ibi}$ equicontinuous sets and strongly bounded sets coincides as in any dual of a bornological space (this is explained from Hogbe-Nlend's results  in \cite[lemma 27]{Dab14a}), and moreover, $\mathcal{I}_{ibi}$  is known to be the strong dual topology by \cite[Prop 34]{Dab14a}. Finally, on a dual, the equicontinuous sets induced by a space and its completion are known to be the same thus the equicontinuous sets are the same for $\mathcal{I}_{ppp}.$

For $(\mathcal{D}'_{\gamma,{\gamma}}(U,\mathcal{C};E),\mathcal{I}_{iii})$ equicontinuous sets are identified in \cite[lemma 28]{Dab14a}, and as above they are the same for $\mathcal{I}_{ibi}$ whose dual is the completion of the dual. In $(\mathcal{D}'_{\gamma,{\gamma}}(U,\mathcal{C};E),\mathcal{I}_{b})$ and 
$(\mathcal{D}'_{\gamma,{\gamma}}(U,\mathcal{C};E),\mathcal{I}_{iii})$ closed equicontinuous sets are compact as in any dual of a nuclear space \cite[p 519]{Treves}, and conversely absolutely convex compact sets (which coincides for both topologies from the proper map property shown above) are shown to be  equicontinuous at the end of the proof of \cite[Prop 34]{Dab14a}.
This completes our proof.
\end{proof}

\subsection{Reminder on tensor products of locally convex spaces.}

We will be mostly interested in five tensor products, projective, injective, $\epsilon$, $\gamma$ and bornological (with variants).

If $E,F$ are locally convex spaces, $E\otimes_\pi F $ (resp. $E\otimes_\beta F $, $E\otimes_{\beta e} F $ when $E=E_1',F=F_1'$ are duals, $E\otimes_{\gamma} F $) is the algebraic tensor product equipped with the finest locally convex topology on it making $E\times F\to E\otimes_\pi F $ continuous (resp $E\times F\to E\otimes_\beta F $ hypocontinuous, resp. $E\times F\to E\otimes_{\beta e} F $ $\epsilon$-hypocontinuous, i.e. hypocontinuous for equicontinuous parts of the implicitly used dualities $E=E_1',F=F_1'$, resp. $E\times F\to E\otimes_{\gamma} F $ $\gamma$-hypocontinuous, i.e. hypocontinuous for absolutely convex compact parts). We write as usual $E\hat{\otimes}_\pi F , E\hat{\otimes}_\beta F, E\hat{\otimes}_{\beta e} F, E\hat{\otimes}_\gamma F $ the corresponding completions. We will also sometimes use the terminology $\sigma_1-\sigma_2$ hypocontinuous (especially with $\sigma_i$ one of the letters above) to say hypocontinuous for bounded sets in $\sigma_i$ on the corresponding side \cite{Schwartz3}.

One should note that at various places in the (bornology, convenient vector space) literature, see e.g. \cite{KrieglMichor}, there is another (different) definition of the bornological tensor product as the finest l.c.s topology making $E\times F\to E\otimes_\beta' F $ bounded (on product of bounded sets). From the characterization of neighborhoods of zero (see e.g. \cite[Prop 11.3.4 p 390]{PerrezCarreras}), it is easy to see that $E\otimes_\beta' F=E^{born}\otimes_\beta F^{born}$ is merely the bornological/hypocontinuous product of bornologifications.

If $E,F$ are barrelled (resp. bornological) so are $E\otimes_\beta F, E\otimes_{\beta e} F,E\otimes_{\gamma} F $ \cite[Prop 11.3.6 p 390]{PerrezCarreras} \cite[Th 5]{DefantGovaerts}  and  $E\hat{\otimes}_\beta F,E\hat{\otimes}_\gamma F $ are barrelled in the barelled case \cite[Prop 4.2.1 p 103]{PerrezCarreras}.

The $\epsilon$-product has been extensively used and studied by Laurent Schwartz \cite[section 1]{Schwarz}.
By definition $E\epsilon F= (E'_c\otimes_{\beta e} F'_c)'$ is the set of $\epsilon$-hypocontinuous bilinear forms on the duals $E'_c$ equipped with the Arens topology %$\gamma(E',E),$
 of uniform convergence on absolutely convex compact subsets of $E.$
 
 When the bounded sets on $E'_c,F'_c$ coincide with the equicontinuous parts this is of course the same as $E\epsilon F= (E'_c\otimes_{\beta} F'_c)'$ and when $E,F$ have their $\gamma$ topology (i.e. $(E'_c)'_c$ so that as explained in \cite{Schwarz} equicontinuous sets of the dual coincide with absolutely convex compact sets), this is the same as $E\epsilon F= (E'_c\otimes_{\gamma} F'_c)'.$ Note that all our spaces above in theorem \ref{FAGeneral2} have their $\gamma$ topology since they have all their Mackey topology. % (Otherwise, we have to take the hypocontinuous product linearizing hypocontinuous maps with respects to equicontinuous parts we will call $E'_\gamma\otimes_{\beta e} F'_\gamma$)
  The topology on $E\epsilon F$ is the topology of uniform convergence on products of equicontinuous sets in $E', F'$ (as the initial topology of $E=(E'_c)'$, by Mackey Thm is the topology of uniform convergence on equicontinuous parts in $E'$).
If $E,F$ are quasi-complete spaces (resp. complete spaces , resp. complete spaces  with the approximation property) 
so is $E\epsilon F$ (see \cite[Prop 3 p29, Corol 1 p 47]{Schwarz}).% and in the last case $E\epsilon F\simeq E\hat{\o}_{\epsilon}F.$
 
$E\otimes_\epsilon F$ is the topology on $E\otimes F$ induced by $E\epsilon F$ (see \cite[Prop 11 p46]{Schwarz} and $E\hat{\otimes}_\epsilon F\simeq E\epsilon F$ if $E,F$ are complete and $E$ has the approximation property. All the above tensor products are commutative and the ${\otimes}_\pi,{\otimes}_\epsilon, {\otimes}_\beta'$ products are associative.

We end this section by a technical result in our context that will be used at the end in the application section 7. This gives a glimpse of the general study of tensor products in the third paper of this series.

\begin{proposition}\label{QLBEpsilon}Let $\gamma 
\subset \dot{T}^*U$ a $\mathbf{\Delta_2^0}$-cone and $\mathcal{C}$ an enlargeable polar family of closed sets of $U$ with either $\mathcal{C}$ or $(\mathscr{O}_\mathcal{C})^o$ countably generated, and similarly $\Gamma\subset \dot{T}^*V$ a closed cone on a similar open $V$. Let $E\to U,F\to V$ vector bundle maps and $\mathcal{I}\in \{\mathcal{I}_{ppp},\mathcal{I}_{ibi}\}$%Let $\gamma_i\subset \dot{T}^*U_i$%U_i\times\R^n-\{0\}$ be closed cones, $\Lambda_i\subset U_i\times\R^n-\{0\}$ be open cones.
 then $(\mathcal{D}'_{\gamma,\overline{\gamma}}(U,\mathcal{C};E),\mathcal{I})\epsilon (\mathcal{D}'_{\Gamma,\Gamma}(U,\mathcal{F};F),\mathcal{I}_{ppp})$ is a quasi-LB space thus a strictly webbed space.
\end{proposition}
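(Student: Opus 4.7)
The strategy is to deduce the quasi-LB structure of the $\epsilon$-product from the quasi-LB structure of each factor established in Theorem~\ref{FAGeneral2}, combined with nuclearity.

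First I would verify that both tensor factors are themselves quasi-LB. For $E_1:=(\mathcal{D}'_{\gamma,\overline{\gamma}}(U,\mathcal{C};E),\mathcal{I})$ this is precisely the last assertion of Theorem~\ref{FAGeneral2}, since $\gamma$ is $\mathbf{\Delta_2^0}$ and $\mathcal{C}$ or $(\mathscr{O}_\mathcal{C})^o$ is countably generated by hypothesis. For $F_1:=(\mathcal{D}'_{\Gamma,\Gamma}(V,\mathcal{F};F),\mathcal{I}_{ppp})$ one observes that a closed cone $\Gamma$ is automatically both $F_\sigma$ and $G_\delta$ in a metric setting, hence $\mathbf{\Delta_2^0}$, while $(\mathscr{O}_\mathcal{F})^o=\mathcal{K}$ is countably generated by any compact exhaustion of $V$. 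Thus Theorem~\ref{FAGeneral2} applies here as well and $F_1$ is quasi-LB. By the same theorem both factors are nuclear and complete, and as nuclear spaces they automatically possess the approximation property.

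Next, because $E_1$ and $F_1$ are complete with the approximation property, the Schwartz $\epsilon$-product coincides with the completed $\epsilon$-tensor product, and nuclearity upgrades this further to the completed projective tensor product:
\[
E_1\,\epsilon\, F_1\;\simeq\;E_1\hat{\otimes}_\epsilon F_1\;=\;E_1\hat{\otimes}_\pi F_1.
\]
This reduces the statement to showing that $E_1\hat{\otimes}_\pi F_1$ is quasi-LB. Picking quasi-LB representations $(B_n)_n$ of $E_1$ and $(C_m)_m$ of $F_1$, the Minkowski-gauge spaces $E_{1,B_n},F_{1,C_m}$ are Banach and map continuously into $E_1,F_1$. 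By functoriality of $\hat{\otimes}_\pi$, the Banach spaces $E_{1,B_n}\hat{\otimes}_\pi F_{1,C_m}$ map continuously into $E_1\hat{\otimes}_\pi F_1$, providing a countable family of Banach pieces indexed by pairs $(n,m)$.

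The main obstacle is to establish that every bounded subset $H\subset E_1\hat{\otimes}_\pi F_1$ is absorbed by the image of the closed unit ball of some $E_{1,B_n}\hat{\otimes}_\pi F_{1,C_m}$. Here one combines Grothendieck's representation of bounded sets in projective tensor product completions with the uniform absolutely convergent expansions $\sum_k\lambda_k\,x_k\otimes y_k$ with $(\lambda_k)\in\ell^1$ and $\{x_k\},\{y_k\}$ bounded in $E_1,F_1$, which are available for nuclear factors. The quasi-LB hypothesis then places the bounded families $\{x_k\},\{y_k\}$ inside fixed Banach disks $B_n,C_m$, so the series converges in $E_{1,B_n}\hat{\otimes}_\pi F_{1,C_m}$ with norm controlled by $\sum|\lambda_k|$, yielding the required absorption. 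Delicacy lies in making this uniform in the bounded set $H$ and in keeping track of the interplay between nuclearity and the quasi-LB bornology, but the underlying scheme is the standard one. Strict webbedness is then automatic, every quasi-LB space being strictly webbed.
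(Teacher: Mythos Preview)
Your route diverges from the paper's, and the divergence is at precisely the point where your sketch becomes vague. The paper does \emph{not} try to build a quasi-LB resolution of the $\epsilon$-product directly from resolutions of the factors. Instead it first shows the $\epsilon$-product is \emph{strictly webbed} and only then invokes Valdivia's converse (complete strictly webbed $\Rightarrow$ quasi-LB). Concretely, the paper observes that $(\mathcal{D}'_{\Gamma,\Gamma}(V,\mathcal{F};F),\mathcal{I}_{ppp})$ is a (PLS)-space, hence a countable projective limit $\underleftarrow{\lim}_n X_n$ of DFS spaces; by completeness and nuclearity the $\epsilon$-product commutes with this limit, giving $\underleftarrow{\lim}_n\, E_1\hat{\otimes}_\epsilon X_n$. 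Each term is webbed by De~Wilde's stability result for tensor products, countable projective limits of webbed spaces are webbed, and the conclusion follows. This argument never touches the internal structure of bounded sets in the tensor product.

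Your direct construction has two genuine gaps. First, quasi-LB resolutions are indexed by $\mathbb{N}^{\mathbb{N}}$, not $\mathbb{N}$; writing $(B_n)_n$ and then speaking of ``a countable family of Banach pieces indexed by pairs $(n,m)$'' suggests you are thinking of an (LB)-type covering, which these spaces do not have. Second, and more seriously, the absorption step is not justified: Grothendieck's series representation $z=\sum_k\lambda_k\,x_k\otimes y_k$ with $\{x_k\},\{y_k\}$ bounded is a theorem for \emph{metrizable} factors, and $E_1,F_1$ here are typically non-metrizable; even granting some substitute (e.g.\ Schwartz's $\gamma$--$\pi$ decomposability for nuclear duals), the quasi-LB axioms only say each \emph{element} lies in some $A_\alpha$, not that each bounded \emph{set} does, so placing the whole sequence $\{x_k\}$ into a single Banach disk $B_\alpha$ needs a separate argument. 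You flag this ``delicacy'' but give no mechanism to resolve it. The paper's detour through webbed spaces is designed exactly to sidestep this bounded-set bookkeeping.
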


\begin{proof}
%$(\mathcal{D}'_{\gamma,\Lambda}(U,\mathcal{C};E),\mathcal{I})= (\mathcal{D}'_{\gamma,\overline{\gamma}}(U,\mathcal{C};E),\mathcal{I})\cap (\mathcal{D}'_{\Lambda,\Lambda}(U,\mathcal{C};E),\mathcal{I})$ is an intersection thus a closed subspace in the product, thus by \cite[Prop 2,4]{ValdiviaQLB} and our results in 
From Theorem \ref{FAGeneral2}, $(\mathcal{D}'_{\gamma,\overline{\gamma}}(U,\mathcal{C};E),\mathcal{I})$ is a complete quasi-LB space thus a strictly webbed space \cite[Th 1]{ValdiviaquasiLB}. As in \cite[Corol 13]{BrouderDabrowski} (using the (PLN) property showed in \cite[Prop 8]{Dab14a}\begin{comment}proposition \ref{FApropertiesSupport}\end{comment}
), it is easy to show that   $(\mathcal{D}'_{\Gamma,\Gamma}(U,\mathcal{F};F),\mathcal{I}_{ppp})$ is a (PLS)-space thus, by definition, can be written $\underleftarrow{\lim}_{n\in \N}X_n$ with $X_n$ the dual of a Fr\'echet-Schwartz space. Now by completeness and nuclearity,  $(\mathcal{D}'_{\gamma,\Lambda}(U,\mathcal{C};E),\mathcal{I})\epsilon (\mathcal{D}'_{\Gamma,\Gamma}(U,\mathcal{F};F),\mathcal{I}_{ppp})\simeq \underleftarrow{\lim}_{n\in \N}(\mathcal{D}'_{\gamma,\Lambda}(U,\mathcal{C};E),\mathcal{I})\hat{\o}_\epsilon X_n$. Thus this is a countable inductive limit of webbed spaces (using \cite[Prop V.2.4 p 91]{DeWilde} for the tensor product and e.g. \cite[\S 45.4.(7) p 63]{Kothe2} for the countable projective limit). From \cite[Th 2]{ValdiviaquasiLB}, since our space is complete, it is also quasi-(LB).
\end{proof}

\subsection{Consequences for approximation properties and vector valued distributions.}
We gather here results that are useful in themselves in functional analysis, but essential when dealing with vector valued distributions \cite{Schwarz}. The study of vector valued distributions based on H\"ormander spaces of distributions is motivated by their use in quantum field theory, notably in \cite{Rejzner,RejznerYangMills}. We will also use them in part 2 below.

\begin{proposition}\label{ApproximationProp}
Let $\gamma\subset \Lambda\subset \overline{\gamma}$ cones  and  $\mathcal{C}=\mathcal{C}^{oo}$ an enlargeable polar family of closed sets in $U$ and assume
either $\mathcal{C}$ or $(\mathscr{O}_\mathcal{C})^o$  countably generated. Then $(\mathcal{D}'_{\gamma,\overline{\gamma}}(U,\mathcal{C};E),\mathcal{I}_{ibi})$, 
$(\mathcal{D}'_{\gamma,\overline{\gamma}}(U,\mathcal{C};E),\mathcal{I}_{ppp})$,$(\mathcal{D}'_{\gamma,{\gamma}}(U,\mathcal{C};E),\mathcal{I}_{b})$ and 
$(\mathcal{D}'_{\gamma,{\gamma}}(U,\mathcal{C};E),\mathcal{I}_{iii})$ all have the sequential approximation property (in the strongest variant with uniform convergence on compact subsets of the completion \cite[p399]{Jarchow}).
\end{proposition}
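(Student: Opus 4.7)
The plan is to produce a single sequence of continuous finite-rank operators $T_n$ that approximates the identity uniformly on compact subsets of the completion, simultaneously in all four spaces. By Lemma \ref{BoundedRelated}, precompact sets, absolutely convex compact sets and equicontinuous-from-the-dual sets coincide in the four spaces; they are exactly the subsets uniformly supported in some $C\in\mathcal{C}$ and bounded in some $\mathcal{D}'_{\overline{\gamma}}(U,\mathcal{F}:\delta;E)$ with $\delta=(\Gamma_n)\in\Delta(\gamma)$. Since compact sets of a completion are closures in the completion of precompact sets, they admit the same description. It thus suffices to exhibit a sequence of continuous finite-rank operators converging to the identity uniformly on such ``common bornological'' sets, and check that this single sequence is continuous for each of the four topologies.

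Using the fixed partition of unity $(f_i)_{i\in I}$ and charts $\varphi_i$, fix a finite exhaustion $I_n\nearrow I$. In $\R^d$ fix a smooth mollifier $\rho$ of integral $1$, cutoff functions $g_{n,i}\in\mathcal{D}(\varphi_i(U_i))$ equal to $1$ on increasing compact sets, and linear projections $P_{n,i}$ of finite rank onto subspaces of $\mathcal{D}(\R^d)$ (spanned e.g.\ by truncated Hermite functions or trigonometric polynomials cut off by a plateau function), chosen so that $P_{n,i}\to\mathrm{id}$ uniformly on compact subsets of $\mathcal{D}(\R^d)$ (a standard consequence of the sequential AP of the nuclear Fr\'echet spaces $\mathcal{D}(K)$). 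Define
\[
T_n u \;=\; \sum_{i\in I_n}\Bigl(P_{n,i}\bigl(g_{n,i}\cdot\bigl(((f_iu)\circ\varphi_i^{-1})\ast\rho_{1/n}\bigr)\bigr)\Bigr)\circ\varphi_i.
\]
Each $T_n$ is continuous on the four spaces and has finite rank: multiplication by $f_i$ tightens the support and preserves the $(DWF,WF)$ control, the chart pullback by $\varphi_i$ is continuous by Proposition \ref{pullback}, convolution by a compactly supported mollifier is continuous from $\mathcal{E}'$ into $\mathcal{D}(\R^d)$ (and erases the wave front set), and $P_{n,i}$ is a continuous finite-rank operator.

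It remains to show $T_n u \to u$ uniformly on a bounded set $B$ of $\mathcal{D}'_{\overline{\gamma}}(U,\mathcal{F}:\delta;E)$ supported in $C$, in each of the topologies $\mathcal{I}_{ibi},\mathcal{I}_{ppp},\mathcal{I}_b,\mathcal{I}_{iii}$. The error $u-\sum_{i\in I_n}f_iu$ vanishes once $I_n$ covers $C$ (local finiteness of $(f_i)$ and compactness of $C$). Mollification cannot enlarge the wave front set, and a localized $H^k$ argument gives $((f_iu)\circ\varphi_i^{-1})\ast\rho_{1/n}\to (f_iu)\circ\varphi_i^{-1}$ in each $H^k$-seminorm uniformly in $u\in B$, hence in the topology of the pushed-forward $\mathcal{D}'_{\overline{\gamma'}}(\varphi_i(U_i),\mathcal{F}:\delta';\R^e)$. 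The set $\{g_{n,i}\cdot(v\ast\rho_{1/n}) : v\in B,\ n\ge 1\}$ is then relatively compact in $\mathcal{D}(\R^d)$ by Ascoli, so $P_{n,i}$ contributes an error going uniformly to zero. Pulling back and summing gives the approximation. Since Lemma \ref{BoundedRelated} matches bounded sets in the four topologies with those of $\mathcal{D}'_{\overline{\gamma}}(U,\mathcal{F}:\delta;E)$ (for some $\delta$) and the four topologies are controlled on such bounded sets by the $DWF_k$-seminorms plus support, the single sequence $T_n$ settles all four cases.

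The main obstacle is the \emph{uniform} mollification estimate inside $\mathcal{D}'_{\overline{\gamma}}(U,\mathcal{F}:\delta;E)$: one must upgrade the classical pointwise ``$u\ast\rho_{1/n}\to u$ in every $H^k$'' into a statement uniform on the bounded set $B$, simultaneously in the Sobolev level $k$ and in directions outside (a conic neighborhood of) $\Gamma_k$. This is mostly bookkeeping along the definition of the $DWF_n$-seminorms, combined with the standard fact that $\chi(\xi)\widehat{\rho}(\xi/n)\to\chi(\xi)$ uniformly on conic sets avoiding $\Gamma_k$ with polynomial weight, but it is the only genuinely analytic step and the place where the hypothesis on $\delta$ (cones of a countably generated family) and the nuclearity of the building blocks of Theorem \ref{FAGeneral2} play a role.
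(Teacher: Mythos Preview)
Your approach is genuinely different from the paper's and considerably more laborious. The paper does not construct explicit finite-rank approximants; instead it invokes \cite[lemma 22]{Dab14a}, which already provides a sequence of smoothing operators $L_n:G\to\mathcal{D}(U;E)$ with $L_n(B)$ bounded for every bounded $B$ and $L_n u\to u$ pointwise. Then, because $G=(\mathcal{D}'_{\gamma,\overline{\gamma}}(U,\mathcal{C};E),\mathcal{I}_{ibi})$ is barrelled, the Banach--Steinhaus theorem (\cite[p.~141]{Kothe2}) upgrades this to uniform convergence on compact sets for free. The approximation property then follows from Schwartz' criterion \cite[Prop.~2 p.~7]{Schwarz}. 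The remaining three topologies are handled by composing with the bornologification map (for $\mathcal{I}_{ppp}$) and by noting that the last two spaces have the first two as completions. The ``main obstacle'' you identify---the uniform mollification estimate---is thus entirely bypassed: barrelledness converts pointwise convergence plus boundedness into uniform convergence on compacts without any microlocal bookkeeping.

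Your argument also has a structural gap. You write that ``the error $u-\sum_{i\in I_n}f_iu$ vanishes once $I_n$ covers $C$ (local finiteness of $(f_i)$ and compactness of $C$)'', but $C\in\mathcal{C}$ need not be compact: when only $(\mathscr{O}_\mathcal{C})^o$ is countably generated (e.g.\ $\mathcal{C}=\mathcal{F}$), elements of $\mathcal{C}$ are arbitrary closed sets. The paper treats this case separately, by applying the preceding argument to the dual space and then using that the Arens dual recovers the original space, so that Schwartz' approximation property transfers back. Your single construction does not address this bifurcation. While the localization error can still be controlled seminorm by seminorm in $\mathcal{I}_{ppp}$ (each seminorm sees only a compact region), your phrasing is incorrect as written, and you give no argument at all for why the resulting convergence is uniform for the inductive-limit topologies $\mathcal{I}_{ibi},\mathcal{I}_b,\mathcal{I}_{iii}$ beyond a vague claim that ``the four topologies are controlled on bounded sets by the $DWF_k$-seminorms''. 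The paper avoids this entirely by the proper-map property of the bornologification (Lemma~\ref{BoundedRelated}), which identifies compact sets across the topologies.
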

\begin{proof}
We use \cite[Prop 2 p 7]{Schwarz}. Of course from the same proof as in \cite[Prop 1 p 6]{Schwarz}, $F=\mathcal{D}(U;E)$ has the wanted sequential approximation property (since it is complete, the uniform convergence on compact sets of the completion coincide with Schwartz' notion of convergence on absolutely convex compact sets and Grothendieck's notion of convergence on precompact sets). If $\mathcal{C}$ is countably generated, \cite[lemma 22]{Dab14a} \begin{comment}lemma \ref{normality}\end{comment}
 gives us a subsequence $L_{k_n}$ of $L_n$ such that for any $B$ bounded in
$G=(\mathcal{D}'_{\gamma,\overline{\gamma}}(U,\mathcal{C};E),\mathcal{I}_{ibi}=\mathcal{I}_{iii}^{born})$, $L_{k_n}(B)$ is bounded in the same sense, and $L_{k_n}(u)\to u$. Thus since $(\mathcal{D}'_{\gamma,\overline{\gamma}}(U,\mathcal{C};E),\mathcal{I}_{ibi})$ is barrelled,  from \cite[p 141]{Kothe2} $L_{k_n}\to Id$ uniformly on compact sets of $G.$ This gives the expected sequential approximation of $Id$ by $L(G,F)$ in $L(G,G)$ in this case. Composing with the bornologification map at the target and knowing the stronger continuity of each map from \cite[lemma 22]{Dab14a}, and since the compact sets on which we want to converge uniformly are the same by the proper map property in lemma \ref{BoundedRelated}, this gives the same result for  $G=(\mathcal{D}'_{\gamma,\overline{\gamma}}(U,\mathcal{C};E),\mathcal{I}_{ppp}).$ Finally these two spaces are the completion of the other two, thus since the uniform convergence sets agree as the topology, we also conclude those cases. For the case where $(\mathscr{O}_\mathcal{C})^o$ is countably generated, we have the corresponding result for the dual of $(\mathcal{D}'_{\gamma,\overline{\gamma}}(U,\mathcal{C};E),\mathcal{I}_{ibi})$ whose Arens dual is still this space, thus the sequential approximation property in Schwartz' sense, which coincides with the stated one in this complete case. The consequence without bornologification and without completion then follows similarly.
\end{proof}

We gather here other consequences of \cite[lemma 22]{Dab14a}\begin{comment}lemma \ref{normality}\end{comment}
.
\begin{proposition}\label{quasiCompletion}
Let $\gamma$ a cone  and  $\mathcal{C}=\mathcal{C}^{oo}$ an enlargeable polar family of closed sets in $U$. For any bounded set $B$ in $(\mathcal{D}'_{\gamma,\overline{\gamma}}(U,\mathcal{C};E),\mathcal{I}_{ibi})$, 
$(\mathcal{D}'_{\gamma,\overline{\gamma}}(U,\mathcal{C};E),\mathcal{I}_{ppp})$,$(\mathcal{D}'_{\gamma,{\gamma}}(U,\mathcal{C};E),\mathcal{I}_{b})$ and
$(\mathcal{D}'_{\gamma,{\gamma}}(U,\mathcal{C};E),\mathcal{I}_{iii})$, there is a set $B'\subset \mathcal{D}(U;E)$ and bounded in the same space with $B\subset \overline{B'}$ (even in the Mackey-sequential completion). 
%Moreover, if $\gamma$ open, we don't need to assume $\mathcal{C}$ or $(\mathscr{O}_\mathcal{C})^o$  countably generated for $(\mathcal{D}'_{\gamma,\overline{\gamma}}(U,\mathcal{C};E),\mathcal{I}_{ppp})$,
%$(\mathcal{D}'_{\gamma,{\gamma}}(U,\mathcal{C};E),\mathcal{I}_{iii}).$

As a consequence, $(\mathcal{D}'_{\gamma,\overline{\gamma}}(U,\mathcal{C};E),\mathcal{I}_{ibi})$ is the quasi-completion of $(\mathcal{D}'_{\gamma,{\gamma}}(U,\mathcal{C};E),\mathcal{I}_{b})$ and $(\mathcal{D}'_{\gamma,\overline{\gamma}}(U,\mathcal{C};E),\mathcal{I}_{ppp})$ the quasi-completion of $(\mathcal{D}'_{\gamma,{\gamma}}(U,\mathcal{C};E),\mathcal{I}_{iii}).$
\end{proposition}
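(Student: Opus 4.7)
The plan is to leverage the sequence of finite-rank approximation operators provided by \cite[lemma 22]{Dab14a}, the same objects used in the proof of Proposition \ref{ApproximationProp}. By Lemma \ref{BoundedRelated} the bounded sets of all four spaces coincide, and since $\mathcal{D}(U;E)$ embeds in each of them, it suffices to exhibit for each bounded $B$ a bounded $B' \subset \mathcal{D}(U;E)$ with $B \subset \overline{B'}$ in the Mackey-sequential sense; the quasi-completion statement will then follow formally from the standard description of the quasi-completion as the union of closures of bounded subsets of the original space.

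The first step is a countability reduction. Again by Lemma \ref{BoundedRelated}, any bounded $B$ is uniformly supported in a single $C \in \mathcal{C}$ and bounded in $\mathcal{D}'_{\overline{\gamma}}(U,\mathcal{F}:\delta;E)$ for a single $\delta = (\Gamma_n)$, so $B$ sits as a bounded set inside the analogous space built from a countably generated enlargeable polar subfamily $\mathcal{C}_0 \subset \mathcal{C}$ containing $C$ (for instance obtained by biconjugating $\{C\}$). Under this countable generation hypothesis \cite[lemma 22]{Dab14a} produces a subsequence $L_{k_n}$ of finite-rank operators valued in $\mathcal{D}(U;E)$ such that $\bigcup_n L_{k_n}(B)$ is bounded in the same bornology, and $L_{k_n}(u) \to u$ for every $u$, indeed Mackey-sequentially: the differences $L_{k_n}(u)-u$ live in a fixed bounded absolutely convex Banach disk coming from the $\mathcal{D}'_{\overline{\gamma}}(U,\mathcal{F}:\delta;E)$ description, and the subsequence $k_n$ is chosen so that they are absorbed by a null scalar sequence. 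Setting $B' = \bigcup_n L_{k_n}(B) \subset \mathcal{D}(U;E)$ yields the desired set, bounded in any of the four spaces (bounded sets coincide) and with $B$ in its Mackey-sequential closure.

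For the quasi-completion consequence, Theorem \ref{FAGeneral2} identifies $(\mathcal{D}'_{\gamma,\overline{\gamma}}(U,\mathcal{C};E),\mathcal{I}_{ibi})$ with the completion of $(\mathcal{D}'_{\gamma,\gamma}(U,\mathcal{C};E),\mathcal{I}_b)$, which is therefore (quasi-)complete. The quasi-completion of a Hausdorff locally convex space inside its completion being the union of closures of its bounded subsets, the first part tells us that every bounded $B$ of the completion is contained in the closure of some bounded $B' \subset \mathcal{D}(U;E) \subset \mathcal{D}'_{\gamma,\gamma}(U,\mathcal{C};E)$, so the completion equals the quasi-completion. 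The same argument, combined with the identification from Theorem \ref{FAGeneral2} of $(\mathcal{D}'_{\gamma,\overline{\gamma}}(U,\mathcal{C};E),\mathcal{I}_{ppp})$ as the completion of $(\mathcal{D}'_{\gamma,\gamma}(U,\mathcal{C};E),\mathcal{I}_{iii})$, disposes of the second pair. The main delicate point will be the Mackey-sequential refinement of $L_{k_n}(u)\to u$, since plain topological convergence with uniformly bounded differences does not automatically give Mackey convergence; this is however exactly what the choice of subsequence in Lemma 22 encodes, by producing the convergence inside the Banach disk associated to the fixed $\delta$.
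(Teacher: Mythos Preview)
Your approach is essentially the paper's: both proofs invoke \cite[lemma 22]{Dab14a} to obtain smoothing operators $L_k$ valued in $\mathcal{D}(U;E)$, set $B'$ equal to the union of their images on $B$, and read off the quasi-completion statement from the identification of the completions in Theorem \ref{FAGeneral2}. The paper's proof is extremely terse (two sentences), so your more explicit treatment of the Mackey-sequential point and of the quasi-completion consequence is welcome.

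One remark on your countability reduction: the paper does \emph{not} perform it. It applies lemma 22 directly, writing $B'=L_{k(B)}(B)$ with the subsequence allowed to depend on $B$; this is the weaker conclusion of that lemma which requires no countable-generation hypothesis (the countable-generation case, used in Proposition \ref{ApproximationProp}, is only needed when one wants a \emph{single} subsequence working uniformly for all bounded sets). Your detour through a subfamily $\mathcal{C}_0$ is therefore unnecessary, and as written it is also not quite clean: the biconjugate $\{C\}^{oo}$ need not be enlargeable, so you would at least have to thicken $C$ by a sequence of $\epsilon$-neighbourhoods before biconjugating. Dropping the reduction entirely and citing lemma 22 in its $B$-dependent form, as the paper does, avoids this.
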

\begin{proof}
In any case $B$ is always bounded in $(\mathcal{D}'_{\gamma,\overline{\gamma}}(U,\mathcal{C};E),\mathcal{I}_{ibi})$, and from \cite[lemma 22]{Dab14a}\begin{comment}lemma \ref{normality}\end{comment}
 $B'=L_{k(B)}(B)$ is bounded for $\mathcal{I}_{iii}$ thus for $\mathcal{I}_{ibi}$ since this is the bornologification of the previous one. Thus $B'$ is bounded in any of these spaces and the conclusion follows from the lemma. %In the open case,  we reason similarly with $(\mathcal{D}'_{\gamma,\overline{\gamma}}(U,\mathcal{C};E),\mathcal{I}_{ppp}=\mathcal{I}_{iii})$ with the identification of topologies (and identification of this space as the completion of the second space) in \cite[lemma 28]{Dab14a} \begin{comment}lemma \ref{bornologyIdentitiesEqui}\end{comment}
%. %and \cite[Prop 33]{Dab14a}\begin{comment}proposition \ref{FAGeneral}\end{comment}
% not requiring any countable generatedness.
\end{proof}

In \cite[p 53]{Schwarz}, Schwartz says a space of distributions $\mathscr{H}$ (in $\R^n$ but the manifold case is identical, i.e. $\mathscr{H}\hookrightarrow \mathcal{D}'(U,E)$) has property $(\epsilon)$ if for any quasi-complete separated locally convex vector space $F$, the $\epsilon$ product (cf. the previous subsection for a reminder) $\mathscr{H}\epsilon F$ is the same as the space of distributions $T\in \mathcal{D}'(U,E)\epsilon F\simeq L(F_c',\mathcal{D}'(U,E))$ such that for any $f\in F',$ $T(f)\in \mathscr{H}.$ (Recall $F_c'$ is the Arens dual of $F$, i.e. $F'$ given the topology of uniform convergence on compact absolutely convex sets in $F$). Said otherwise $\mathscr{H}$-valued distributions defined as $\mathscr{H}\epsilon F$  can be tested scalarly to belong to  $\mathscr{H}$ once known to be in $\mathcal{D}'(U,E)$-valued distributions.

\begin{proposition}\label{PropEpsilon}
Let $\gamma$ cones  and  $\mathcal{C}=\mathcal{C}^{oo}$ an enlargeable polar family of closed sets in $U$ and assume
either $\mathcal{C}$ or $(\mathscr{O}_\mathcal{C})^o$  countably generated. Then $(\mathcal{D}'_{\gamma,\overline{\gamma}}(U,\mathcal{C};E),\mathcal{I}_{ibi})$, 
and $(\mathcal{D}'_{\gamma,\overline{\gamma}}(U,\mathcal{C};E),\mathcal{I}_{ppp})$ have property $(\epsilon)$.
\end{proposition}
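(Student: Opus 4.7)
The plan is to use the strong sequential approximation property from Proposition \ref{ApproximationProp} combined with completeness to identify $\mathscr{H}\epsilon F$ as a topological subspace of $\mathcal{D}'(U,E)\epsilon F$, where $\mathscr{H}$ stands for either of the two spaces in the statement. The canonical continuous injection $\mathscr{H} \hookrightarrow \mathcal{D}'(U,E)$ induces a continuous linear map $\mathscr{H}\epsilon F \to \mathcal{D}'(U,E)\epsilon F \simeq L(F'_c,\mathcal{D}'(U,E))$, whose image automatically consists of operators $T$ with $T(F')\subset\mathscr{H}$. The nontrivial step is the converse inclusion: every such $T$ must come from an element of $\mathscr{H}\epsilon F$, and both natural topologies induced on this subspace must coincide.

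For this I would fix the finite rank smoothing operators $L_n:\mathcal{D}'(U;E)\to \mathcal{D}(U;E)$ provided by \cite[lemma 22]{Dab14a} (whose approximation behaviour is recorded in Proposition \ref{ApproximationProp}); these restrict to $L_n:\mathscr{H}\to\mathscr{H}$ converging to the identity uniformly on compact subsets of $\mathscr{H}$. Given $T$ as above, set $T_n = L_n\circ T$. Each $T_n$ factors through $\mathcal{D}(U;E)$ via a finite rank operator, so $T_n \in \mathcal{D}(U;E)\otimes F \subset \mathscr{H}\epsilon F$. It then suffices to prove $T_n\to T$ in the topology of $\mathscr{H}\epsilon F$ and invoke completeness of the target.

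Convergence in $\mathscr{H}\epsilon F$ means uniform convergence on equicontinuous subsets $H\subset F'$ with values in $\mathscr{H}$. Quasi-completeness of $F$ together with Alaoglu ensures $H$ is absolutely convex and $\sigma(F',F)$-compact, so $T(H)$ is an absolutely convex compact subset of $\mathcal{D}'(U;E)$. The main obstacle is upgrading this to compactness inside $\mathscr{H}$ itself: for this I would combine the hypothesis $T(F')\subset\mathscr{H}$ with the coincidence of bounded, precompact and absolutely convex compact sets in our spaces recorded in Lemma \ref{BoundedRelated}, together with a closed hull/limit argument based on Proposition \ref{quasiCompletion} to transfer absolutely convex compactness from $\mathcal{D}'(U;E)$ back to $\mathscr{H}$. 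Once $T(H)$ is identified as a compact subset of $\mathscr{H}$, the uniform convergence $L_n\to\mathrm{Id}$ on $T(H)$ yields $T_n\to T$ in $\mathscr{H}\epsilon F$; completeness of $\mathscr{H}\epsilon F$, inherited from that of $\mathscr{H}$ (which holds by Theorem \ref{FAGeneral2}) via \cite[Prop 3 p 29]{Schwarz}, then gives $T\in \mathscr{H}\epsilon F$ with matching $\epsilon$-topology, which is exactly property $(\epsilon)$.
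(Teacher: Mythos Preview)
Your approach has a genuine gap at the crucial step: showing that $T(H)$ is compact (or even bounded) in $\mathscr{H}$ when all you know is that $T(H)$ is compact in $\mathcal{D}'(U;E)$ and pointwise contained in $\mathscr{H}$. Neither Lemma~\ref{BoundedRelated} nor Proposition~\ref{quasiCompletion} provides this transfer. Lemma~\ref{BoundedRelated} only describes the bounded, precompact, and equicontinuous sets \emph{within} $\mathscr{H}$; it says nothing about sets that are merely $\mathcal{D}'$-compact. Proposition~\ref{quasiCompletion} starts from a set already known to be bounded in $\mathscr{H}$ and produces a smooth approximation; it cannot be used to deduce $\mathscr{H}$-boundedness in the first place. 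Without $T(H)$ compact in $\mathscr{H}$, you cannot invoke the uniform convergence $L_n\to\mathrm{Id}$ on compact subsets of $\mathscr{H}$, so the argument that $T_n\to T$ in $\mathscr{H}\epsilon F$ breaks down. A closed-graph argument does not save this either, since $F'_c$ is an arbitrary Arens dual with no webbed or ultrabornological structure assumed. This compactness transfer is precisely the non-trivial content of property~$(\epsilon)$.

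The paper proceeds differently: it invokes Schwartz's criterion \cite[Proposition~15, p.~54]{Schwarz}, which reduces property~$(\epsilon)$ to three conditions on $\mathscr{H}$ alone: (i) quasi-completeness with closed bounded sets compact (immediate from nuclearity via Theorem~\ref{FAGeneral2}); (ii) strict normality of $\mathscr{H}$ and of its Arens dual (from \cite[lemma~22]{Dab14a}); and (iii) existence of a basis of $0$-neighborhoods in $\mathscr{H}$ closed for the topology induced by $\mathcal{D}'(U;E)$. By Schwartz's remark following his Proposition~15, condition~(iii) is equivalent to: every equicontinuous subset $B$ of $\mathscr{H}'$ lies in the weak closure of an equicontinuous set $B'\subset\mathcal{D}(U;E^*)$. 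This is exactly what Proposition~\ref{quasiCompletion} (together with the identification of equicontinuous sets in Lemma~\ref{BoundedRelated} for the $\mathcal{I}_{ppp}$ case) delivers via $B'=L_{k(B)}(B)$. Schwartz has already packaged the delicate compactness-transfer argument into his Proposition~15; your approach attempts to redo that argument by hand but skips the step that carries all the weight.
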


\begin{proof}
One uses (an obvious manifold variant of) \cite[proposition 15 p54]{Schwarz}. Both our spaces are quasi-complete nuclear thus closed bounded sets are compact. We know that them and their Arens=Mackey duals (since they are semi-Montel) are strictly normal (see \cite[lemma 22]{Dab14a} \begin{comment}lemma \ref{normality}\end{comment}
 and Theorem \ref{FAGeneral2} for the Mackey duals.) It remains to check they have a system of neighborhoods of zero which is closed in them with the topology induced by $\mathcal{D}'(U,E)$, for which, by Schwartz' remark after his proof, it suffices to check there is, for every equicontinuous set $B$ of their duals, a set $B'\subset \mathcal{D}(U,E^*)$ and equicontinuous in their duals such that $B\subset \overline{B'}$ the weak closure. For $(\mathcal{D}'_{\gamma,\overline{\gamma}}(U,\mathcal{C};E),\mathcal{I}_{ibi})$  for which equicontinuous sets are the same as bounded sets, this is our preceding proposition. In $(\mathcal{D}'_{\gamma,\overline{\gamma}}(U,\mathcal{C};E),\mathcal{I}_{ppp})',$ equicontinuous sets are in the bornological inductive limit associated to the definition of $ \mathcal{I}_b$ namely from the proof of Theorem \ref{FAGeneral2} again, the bounded sets in the bornological inductive limit associated to $\mathcal{I}_{iii}$ (this is also explained in detail in lemma \ref{BoundedRelated}). Thus from \cite[lemma 22]{Dab14a}\begin{comment}lemma \ref{normality}\end{comment}
, we can take $B'=L_{k(B)}(B)$ (note this lemma only uses the inductive limit bornology).
\end{proof}

\part{A first look at tensor products and operators on Generalized H\"ormander spaces of distributions}
We start here our investigation of tensor products and spaces of operators on our (slightly) Generalized H\"ormander spaces of distributions. Our intent is to prove here only what we need for our applications to generalizations of spaces of functionals used in algebraic quantum field theory \cite{Brunetti2,DutschBF,Rejzner,RibeiroBF}.  A more systematic study will be carried out in the third paper of this series.
%It is well known that H\"ormander's product of distribution and other applications of operators are sequentially continuous on   $(\mathcal{D}'_\Gamma(U)).$ We want to show here they are in fact continuous with respect to the topologies considered here.
%We mostly adapt H\"ormander's proofs to our goal of proving continuity.

\section{Preliminaries on continuity of tensor products}
The strongest result is obtained when tensor product is not considered with optimal wave front set condition but we can get a continuous map (and not only a hypocontinuous map).

\begin{proposition}\label{continuity}
Let $\gamma_i%\subset \Lambda_i\subset \overline{\gamma_i} 
\subset \dot{T}^*U_i$ be  closed %$\mathbf{\Delta_2^0}$-
cones, $E_i\mapsto U_i$ fiber bundles. %and $\mathcal{C}_i$ enlargeable polar families of closed sets of $U_i$.
 
Let $\gamma=\dot{T}^*U-\gamma_1^c\times\gamma_2^c$, %$\Lambda=\dot{T}^*U_i-\Lambda_1^{c}\times\Lambda_2^{c}$, 
$U=U_1\times U_2$,% and $\mathcal{C}=(\mathcal{C}_1^{o}\times \mathcal{C}_2^{o})^{o}.$
then the bilinear map $.\otimes. : (\mathcal{D}'_{\gamma_1,\gamma_1}(U_1,\mathcal{F};E_1),\mathcal{I}_{iii})\times (\mathcal{D}'_{\gamma_2,\gamma_2}(U_2,\mathcal{F};E_2),\mathcal{I}_{iii})\to (\mathcal{D}'_{\gamma,\gamma}(U,\mathcal{F};E_1\otimes E_2),\mathcal{I}_{iii})$ is continuous.%for $\alpha=iii$.% in the following cases : 
%\begin{enumerate}
%\item[(i)]$\gamma_i=\Lambda_i$ closed cones and $\alpha_1=\alpha_2=\alpha\in\{iii,iip,ipi,ipp\}$
%\item[(ii)]$\gamma_i=\Lambda_i$ open cones and $\alpha_1=\alpha_2=\alpha\in\{iii,iip,ipi,ipp\}$
%\item[(iii)]$\overline{\gamma_i}=\Lambda_i$ with $\gamma_i$ open cones and $\alpha_1=\alpha_2=\alpha=iii$
%\item[(iv)]$\gamma_i=\Lambda_i$ any cone as above and $\alpha_1=\alpha_2=\alpha=bii$
%\item[(v)]$\overline{\gamma_i}=\Lambda_i$ with $\gamma_i$ any cone as above and $\alpha_1=\alpha_2=\alpha\in\{pip,ppp\}.$
%\end{enumerate}
\end{proposition}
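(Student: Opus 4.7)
My approach is to reduce joint continuity on the product of two countable regular inductive limits to a quantitative microlocal estimate on each pair of ``building blocks'', and then invoke the compatibility of projective tensor products with countable inductive limits.

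Using the description (\ref{Alternativeiii}) together with the bornological characterization in Lemma \ref{BoundedRelated}, each factor $(\mathcal{D}'_{\gamma_i,\gamma_i}(U_i,\mathcal{F};E_i),\mathcal{I}_{iii})$ is a countable regular inductive limit of blocks $\mathcal{D}'_{\overline{\gamma_i}}(U_i,\mathcal{F}:\delta_i;E_i)$, with $\delta_i = (\Gamma_n^{(i)})_n$ an increasing sequence of closed cones with union in $\gamma_i$, the topology on each block being given by $DWF_n$-type seminorms. Since the projective tensor product commutes with countable inductive limits, it is enough to show that for each $(\delta_1,\delta_2)$ one can find $\delta=(\Gamma_n) \in \Delta(\gamma)$ such that $\otimes$ is continuous from the product of source blocks into the target block $\mathcal{D}'_{\overline{\gamma}}(U,\mathcal{F}:\delta;E_1\otimes E_2)$. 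A natural candidate is
\[
\Gamma_n \;=\; \bigcup_{n_1+n_2 \,\geq\, n+\kappa} \bigl((\Gamma_{n_1}^{(1)} \cup 0_{U_1}) \times (\Gamma_{n_2}^{(2)} \cup 0_{U_2})\bigr) \cap \dot T^*U,
\]
where $0_{U_i}$ denotes the zero section of $T^*U_i$ and $\kappa$ is a dimensional shift absorbing the Sobolev defect in tensor products. Each $\Gamma_n$ is a closed cone, and since the complement of $\gamma$ inside $\dot T^*U$ is exactly $\gamma_1^c\times\gamma_2^c$, every summand above is contained in $\gamma$; hence $\delta \in \Delta(\gamma)$.

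The heart of the proof is the microlocal estimate on one pair of blocks: for test sections $\phi_1,\phi_2$ and $(\xi_1,\xi_2)$ in a closed conic neighborhood of $\dot T^*U\setminus\Gamma_n$, one uses
\[
\widehat{(\phi_1 \otimes \phi_2)(u_1\otimes u_2)}(\xi_1,\xi_2) \;=\; \widehat{\phi_1 u_1}(\xi_1)\,\widehat{\phi_2 u_2}(\xi_2),
\]
and splits into finitely many regimes according to whether $\xi_i \in \Gamma_{n_i}^{(i)}$ or not. In each regime, a factor with $\xi_i\notin\Gamma_{n_i}^{(i)}$ contributes polynomial decay of order $n_i$ controlled by a $DWF_{n_i}$-seminorm of $u_i$, while a factor with $\xi_i \in \Gamma_{n_i}^{(i)}$ contributes only a coarser distributional polynomial bound, which is compensated by the decay of the opposing factor. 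Summing over the partitions $n_1+n_2\geq n+\kappa$ produces a continuous bilinear estimate from the product of source blocks into the chosen target block, and the reduction above propagates this to joint continuity on the full inductive limits.

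I expect the main obstacle to be the microlocal bookkeeping in this estimate: pairing correctly the $DWF_n$-seminorms (as opposed to classical $WF_n$-seminorms) on the three spaces and selecting the right dimensional shift $\kappa$ to balance indices. The cone $\gamma$ is engineered precisely to exclude directions where \emph{both} components of $(\xi_1,\xi_2)$ lie outside the microlocal control cones $\gamma_i$, which is exactly the obstruction to wave-front propagation under tensor products. Once this point is observed the inclusion $\Gamma_n\subset\gamma$ is routine, and the polynomial decay bounds combine directly into the required continuity, with no need for hypocontinuity since we allow the zero sections in the definition of $\Gamma_n$.
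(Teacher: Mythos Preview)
Your plan rests on a misreading of the topological setup, and misses the point of the choice of target cone.

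First, the inductive-limit reduction you propose is vacuous here. Equation~(\ref{Alternativeiii}) is an inductive limit over \emph{closed} subcones $\Pi\subset\lambda$, so when $\gamma_i$ is already closed the limit has a terminal element $\Pi=\gamma_i$ and is trivial. Likewise, in the bornological description of Lemma~\ref{BoundedRelated} the family $\Delta(\gamma_i)$ admits the constant sequence $\delta_0=(\gamma_i)_n$ as a cofinal element when $\gamma_i$ is closed, so there is no nontrivial inductive system to run. In this closed case with support family $\mathcal{F}$, the paper identifies $\mathcal{I}_{iii}$ directly with the classical H\"ormander normal topology $\mathcal{I}_H$ (via \cite[Th~23, lemma~27]{Dab14a}), whose seminorms are the weak $\mathcal{D}'$ seminorms together with the $P_{k,f,V}$ seminorms. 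There is no role for $DWF_n$-bookkeeping, and hence no dimensional shift $\kappa$ to find.

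Second, and more important, your ``splitting into regimes'' is exactly what the choice of $\gamma$ is engineered to avoid. Since $\gamma^c=\gamma_1^c\times\gamma_2^c$, a target seminorm $P_{k,f,V}$ on $\mathcal{D}'_{\gamma,\gamma}(U)$ has $\text{supp}(f)\times V\subset\gamma_1^c\times\gamma_2^c$; after an exhaustion argument it suffices to treat $f=f_1\otimes f_2$ and $V=V_1\times V_2$ with $\text{supp}(f_i)\times V_i\subset\gamma_i^c$. Then the Fourier factorization gives immediately
\[
P_{k,f_1\otimes f_2,\,V_1\times V_2}(u_1\otimes u_2)
=\sup_{(\xi,\eta)\in V_1\times V_2}(1+|(\xi,\eta)|)^k|\widehat{f_1u_1}(\xi)|\,|\widehat{f_2u_2}(\eta)|
\le P_{k,f_1,V_1}(u_1)\,P_{k,f_2,V_2}(u_2),
\]
which is joint continuity. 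There is no regime in which one factor sits in a ``bad'' direction and must be compensated by the other --- that scenario is precisely what appears for the smaller cone $\gamma_1\dot\times\gamma_2$ in Proposition~\ref{hypocontinuity}, and is the reason why only \emph{hypo}continuity holds there. By enlarging the target cone to exclude only $\gamma_1^c\times\gamma_2^c$, both factors decay simultaneously and the argument collapses to one line. Your sketch, with its compensation mechanism and Sobolev shift, is the wrong estimate for this proposition; it is the estimate one would attempt for Proposition~\ref{hypocontinuity}, and even there the compensation only yields hypocontinuity, not continuity.
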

\begin{proof}
%In case (i), 
%From \cite[Th 23]{Dab14a}\begin{comment}Theorem \ref{DualityBD}\end{comment}
 and \cite[lemma 27]{Dab14a} \begin{comment}lemma \ref{bornologyIdentities}\end{comment}
, we can be content with the cases $\alpha=iii$.
We only consider trivial bundles on open sets of $\R^d,$ the general case being identical.

 Since the support  is $\mathcal{F}(U_i)$, we are reduced to the usual normal topology $\mathcal{I}_H$ from \cite{BrouderDabrowski}. From the well-known continuity of tensor product of distributions with their strong topology and exhaustion lemmas (any closed $\text{supp}(f)\times V\subset \gamma_1^c\times \gamma_2^c$ can be covered by finitely many parts as below), it suffices to consider seminorms of the form $P_{k,f_1\otimes f_2, V_1\times V_2}$ with $\text{supp}(f_i)\times V_i\subset \gamma_i^c$ and of course we have~: $$P_{k,f_1\otimes f_2, V_1\times V_2}(u_1\otimes u_2)=\sup_{(\xi,\eta)\in  V_1\times V_2}(1+|(\xi,\eta)|)^k |\mathcal{F}(f_1u_1)(\xi)||\mathcal{F}(f_2u_2)(\eta)|\leq P_{k,f_1, V_1}(u_1)P_{k,f_2, V_2}(u_2).$$\end{proof}

We also have a weaker but much more useful hypocontinuity result when we take optimal wave front set conditions, which essentially comes from \cite[Th. 5.5]{BrouderDangHelein}. This is especially important once considered our next result that explains why we don't have continuity in most cases different from the previous one.
Note that above and below $E_1\otimes E_2\to U_1\times U_2$ always denotes the exterior tensor product of vector bundles.

\begin{proposition}\label{hypocontinuity}
Let $\gamma_i\subset \Lambda^{(i)}\subset \overline{\gamma_i} 
\subset \dot{T}^*U_i$  %$\mathbf{\Delta_2^0}$-
cones and $\mathcal{C}_i$ enlargeable polar families of closed sets of $U_i$. Let $E_i\to U_i$ vector bundle maps. %Let $\gamma_i\subset \dot{T}^*U_i$%U_i\times\R^n-\{0\}$ be closed cones, $\Lambda_i\subset U_i\times\R^n-\{0\}$ be open cones.
Let $\pi: \dot{T}^*U_i\to U_i$ the first projection and then  let $$\gamma=\gamma_1\times\gamma_2\cup \gamma_1\times U_2\times\{0\}\cup U_1\times\{0\}\times \gamma_2=:\gamma_1\dot{\times}\gamma_2,$$ $\Lambda=\Lambda^{(1)}\dot{\times}\Lambda^{(2)}$ and $\mathcal{C}=(\mathcal{C}_1\times \mathcal{C}_2)^{oo}.$
The bilinear maps $$.\otimes. : (\mathcal{D}'_{\gamma_1,\Lambda^{(1)}}(U_1,\mathcal{C}_1;E_1),\mathcal{I}_{\alpha_1})\times \mathcal{D}'_{\gamma_2,\Lambda^{(2)}}(U_2,\mathcal{C}_2;E_2),\mathcal{I}_{\alpha_2})\to (\mathcal{D}'_{\gamma,\Lambda}(U_1\times U_2, \mathcal{C};E_1\otimes E_2),\mathcal{I}_{\beta})$$ are hypocontinuous (if not otherwise specified) in the following cases  :
\begin{enumerate}
\item[(i)]$\gamma_i=\Lambda^{(i)}$ closed cones and $\alpha_j=\beta=iii$,
\item[(ii)]%$\gamma_i=\Lambda^{(i)}$ any %$\mathbf{\Delta_2^0}$-
%cone and $\alpha_j=\beta=b$. 
%and assume
%\item[(iii)]More generally 
$\Lambda^{(i)},{\gamma_i}$ any cones as above and $\alpha_j=\beta=ibi.$ %(or if $\gamma_i$ open cones, without assuming either $\mathcal{C}$ or $(\mathscr{O}_\mathcal{C})^o$ countably generated but with $\alpha_j=\beta=ppp$).
%\item[(v)]$\gamma_1=\Lambda^{(1)}$ any cone and $\overline{\gamma_2}=\Lambda^{(2)}$ with $\gamma_2$ any cone,  either $\mathcal{C}_2$ or $(\mathscr{O}_{\mathcal{C}_2})^o$  countably generated, $\alpha_1=b$, $\alpha_2=\beta=ibi$ (or if $\gamma_2$ open cone without assuming $\mathcal{C}_2$ or $(\mathscr{O}_{\mathcal{C}_2})^o$  countably generated but with $\alpha_2=\beta=ppp$).%$\overline{\gamma_i}=\Lambda^{(i)}$ with $\gamma_i$ any cone as above and $\alpha\in\{pip,ppp\}.$
\item[(iii)]$\gamma_i=\Lambda^{(i)}$ any cone %and ${\gamma_2}=\Lambda^{(2)}$ with $\gamma_2$ closed cone 
$\alpha_1=b,iii$ , $\alpha_2=\beta=iii$ in which case we only have $\gamma$-hypocontinuity, but full hypocontinuity if $\gamma_1$ closed for $\alpha_1=b$.
%\item[(vii)]$\gamma_i=\Lambda^{(i)}$ any $\mathbf{\Delta_2^0}$-cone and $\alpha_1=\alpha_2=\beta=iii,ppp$%, assuming in this case either $\mathcal{C}_2$ or $\mathscr{O}_{\mathcal{C}_2}^o$ countably generated, and
 %in which case we only have $\epsilon$-hypocontinuity. [ CHECK if $\gamma$-hypocontinuity works].
 \item[(iv)]$\gamma_1=\Lambda^{(1)}$ any cone and $\alpha_1=b$ and $\gamma_2,\Lambda^{(2)}$ any cones and $\alpha_2=\beta=ppp$ %, assuming in this case either $\mathcal{C}_2$ or $\mathscr{O}_{\mathcal{C}_2}^o$ countably generated, and
 in which case we only have $(\gamma-\beta)$-hypocontinuity, but full hypocontinuity if $\gamma_1$ closed.
\end{enumerate}

\end{proposition}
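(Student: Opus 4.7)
The plan is to reduce to the local model of trivial bundles on open subsets of $\R^{d_i}$ via the partition of unity, and to derive all four cases from a single bilinear seminorm estimate of Brouder--Dang--H\'elein type. In the local model the target topology is described by seminorms of the form $P_{k,f,V}(u)=\sup_{\xi\in V}(1+|\xi|)^k|\widehat{fu}(\xi)|$ with $f\in\mathcal{D}(U)$ and $V\subset\dot{T}^*U$ a closed cone whose restriction to $\mathrm{supp}(f)$ is disjoint from $\gamma$, together with Sobolev-type seminorms encoding the $H^n$-wave front set refinement in $\mathcal{D}'_{\overline{\gamma}}(U,\mathcal{F}:\delta;E)$ as in lemma \ref{BoundedRelated}. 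After a finite covering one may take $f=f_1\otimes f_2$.

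The microlocal heart of the argument is to decompose such a $V$ into two closed sub-cones $V_A,V_B$ (covering $V$) such that the projection $V_{A,1}$ onto the first factor is a closed cone with $\mathrm{supp}(f_1)\times V_{A,1}$ disjoint from $\gamma_1$, and symmetrically for $V_{B,2}$ and $\gamma_2$. Existence of such a decomposition is forced by the three-piece definition of $\gamma=\gamma_1\dot{\times}\gamma_2$: any point of $V$ with $\xi_1=0$ must lie in $V_B$ because exclusion of the third piece gives $(x_2,\xi_2)\notin\gamma_2$; any point with $\xi_2=0$ lies symmetrically in $V_A$; any point with both $\xi_i\neq 0$ falls in $V_A\cup V_B$ by exclusion of $\gamma_1\times\gamma_2$. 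Using the rapid decay of $\widehat{f_1u_1}$ off $\gamma_1$ on $V_{A,1}$ and the elementary bound $(1+|\xi|)^k\leq C(1+|\xi_1|)^k(1+|\xi_2|)^k$, one then obtains a product estimate of the shape
$$P_{k,f_1\otimes f_2,V_A}(u_1\otimes u_2)\leq C\,P_{k+N,f_1,V_{A,1}}(u_1)\,\|f_2u_2\|_{H^{-M}},$$
with $N,M$ large enough to absorb the polynomial growth of $\widehat{f_2u_2}$ via a negative Sobolev norm, and the symmetric estimate on $V_B$. The same scheme applies to the $H^n$-variants of the seminorms.

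Case (i) follows directly: by lemma \ref{BoundedRelated}, bounded sets in $(\mathcal{D}'_{\gamma_j,\gamma_j}(U_j,\mathcal{C}_j;E_j),\mathcal{I}_{iii})$ are uniformly compactly supported and uniformly bounded in some $\mathcal{D}'_{\overline{\gamma_j}}(U_j,\mathcal{F}:\delta_j;E_j)$, so the Sobolev norm $\|f_ju_j\|_{H^{-M}}$ is uniformly bounded on them for $M$ large. Fixing $u_2$ in such a bounded set $B_2$, the product estimate makes $u_1\mapsto u_1\otimes u_2$ equicontinuous uniformly in $u_2\in B_2$, and the symmetric statement gives hypocontinuity; the output lies uniformly in a suitable $\mathcal{D}'_{\overline{\gamma}}(U,\mathcal{F}:\delta;E_1\otimes E_2)$, hence in the target regular inductive limit.

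The remaining cases are transferred from (i) at the bornological level. For (ii), equations \eqref{Alternativeppp}--\eqref{Alternativepip} express the $\mathcal{I}_{ibi}$ topologies for the general cones $\gamma_i\subset\Lambda^{(i)}\subset\overline{\gamma_i}$ as projective limits over enlarging open cones $\lambda_i\supset\gamma_i$; applying (i) inside each closed enlargement $\overline{\lambda_i}$ and passing to the limit preserves hypocontinuity. For (iii), lemma \ref{BoundedRelated} identifies absolutely convex compact sets in $(\mathcal{D}'_{\gamma_1,\gamma_1},\mathcal{I}_b)$ as bounded sets lying in some $\mathcal{D}'_{\Pi,\Pi}$ with $\Pi\subset\gamma_1$ closed, so (i) applied with $\Pi$ in place of $\gamma_1$ yields $\gamma$-hypocontinuity; if $\gamma_1$ is itself closed one may take $\Pi=\gamma_1$, covering all bounded sets in $\mathcal{I}_b$ and giving full hypocontinuity. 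Case (iv) combines this with the fact that $\mathcal{I}_{ppp}$ is the completion of $\mathcal{I}_{iii}$ and that the bornologification map is proper (again lemma \ref{BoundedRelated}), so that absolutely convex compact sets on the first factor are controlled by closed cone restrictions. The main obstacle I anticipate is the careful handling of the two zero-section pieces of $\gamma_1\dot{\times}\gamma_2$, absent from the original Brouder--Dang--H\'elein argument on pure products of closed cones: they force the partition $V=V_A\cup V_B$ with individual control of the projections, and they are exactly why the correct output cone cannot be the na\"ive $\gamma_1\times\gamma_2$ but must include the two zero-section components.
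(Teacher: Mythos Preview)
Your treatment of case (i) is essentially correct and matches the Brouder--Dang--H\'elein argument the paper simply cites; the decomposition $V=V_A\cup V_B$ you describe is exactly their mechanism, and your observation about the zero-section pieces forcing this partition is accurate.

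The genuine gap is in case (ii). Equations \eqref{Alternativeppp}--\eqref{Alternativepip} describe $\mathcal{I}_{ppp}$ and $\mathcal{I}_{pip}$, not $\mathcal{I}_{ibi}$; there is no projective-limit description of $\mathcal{I}_{ibi}$ available here. More seriously, the assertion that ``passing to the limit preserves hypocontinuity'' is not valid for projective limits on the \emph{source} spaces: a bounded set in a projective limit need not be controlled by finitely many factors, so the equicontinuity required for hypocontinuity does not transfer. The paper's route is entirely different and is the key idea you are missing: by Theorem \ref{FAGeneral2}, the spaces $(\mathcal{D}'_{\gamma_i,\gamma_i},\mathcal{I}_{ibi})$ are ultrabornological, hence barrelled, so by \cite[Th 41.2]{Treves} hypocontinuity follows from mere \emph{separate} continuity. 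Separate continuity is then checked term-by-term on the inductive limit defining $\mathcal{I}_{ibi}$, reducing to boundedness statements that follow from case (i). The passage to general $\Lambda^{(i)}$ is then done by completion via \cite[\S 40.3.(4)]{Kothe2} together with Proposition \ref{quasiCompletion}, not by projective limits.

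Your sketches for (iii) and (iv) are closer to the paper's line---both rely on Lemma \ref{BoundedRelated} to trap absolutely convex compact sets inside closed-cone pieces and invoke a Melnikov-type inductive-limit argument---but your (iv) is too compressed: the paper proceeds in stages (reduce to $\gamma_1$ closed, then $\gamma_2=\Lambda^{(2)}$, then $\Lambda^{(2)}=\overline{\gamma_2}$ by completion, then general $\Lambda^{(2)}$ by wave-front-set bookkeeping on the already-built map), and each stage uses a different ingredient. The phrase ``the bornologification map is proper'' is not what drives (iv); the completion argument \cite[\S 40.3.(4)]{Kothe2} plus Proposition \ref{quasiCompletion} is.
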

The various cases give (some kind of) hypocontinuity for most pairs of main topologies in Theorem \ref{FAGeneral2}, most notably for pairs of dual topologies and for identical topologies, with the exception of $\mathcal{I}_{ppp}$ in the complete case (where there is only a result with its dual topology), because the non-complete case is only given with $\gamma$-hypocontinuity and thus not enough bounded sets to go to the completion.

Note that case (ii) includes $\gamma_i=\Lambda^{(i)}$ open cones and $\alpha_j=\beta\in\{iii,iip,ipi,ipp\}$ since those topologies coincide with $\mathcal{I}_b$ in this case except on the target space since $\gamma$ may not be open, but it is still stronger (see \cite[Prop 33]{Dab14a}\begin{comment}proposition \ref{FAGeneral}\end{comment}
) and for the same reason, case (ii) also includes $\overline{\gamma_i}=\Lambda^{(i)}$  with $\gamma_i$ open $\alpha_j=\beta=ppp$ in the case since $\mathcal{I}_{ppp}=\mathcal{I}_{ppp}^{born}=\mathcal{I}_{ibi}$ in this case from Theorem \ref{FAGeneral2}.
\begin{proof}
Note first that $\gamma^c=(\dot{T}^*U_1-\gamma_1)\times\dot{T}^*U_2\cup (\dot{T}^*U_1)\times (\dot{T}^*U_2-\gamma_2)\cup \gamma_1^c\times U_2\times\{0\}\cup U_1\times\{0\}\times \gamma_2^c$ so that $\gamma$ is again a $\mathbf{\Delta_2^0}$-cone if $\gamma_i$'s are (since this formula and the defining one keep the $F_\sigma$ property).

Using \cite[Th 23, lemma 27]{Dab14a} \begin{comment}Theorem \ref{DualityBD}, lemma \ref{bornologyIdentities}
\end{comment}
 %and \cite[lemma 27]{Dab14a} \begin{comment}lemma \ref{bornologyIdentities}\end{comment}
  to identify the topologies with what we called normal topology in \cite{BrouderDabrowski}, 
Case (i) is exactly \cite[Th. 5.5]{BrouderDangHelein} at least in the case $\mathcal{C}_i=\mathcal{F}(U_i)$, and trivial vector bundles. Note also that $\gamma$ is indeed a closed cone when $\gamma_i$'s are.  We don't explain more the manifold case which is a direct consequence of their result.

One can use a well-known result for hypocontinuity and inductive limits (see \cite[Th 1 c]{Melnikov}), since the inductive limit on support are strict regular (especially quasiregular in the sense of Melnikov), it suffices to get hypocontinuity on each term of the inductive limit, and since the topology is induced there from the one without support condition and the support condition of tensor products is well known, the result is obvious.

For case (ii),  we start with $\gamma_i=\Lambda^{(i)}$ in which case $\mathcal{I}_{ibi}=\mathcal{I}_{b}$ and
we proved in Theorem \ref{FAGeneral2} that the spaces considered are ultrabornological thus barrelled and we can again be content to prove separate continuity using \cite[Th 41.2 p424]{Treves}.
%Since for barrelled spaces separate continuity is equivalent to hypocontinuity (cf Treves \cite[Th 41.2 p424]{Treves}), we only have to prove separate continuity for $\mathcal{E}$ spaces.
Thus if we take $u_1\in \mathcal{D}'_{\Pi_1,\Pi_1}(U_1,\mathcal{C}_1)$ and $\Pi_1\subset \gamma_1$ closed cone we have to check that $u_2\mapsto u_1\otimes u_2$ is continuous from  $(\mathcal{D}'_{\Pi_2,\Pi_2}(U_2,\mathcal{C}_2),\mathcal{I}_{H,iii}^{born})\to (\mathcal{D}'_{\gamma,\gamma=\Lambda}(U_1\times U_2, \mathcal{C}),\mathcal{I}_{b})$ using the definition of continuity on inductive limits, but for this it suffices to check it is bounded $(\mathcal{D}'_{\Pi_2,\Pi_2}(U_2,\mathcal{C}_2),\mathcal{I}_{H,iii})\to (\mathcal{D}'_{\Pi_1\dot{\times} \Pi_2}(U_1\times U_2, \mathcal{C}),\mathcal{I}_{H,iii})$ (since then it will be bounded thus continuous between bornologifications). %, especially $(\mathcal{D}'_{\Pi_2,\Pi_2}(U_2,\{F\in\mathcal{F}, F\subset C_2\}),\mathcal{I}_{H})\to (\mathcal{D}'_{\gamma,\Lambda=\gamma}(U_1\times U_2, \mathcal{C}),\mathcal{I}_{bii})$, thus continuous after using the canonical continuous map to the inductive limit). 
The boundedness statement is included in the above hypocontinuity already checked.

%In case (ii), note that $\gamma$ is not necessarily open even if $\gamma_i$'s are. One uses the result in \cite[Th 23]{Dab14a}\begin{comment}Theorem \ref{DualityBD}\end{comment} that the source spaces are barrelled   and \cite[Th 41.2]{Treves} it suffices to prove separate continuity (and only for $\mathcal{I}_{iii}$ since the topologies agree at the source space, and this is the strongest topology at the target), but since the fixed coordinate is taken in some $(\mathcal{D'}_{\Pi_i,\Pi_i}(U_i,\mathcal{C}_i)$ the separate continuity which is proving continuity of a linear map on an inductive limit then comes from the previous hypocontinuity (since $\Pi_1\dot{times}\Pi_2\subset \gamma$ is closed). Case (ii) thus follows. 

We continue with the case $\Lambda^{(i)}=\overline{\gamma_i}$, note that $\Lambda=\overline{\gamma}$ in this case. %, and that in the countable generated case $\mathcal{I}_{ppp}^{born}=\mathcal{I}_b\supset \mathcal{I}_{ppp}$ by Theorem \ref{FAGeneral2}, so that since n the open case $\mathcal{I}_{ppp}$ is already bornological, the second case is a consequence of the first under this supplementary countable generatedness assumption. Let us come back to the proof.
 Since the source spaces are the completions of those of our first subcase in case (ii) by our %Corollary \ref{FAClosedOpen2} %or 
 Theorem \ref{FAGeneral2}, we use \cite[\S 40.3.(4)]{Kothe2} for which we only need to check the target space is complete (which is given at the same place%in %our Corollary \ref{FAGeneral} or Theorem \ref{FAGeneral2}
 )
  and the closure of the bounded sets generates the bounded sets of the completion, and this is checked in proposition \ref{quasiCompletion}. 
For the general case, since the source spaces have the previous case as their completion, we already know the map is built with value in $\mathcal{D}'_{\gamma,\overline{\gamma}}(U,\mathcal{F};E_1\otimes E_2).$
%Case (v)  is a mix of case (ii) and (iii). From case (iv) (applied when $\gamma_2$ open in using our remark that then $\mathcal{I}_b=\mathcal{I}_{iii}=\mathcal{I}_{ppp}$), we only have to extend to the completion on the left side as in case (iii) (using $\mathcal{I}_b=\mathcal{I}_{iii}=\mathcal{I}_{ppp}$ in the open case so that we indeed computed the completion). Composing the map obtained in (ii) with the canonical maps given by inclusion, we arrive to $(\mathcal{D}'_{\gamma,\overline{\gamma}}(U,\mathcal{F};E_1\otimes E_2)$ with  topology $\mathcal{I}_{ppp}$ (open case) or $\mathcal{I}_{ibi}$ (general case) for which we know it is complete thanks to Theorem \ref{FAGeneral2} or corollary \ref{FAClosedOpen2}. Thus we can reason as in (iii) and since from the completion result in the same  proposition and \cite[Prop 33]{Dab14a}\begin{comment}proposition \ref{FAGeneral}\end{comment}

Now the topology induced on our desired target by the completion is of course $\mathcal{I}_{ibi}$ by the same completion property at the target from Theorem \ref{FAGeneral2} %or $\mathcal{I}_{ppp},$
and it suffices to check the extended map is valued in the stated space. For, thanks to the computation of the completion, we only have to compute wave front sets (since the DWF condition is not changed at the completion level). But for any element in pair in the source, wave front sets are closed cones, and applying (i) to them and increasingness of $\dot{\times}$ in both arguments, one concludes to the expected inclusion $WF(u\otimes v)\subset \Lambda.$

In case (iii), we use the inductive limit description for $\mathcal{I}_b$ and \eqref{Alternativeiii}. We checked in the lemma \ref{BoundedRelated} % proof of  \cite[Prop 32]{Dab14a} %\ref{bornologyIdentitiesEqui}
 %and Theorem \ref{FAGeneral2} 
 that absolutely convex compact sets are equicontinuous and that such sets are only in the bornology of the quoted bornological inductive limits. We can thus use a result in the spirit of \cite[Th 1 c]{Melnikov} (replacing the strong assumption for bounded sets he uses namely a regularity of the inductive limit by the right assumption for his result which is to take an hypocontinuity with respect to the bornologies of the inductive limits). Then the required hypocontinuities for applying this result all follow from the closed case (i).
 [Note that the case with $\alpha_1=b$  also follows directly  from the case $\alpha_1=iii$ by composition with the bornologification map.]

%Case (vi) in the closed case starts from case (i) using $\mathcal{I}_{iii}=\mathcal{I}_{ppp}$ for spaces controlled only by wave front set. OF course, composing with canonical maps, one gets a map with the left hand side topology replaced by $\mathcal{I}_{iii}^{born}$.  It then suffices to use the inductive limit description for $\mathcal{I}_b$ over closed subcones, since a equicontinuous set in the inductive limit is equicontinuous at some point in there, and in the closed cone case, this is the same as bounded, thus a result in the spirit of \cite[Th 1 c]{Melnikov} (but which does not require any strong assumption for bounded sets) gives $\epsilon$-continuity. 
Noting the first space with topology $\mathcal{I}_b$ is ultrabornological thus barrelled, we can use \cite[\S 40.2.(3,4)]{Kothe2} to get asymmetric hypocontinuity with respect to bounded sets on the right hand side (starting only from separate continuity). And since in the case $\gamma_1$ closed, bounded and equicontinuous sets coincide in the left hand side, one gets full hypocontinuity in this case. This gives the special result in the closed case.

Finally in case (iv) note first taking an equicontinuous or an absolutely convex compact set in the left hand side, thus a bounded set with wave front set controlled by a closed cone $\Gamma\subset\gamma_1$ so that the statement reduces to prove hypocontinuity in the case $\gamma_1$ closed in reasoning as before with a variant of \cite[Th 1 c]{Melnikov}.
We thus stick to this case. Starting from (iii) with same $\gamma_i$ gives the case $\gamma_2=\Lambda^{(2)}$ (since $\mathcal{I}_{ppp}=\mathcal{I}_{iii}$ in this case in the source and $\mathcal{I}_{iii}$ is stronger in the targe).
Next we deal with the case $\overline{\gamma_2}=\Lambda^{(2)},$
 and fixing as target space the completion of the space we want we have from (iii) hypocontinuity. We can apply \cite[\S 40.3.(4)]{Kothe2}, (we thus don't need to  extend on the left hand side) and get hypocontinuity using again proposition \ref{quasiCompletion}. % (and even asymetric hypocontinuity with respect to bounded sets in the left hand side) for the two source spaces the one we wanted since of course, from sequential density of smooth maps in the right source space, point in the completion, are in closures of bounded sets before the completion on the right hand side. By the reasonning with barelledness as before, we even get full hypocontinuity (for bounded sets on the left hand side completing the known one for the right hand side). 
 To get the same thing with the stated target space, we only have to check the space of value is the right one in  controlling the wave front set (this is the only thing changing while we got to the completion before), but this is again obvious using (i). This concludes this second case. Finally, for the general case, we deduce as before from the previous case the map by functoriality and only have to change the space of value which means computing the wave front set of arrival since all the spaces have the same completion (see Theorem \ref{FAGeneral2}.)

\end{proof}

The advertised non-continuity result will be a  consequence of an improvement of the previous result. We will investigate more its meaning in the third paper of this series.

We concentrate on the closed cone case, with $U_i\subset \R^{d_i}$ and without vector bundles until the end of this subsection, for which we need to introduce extra seminorms. We leave the manifold slight generalization to the reader. Let $\gamma_i
\subset \dot{T}^*U_i$  %$\mathbf{\Delta_2^0}$-
be %closed 
cones and $\gamma=\gamma_1\dot{\times}\gamma_2$.

Define  for $D>0,k\in\N^*$
$$C_{D,k,1}=\{(\xi,\eta)\in\R^{d_1}\times\R^{d_2},D|\eta|^k\geq  |\xi|\}$$

$$C_{D,k,2}=\{(\xi,\eta)\in\R^{d_1}\times\R^{d_2},D|\xi|^k\geq  |\eta|\}$$
 and for a set $A$, $f\in \mathcal{D}(U_1 \times U_2)$
 
 $$p_{l,A,f}(u)=\text{sup}_{(\xi,\eta)\in A}(1+|\xi|+|\eta|)^l|\mathcal{F}(uf)(\xi,\eta)|$$
 
 Then we call $\mathcal{I}_{\beta}$, the topology on $(\mathcal{D}'_{\gamma,\Lambda}(U_1\times U_2, \mathcal{F}),\mathcal{I}_{\beta})$  for $\gamma=\gamma_1\dot{\times}\gamma_2$ generated by the seminorms for $\mathcal{I}_{ppp}$ and, for $\varphi_i\in \mathcal{D}(U_i)$ and  cones $V_i$ such that $(\text{supp}(\varphi_i)\times V_i)\cap \gamma_i=\emptyset$, any $D,k,l$, the seminorms
$p_{l,C_{D,k,1}\cap (\R^{d_1}\times V_{2}),\varphi_1\otimes\varphi_2}$ and 
$p_{l,C_{D,k,2}\cap (V_1\times\R^{d_2}),\varphi_1\otimes\varphi_2}.$

\begin{proposition}\label{hypocontinuityImproved}
Let $\gamma_i
\subset \dot{T}^*U_i$  %$\mathbf{\Delta_2^0}$-
be %closed 
cones and $\gamma=\gamma_1\dot{\times}\gamma_2$. %and $\mathcal{C}_i$ enlargeable polar families of closed sets of $U_i$. 
%Let $E_i\mapsto U_i$ vector bundle maps. %Let $\gamma_i\subset \dot{T}^*U_i$%U_i\times\R^n-\{0\}$ be closed cones, $\Lambda_i\subset U_i\times\R^n-\{0\}$ be open cones.
$(\mathcal{D}'_{\gamma,\overline{\gamma}}(U_1\times U_2, \mathcal{F}),\mathcal{I}_{\beta})$ is complete and the tensor product bilinear map is $\gamma$-hypocontinuous :$$.\otimes. : (\mathcal{D}'_{\gamma_1,\gamma_1}(U_1,\mathcal{F}),\mathcal{I}_{iii})\times (\mathcal{D}'_{\gamma_2,\gamma_2}(U_2,\mathcal{F}),\mathcal{I}_{iii})\to (\mathcal{D}'_{\gamma,\overline{\gamma}}(U_1\times U_2, \mathcal{F}),\mathcal{I}_{\beta}).$$ 
\end{proposition}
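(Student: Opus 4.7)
The plan is to separate the two claims: I first establish the hypocontinuity, whose core estimate also clarifies why the added seminorms behave well on tensor products, and then deduce completeness via a lower semi-continuity argument.

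For the hypocontinuity, let $B_1$ be an absolutely convex compact set in $(\mathcal{D}'_{\gamma_1,\gamma_1}(U_1,\mathcal{F}),\mathcal{I}_{iii})$. By Lemma \ref{BoundedRelated}, such a set is uniformly supported in a compact $K_1\subset U_1$ and contained in a bounded subset of some $\mathcal{D}'_{\overline{\gamma_1}}(U_1,\mathcal{F}:\delta_1)$, yielding for each $\varphi_1\in\mathcal{D}(U_1)$ constants $N$ and $C_{B_1,\varphi_1}$ with $|\mathcal{F}(u_1\varphi_1)(\xi_1)|\leq C_{B_1,\varphi_1}(1+|\xi_1|)^{N}$ uniformly in $u_1\in B_1$. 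The $\mathcal{I}_{ppp}$-seminorms of the target are already controlled by Proposition \ref{hypocontinuity} (composed with the continuous inclusion $(\mathcal{D}'_{\gamma,\gamma},\mathcal{I}_{iii})\hookrightarrow(\mathcal{D}'_{\gamma,\overline{\gamma}},\mathcal{I}_{ppp})$). For the added seminorms, the identity $\mathcal{F}((u_1\otimes u_2)(\varphi_1\otimes\varphi_2))(\xi_1,\xi_2)=\mathcal{F}(u_1\varphi_1)(\xi_1)\mathcal{F}(u_2\varphi_2)(\xi_2)$ combined with $|\xi_1|\leq D|\xi_2|^k$ on $C_{D,k,1}$ gives, outside a bounded region of $\xi_2$ handled trivially, the bounds $(1+|\xi_1|+|\xi_2|)^l\leq C_l(1+|\xi_2|)^{lk}$ and $(1+|\xi_1|)^N\leq C_N(1+|\xi_2|)^{kN}$, leading to
\[
p_{l,C_{D,k,1}\cap(\R^{d_1}\times V_2),\varphi_1\otimes\varphi_2}(u_1\otimes u_2)\leq C_{B_1,\varphi_1}'\cdot p_{k(l+N),V_2,\varphi_2}(u_2).
\]
Since $(\text{supp}(\varphi_2)\times V_2)\cap\gamma_2=\emptyset$, the right-hand side is a continuous $\mathcal{I}_{ppp}$-type seminorm on $(\mathcal{D}'_{\gamma_2,\gamma_2}(U_2,\mathcal{F}),\mathcal{I}_{iii})$; the symmetric argument handles $p_{l,C_{D,k,2}\cap(V_1\times\R^{d_2}),\varphi_1\otimes\varphi_2}$, and $\gamma$-hypocontinuity follows.

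For completeness, let $(u_\alpha)$ be a Cauchy net in $\mathcal{I}_\beta$. Because $\mathcal{I}_\beta$ refines $\mathcal{I}_{ppp}$, and the latter refines the strong dual topology on $\mathcal{D}'$, the net converges weakly to some $u\in\mathcal{D}'(U_1\times U_2)$. Lower semi-continuity of the $\mathcal{I}_{ppp}$-seminorms under distributional convergence transports the conditions $WF(u_\alpha)\subset\overline{\gamma}$ and $DWF(u_\alpha)\subset\gamma$ to $u$, so $u\in\mathcal{D}'_{\gamma,\overline{\gamma}}$ and the $\mathcal{I}_{ppp}$-part of $u_\alpha\to u$ holds. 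For each added seminorm $p_{l,A,\varphi_1\otimes\varphi_2}$, the map $v\mapsto(1+|\xi_1|+|\xi_2|)^l|\mathcal{F}(v(\varphi_1\otimes\varphi_2))(\xi_1,\xi_2)|=(1+|\xi_1|+|\xi_2|)^l|\langle v,(\varphi_1\otimes\varphi_2)e^{-i(\xi_1,\xi_2)\cdot}\rangle|$ is weakly continuous in $v$ for each fixed $(\xi_1,\xi_2)\in A$, so $p_{l,A,\varphi_1\otimes\varphi_2}$ is lower semi-continuous on $\mathcal{D}'$. Given $\epsilon>0$, the Cauchy condition $p_{l,A,\varphi_1\otimes\varphi_2}(u_\alpha-u_\beta)<\epsilon$ for large $\alpha,\beta$ yields, after $\liminf_\beta$, $p_{l,A,\varphi_1\otimes\varphi_2}(u_\alpha-u)\leq\epsilon$, proving $u_\alpha\to u$ in $\mathcal{I}_\beta$.

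The delicate point I expect is verifying that the added seminorms are in fact finite on a general $u\in\mathcal{D}'_{\gamma,\overline{\gamma}}$: when $k>1$ the set $C_{D,k,1}$ fails to be conic and its normalized directions accumulate at infinity along $\R^{d_1}\times\{0\}$, which can meet $\overline{\gamma}$, so the standard rapid decay of $\mathcal{F}(u\varphi)$ on open cones disjoint from $\overline{\gamma}$ does not immediately deliver $p_{l,A,\varphi}(u)<\infty$. I would handle this by density, approximating $u$ by smoothings $L_n u$ provided by Propositions \ref{ApproximationProp} and \ref{quasiCompletion} (which are finite sums of tensor-type elements with explicit polynomial control of their Fourier transforms), transferring the bound of the first step to these approximants, and concluding by the lower semi-continuity established above.
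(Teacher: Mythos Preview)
Your completeness argument is essentially the paper's, and the estimate you give for the seminorm $p_{l,C_{D,k,1}\cap(\R^{d_1}\times V_2),\varphi_1\otimes\varphi_2}$ with $u_1\in B_1$ fixed is correct. Two problems remain.

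\medskip
\textbf{The hypocontinuity is only half done.} With $B_1$ fixed you must also control $p_{l,C_{D,k,2}\cap(V_1\times\R^{d_2}),\varphi_1\otimes\varphi_2}$, and this is \emph{not} symmetric to the case you wrote: on $C_{D,k,2}$ one has $|\xi_2|\leq D|\xi_1|^k$, so $|\xi_1|$ is unbounded and your polynomial bound $(1+|\xi_1|)^N$ on $\mathcal{F}(u_1\varphi_1)$ gives nothing. What is needed here is the other information carried by $B_1$: since $B_1$ is absolutely convex compact in $(\mathcal{D}'_{\gamma_1,\gamma_1},\mathcal{I}_{iii})$, by Lemma \ref{BoundedRelated} it is bounded in some $\mathcal{D}'_{\Pi_1,\Pi_1}$ with $\Pi_1\subset\gamma_1$ closed, hence $\sup_{u_1\in B_1}P_{m,\varphi_1,V_1}(u_1)<\infty$ for every $m$ (rapid decay on $V_1$, not polynomial growth). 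One then bounds the seminorm by $P_{B'}(u_2)$ for $B'=\{\varphi_2 e_{\xi_2}(1+|\xi_1|+|\xi_2|)^l|\mathcal{F}(u_1\varphi_1)(\xi_1)|:(\xi_1,\xi_2)\in C_{D,k,2},\,\xi_1\in V_1,\,u_1\in B_1\}$, which one checks is bounded in $\mathcal{D}(U_2)$ using that rapid decay. This is exactly the paper's ``second case'' (with indices swapped); without it $\gamma$-hypocontinuity does not follow.

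\medskip
\textbf{The ``delicate point'' is a misreading, and the proposed fix would fail.} The added seminorms are \emph{not} finite on all of $\mathcal{D}'_{\gamma,\overline{\gamma}}$: that is precisely the content of the next proposition in the paper, which exhibits an explicit $u\in\mathcal{D}'_{\Lambda,\overline{\Lambda}}$ (H\"ormander's example, translated) with $p_{l,C_{3^k,k,1}\cap(\R^{d_1}\times V_2),\varphi_1\otimes\varphi_2}(u)=\infty$. The locally convex space $(\mathcal{D}'_{\gamma,\overline{\gamma}},\mathcal{I}_\beta)$ is to be read as the subspace on which all these seminorms are finite (the paper's completeness proof explicitly says it can be formulated as a closed subspace of a product of complete spaces), and the proposition only claims that the tensor map lands there and that this subspace is complete. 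Your density-plus-lower-semicontinuity scheme cannot establish finiteness: lower semicontinuity yields $p(u)\leq\liminf_n p(L_n u)$, but nothing forces the right-hand side to be finite, and for the $u$ above it is not.
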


\begin{proof}
For the completeness statement, taking any Cauchy net  $u_n$ for $\mathcal{I}_\beta$ it converges to some $u$ for $\mathcal{I}_{ppp}$ from the completeness of this topology and $\mathcal{F}(u_nf)(\xi,\eta)$ thus converges pointwise, and since moreover, from completeness of our supplementary decay seminorms on spaces of continuous functions, they have limits in those spaces, the limit is necessarily   $\mathcal{F}(uf)(\xi,\eta)$ so that we have indeed the full convergence for $\mathcal{I}_\beta.$ We could of course formulate this in terms of closed subspace of a product of complete spaces.

It remains to check the improved bonds near the cones $\gamma_1\times U_2\times\{0\}$ and $U_1\times\{0\}\times\gamma_2$ for hypocontinuity. Note that as in the proof of proposition \ref{hypocontinuity} (iii) it suffices to prove the closed $\gamma_i$ case and reason by inductive limits. %Thus take $f=\varphi_1\otimes \varphi_2$ a smooth function and $V\subset \R^{d_1+d_2}$ a closed cone such that $(\text{supp}(f)\times V)\cap \gamma_1\times \gamma_2=\emptyset$, take a 

%The basic bound are as follows, take ,$\varphi_i$ smooth and a cone $V_2$ such that $(\text{supp}(\varphi_2)\times V_2)\cap \gamma_2=\emptyset$.
%Define
%$C_{D,k}=\{(\xi,\eta)\R^{d_1}\times\R^{d_2},D|\eta|^k\geq  |\xi|\}$, $D>0,k\in\N^*$
By symmetry we are content to bound (in the setting before the proposition):
%Then we can bound 
\begin{align*}p_{l,C_{D,k,1}\cap (\R^{d_1}\times V_{2}),\varphi_1\otimes\varphi_2}(u_1\otimes u_2):=\text{sup}_{(\xi,\eta)\in C_{D,k,1}, \eta\in V_2}(1+|\xi|+|\eta|)^l|\mathcal{F}(u_1\varphi_1)(\xi)||\mathcal{F}(u_2\varphi_2)(\eta)|
\end{align*}
in two cases to prove hypocontinuity of the tensor product with those seminorms added. First if we assume $u_1\in B$ some bounded set in $\mathcal{D}'(U_1)$ thus say with $\mathcal{F}(u_1\varphi_1)$ polynomially bounded of order $m$, in this case, one gets (using $(1+|\xi|+|\eta|)\leq (1+|\xi|)(1+|\eta|)\leq (1+D|\eta|^k)(1+|\eta|)\leq \max(1,|D|)(1+|\eta|)^{k+1}$ on $C_{D,k,1}$):
\begin{align*}\sup_{u_1\in B}&p_{l,C_{D,k,1}\cap (\R^{d_1}\times V_{2}),\varphi_1\otimes\varphi_2}(u_1\otimes u_2)\\&\leq \max(1,D)^{l+m}\left( \sup_{\xi}(1+|\xi|)^{-m}|\mathcal{F}(u_1\varphi_1)(\xi)|\right)
P_{(k+1)(m+l),\varphi_2,V_2}(u_2).\end{align*}

Second, if we assume $u_2\in B$ some bounded set in $\mathcal{D}_{\gamma_2}'(U_2)$ one introduces $B'=\{\varphi_1 e_{\xi}(1+|\xi|+|\eta|)^l|\mathcal{F}(u_2\varphi_2)(\eta)| : (\xi,\eta)\in C_{D,k,1}, \eta\in V_2 , u_2\in B\}$ so that obviously $$\sup_{u_2\in B}p_{l,C_{D,k,1}\cap (\R^{d_1}\times V_{2}),\varphi_1\otimes\varphi_2}(u_1\otimes u_2)\leq P_{B'}(u_1)$$
and it remains to prove $B'$ bounded in $\mathcal{D}(U_1).$ First $B'$ is indeed uniformly supported on $\text{supp}(\varphi_1).$

We have to bound \begin{align*}\sup_{f\in B'}\sup_{x\in K}\sum_{|\alpha|\leq n}|f^{\alpha}(x)|&\leq \sup_{(\xi,\eta)\in C_{D,k}, \eta\in V_2 , u_2\in B}C_{n,\varphi_1}(1+|\xi|)^n(1+|\xi|+|\eta|)^l|\mathcal{F}(u_2\varphi_2)(\eta)|\\&\leq \max(1,D)^{l+n}C_{n,\varphi_1}\sup_{ u_2\in B}P_{(k+1)(l+n),V_2,\varphi_2}(u_2)\end{align*} for some compact $K$ say containing a neighborhood of $\text{supp}(\varphi_1),$ and some constant $C_{n,\varphi_1}$ depending on the norms of $\varphi_1$. But now the last term is finite by definition of boundedness of $B$. Those two estimates together prove the stated hypocontinuity.

\end{proof}

Our next non-continuity result will be based on the argument that if we could have a full continuity, we would have an isomorphism contradicting the improved target topology for hypocontinuity in our previous proposition.

\begin{proposition}\label{NONContinuity}
Let $\Gamma_i\subset \dot{T}^*U_i$  closed cones and $\Lambda_i=-\Gamma_i^c\subset \dot{T}^*U_i$ open cones (complements being taken in $\dot{T}^*U_i,\dot{T}^*U$), $\mathcal{C}_i$ polar enlargeable families of closed sets on $U_i$. Let also $U=U_1\times U_2$, $\mathcal{C}=(\mathscr{O}_{(\mathscr{O}_{\mathcal{C}_1}^o\times \mathscr{O}_{\mathcal{C}_2}^o)^{oo}})^o$,  $\Gamma=(\Gamma_1^c\dot{\times} \Gamma_2^c)^c\subset \dot{T}^*U$,  and $\lambda=(\Lambda_1^c\dot{\times} \Lambda_2^c)^c$ (which is open)
 $\Lambda=-\Gamma^c%\cap \dot{T}^*U
=\Lambda_1\dot{\times}\Lambda_2\subset  \lambda,$  and $\gamma=-\lambda^c=\Gamma_1\dot{\times} \Gamma_2\subset \Gamma.$%$\Lambda=-\Gamma^c%\cap \dot{T}^*U
%=\Lambda_1\dot{\times}\Lambda_2\subset \dot{T}^*U.$

Then the tensor map $$[(\mathcal{D}'_{\Gamma_1,\Gamma_1}(U_1,\mathcal{F}),\mathcal{I}), (\mathcal{D}'_{\Gamma_2,\Gamma_2}(U_2,\mathcal{F}) ,\mathcal{I})]\to (\mathcal{D}'_{\Gamma,\Gamma}(U,\mathcal{F}),\mathcal{I}),$$ 
$$[(\mathcal{D}'_{\Lambda_1,\Lambda_1}(U_1,\mathcal{C}_1),\mathcal{I}), (\mathcal{D}'_{\Lambda_2,\Lambda_2}(U_2,\mathcal{C}_2) ,\mathcal{I})]\to (\mathcal{D}'_{\lambda,\lambda}(U,\mathcal{C}),\mathcal{I}),$$ 
are NOT continuous as soon as $(\Gamma_1,\Gamma_2)\not\in \{(\emptyset,\emptyset),(\dot{T}^*U_1,\dot{T}^*U_2)\}$, for any topology $\mathcal{I}$ among $\mathcal{I}_{ppp},\mathcal{I}_{ibi}.$ 
%$$(\mathcal{D}'_{\Lambda_1,\Lambda_1}(U_1,\mathcal{K}),\mathcal{I}_{iii})\hat{\otimes}_\beta (\mathcal{D}'_{\Lambda_2,\Lambda_2}(U_2,\mathcal{K}) ,\mathcal{I}_{iii})\simeq (\mathcal{D}'_{\Lambda,\overline{\Lambda}}(U,\mathcal{K}),\mathcal{I}_{iii}).$$
%With their strong topologies coming from the completed dual, the completed bi-hypocontinuous tensor product of $\mathcal{D}'_{\Gamma_1}(U_1),\mathcal{D}'_{\Gamma_2}(U_2)$ is given by $\mathcal{D}'_{\Gamma_1}(U_1)\widetilde{\otimes}_\beta\mathcal{D}'_{\Gamma_2}(U_2)\simeq \mathcal{D}'_{\Gamma_1\times\Gamma_2}(U_1\times U_2).$
%The projective tensor product of their original topologies is given by $\mathcal{D}'_{\Gamma_1}(U_1)\widetilde{\otimes}_\pi\mathcal{D}'_{\Gamma_2}(U_2)\simeq \mathcal{D}'_{(\Gamma_1^c\times\Gamma_2^c)^c}(U_1\times U_2).$
\end{proposition}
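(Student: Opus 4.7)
The strategy is to assume the tensor product continuous and derive a contradiction with the strictly finer hypocontinuity provided by Proposition~\ref{hypocontinuityImproved}. I first treat the closed-cone case (the first displayed map); the open-cone case is structurally analogous and uses the parallel nuclearity/completeness properties of the open-cone spaces given by Theorem~\ref{FAGeneral2}.

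Suppose for contradiction that $\otimes: X \times Y \to (Z, \mathcal{I})$ is continuous, where $X = \mathcal{D}'_{\Gamma_1,\Gamma_1}(U_1,\mathcal{F})$, $Y = \mathcal{D}'_{\Gamma_2,\Gamma_2}(U_2,\mathcal{F})$, $Z = \mathcal{D}'_{\Gamma,\Gamma}(U,\mathcal{F})$, and $\mathcal{I} \in \{\mathcal{I}_{ppp}, \mathcal{I}_{ibi}\}$. By the universal property of $\otimes_\pi$ and completeness of $(Z, \mathcal{I})$ (Theorem~\ref{FAGeneral2}), the map extends uniquely to a continuous linear $T : X \hat{\otimes}_\pi Y \to (Z, \mathcal{I})$. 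Nuclearity of $X$ and $Y$ (Theorem~\ref{FAGeneral2}), combined with Propositions~\ref{ApproximationProp} and \ref{PropEpsilon}, yields $X \hat{\otimes}_\pi Y = X \hat{\otimes}_\epsilon Y = X \epsilon Y$, and all completed hypocontinuous tensor products $\hat{\otimes}_\beta, \hat{\otimes}_\gamma, \hat{\otimes}_{\beta e}$ coincide with this same nuclear locally convex space.

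On the other hand, Proposition~\ref{hypocontinuityImproved} provides $\gamma$-hypocontinuity of $\otimes$ into the complete space $(\mathcal{D}'_{\gamma,\overline{\gamma}}(U,\mathcal{F}), \mathcal{I}_\beta)$, with $\mathcal{I}_\beta$ strictly finer than $\mathcal{I}_{ppp}$ by construction. This extends to a continuous $S : X \hat{\otimes}_\gamma Y \to (\mathcal{D}'_{\gamma,\overline{\gamma}}, \mathcal{I}_\beta)$; via the identification $X \hat{\otimes}_\gamma Y = X \hat{\otimes}_\pi Y$, the maps $T$ and $S$ agree on the dense subspace $X \otimes Y$ and hence everywhere as set-maps, with common image in $\mathcal{D}'_{\gamma,\overline{\gamma}} \subseteq Z$. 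A kernel-theorem-style identification of $X \epsilon Y$ with $\mathcal{D}'_{\gamma,\overline{\gamma}}(U,\mathcal{F})$ as sets, obtained from property~$(\epsilon)$, nuclearity, and the standard Schwartz vector-valued distribution identifications~\cite{Schwarz}, then forces the topologies $\mathcal{I}$ and $\mathcal{I}_\beta$ to coincide on $\mathcal{D}'_{\gamma,\overline{\gamma}}(U,\mathcal{F})$.

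The main obstacle is to contradict this by exhibiting a H\"ormander-style distribution $u \in \mathcal{D}'_{\gamma,\overline{\gamma}}(U,\mathcal{F})$ whose Fourier transform $\mathcal{F}((\varphi_1 \otimes \varphi_2) u)$ decays polynomially as demanded by every $\mathcal{I}_{ppp}$-seminorm but grows along a parabolic cone $C_{D,k,i} \cap (\mathbb{R}^{d_j} \times V_{3-j})$, violating the corresponding additional $\mathcal{I}_\beta$-seminorm. The hypothesis $(\Gamma_1, \Gamma_2) \notin \{(\emptyset,\emptyset), (\dot{T}^*U_1, \dot{T}^*U_2)\}$ is precisely what makes such a distribution exist: one of the cones must be proper and nontrivial so that a distribution with wave front set equal to a single ray sitting on a zero section --- the classical H\"ormander example --- lies inside $\mathcal{D}'_{\gamma,\overline{\gamma}}$ yet escapes the improved $\mathcal{I}_\beta$-topology. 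Writing $u$ explicitly, for instance as $\delta_0^{(m)}(x_1) \otimes v(x_2)$ with $v$ of slow Fourier decay in a direction of $\Gamma_2^c$ (or a suitable superposition adapted to the excluded trivial cases), and verifying the seminorm estimates, constitutes the technical heart of the argument; once $u$ is in hand the contradiction with the topological conclusion above is immediate.
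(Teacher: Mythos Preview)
Your overall architecture is close to the paper's, but there are two genuine gaps.

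First, the step ``a kernel-theorem-style identification of $X\epsilon Y$ with $\mathcal{D}'_{\gamma,\overline{\gamma}}(U,\mathcal{F})$ as sets'' is asserted, not proved, and it is precisely the nontrivial part. Property~$(\epsilon)$ tells you that membership in $X\epsilon Y$ can be tested scalarly inside $\mathcal{D}'(U)\epsilon\mathcal{D}'(U)$; it does \emph{not} hand you the equality $X\epsilon Y=\mathcal{D}'_{\gamma,\overline{\gamma}}$. The paper does not assume this identification: instead it builds an explicit inverse $Z\to X\epsilon Y$ by dualizing the hypocontinuous tensor map on the \emph{dual} cones $\Lambda_i$ (Proposition~\ref{hypocontinuity}), and then, via Mackey--Arens, upgrades the resulting algebraic isomorphism on the dual side $\mathcal{D}'_{\Lambda_1}\hat{\otimes}_{\beta e}\mathcal{D}'_{\Lambda_2}\simeq\mathcal{D}'_{\Lambda,\overline{\Lambda}}$ to a topological one. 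The contradiction is then obtained by showing that the completion of this dual tensor map is \emph{not} surjective. Without establishing surjectivity somewhere, your ``forces the topologies to coincide'' step does not follow: you only know $T$ and $S$ agree on a common image, not that this image is all of $\mathcal{D}'_{\gamma,\overline{\gamma}}$.

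Second, your proposed witness $\delta_0^{(m)}(x_1)\otimes v(x_2)$ is a pure tensor, hence lies in the range of the tensor map and, by the very statement of Proposition~\ref{hypocontinuityImproved}, automatically satisfies every $\mathcal{I}_\beta$-seminorm. It cannot separate $\mathcal{I}_\beta$ from $\mathcal{I}_{ppp}$. The paper's example is H\"ormander's $u_{\xi_0,s}$ (\cite[Ex.~8.2.4]{Hormander-97}) with $WF(u)=\{x_0\}\times\mathbb{R}_+^*\xi_0$ for a direction $\xi_0\in\Lambda_1\times\{0\}$ lying on the zero section; this distribution is \emph{not} a tensor product, belongs to $\mathcal{D}'_{\Lambda,\overline{\Lambda}}$, and its Fourier transform grows along the parabolic region $C_{D,k,1}\cap(\mathbb{R}^{d_1}\times V_2)$ for a cone $V_2$ disjoint from $\Lambda_2$, violating the extra $\mathcal{I}_\beta$-seminorm. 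The hypothesis on $(\Gamma_1,\Gamma_2)$ is used exactly to guarantee such a $\xi_0$ and such a $V_2$ exist simultaneously.
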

Of course, from the stated inclusions of cones,  our previous proposition proved they are (at least $\gamma$)-hypocontinuous. Note that the target cone $\Gamma$ is only slightly smaller than the one of proposition \ref{continuity}, the only difference being in zero sections, explaining this result is quite optimal for full continuity.
\begin{proof}
Assume for contradiction that the maps are continuous, so that one gets continuous maps (the second one by completion) \begin{align}\label{contMap}(\mathcal{D}'_{\Gamma_1,\Gamma_1}(U_1,\mathcal{F}),\mathcal{I})\hat{\otimes}_\pi (\mathcal{D}'_{\Gamma_2,\Gamma_2}(U_2,\mathcal{F}) ,\mathcal{I})\to (\mathcal{D}'_{\Gamma,\Gamma}(U,\mathcal{F}),\mathcal{I}),\end{align}

(resp. \begin{align}\label{contMap2}(\mathcal{D}'_{\Lambda_1,\overline{\Lambda_1}}(U_1,\mathcal{C}_1),\mathcal{I})\hat{\otimes}_\pi (\mathcal{D}'_{\Lambda_2,\overline{\Lambda_2}}(U_2,\mathcal{C}_2) ,\mathcal{I})\to (\mathcal{D}'_{\lambda,\overline{\lambda}}(U,\mathcal{C}),\mathcal{I}),\ \ \ \ \ \ \  \ )\end{align}
and moreover from the hypocontinuity in proposition \ref{hypocontinuity} (ii),(iii) we also have a continuous (%for $\mathcal{I}'=\mathcal{I}_{ppp}$ when $\mathcal{I}=\mathcal{I}_{iii}$, $\mathcal{I}'=\mathcal{I}_{ibi}$ when $\mathcal{I}=\mathcal{I}_{ibi}$, 
note that we know $\epsilon$-hypocontinuity is the same as $\gamma$-hypocontinuity in the stated case) \begin{align}\label{hypoMap}(\mathcal{D}'_{\Lambda_1,{\Lambda_1}}(U_1,\mathcal{K}),\mathcal{I}){\otimes}_{\beta e} (\mathcal{D}'_{\Lambda_2,{\Lambda_2}}(U_2,\mathcal{K}) ,\mathcal{I})\to (\mathcal{D}'_{\Lambda,\overline{\Lambda}}(U,\mathcal{K}),\mathcal{I})\end{align}(resp. \begin{align}\label{hypoMap2}(\mathcal{D}'_{\Gamma_1,\Gamma_1}(U_1,(\mathscr{O}_{\mathcal{C}_1})^o),\mathcal{I}_{ibi}){\otimes}_\beta (\mathcal{D}'_{\Gamma_2,\Gamma_2}(U_2,(\mathscr{O}_{\mathcal{C}_2})^o) ,\mathcal{I}_{ibi})\to (\mathcal{D}'_{\gamma,\gamma}(U,(\mathscr{O}_{\mathcal{C}})^o),\mathcal{I}_{ibi}).\ \ \ \ \ \ \  \ )\end{align}
We used the completion here in order to be able to go to the completed tensor product (no problem for $\gamma$ since it is closed).

We first show this would imply the topological or algebraic isomorphisms :
$$(\mathcal{D}'_{\Gamma_1,\Gamma_1}(U_1,\mathcal{F}),\mathcal{I})\hat{\otimes}_\pi (\mathcal{D}'_{\Gamma_2,\Gamma_2}(U_2,\mathcal{F}) ,\mathcal{I})\simeq_{alg} (\mathcal{D}'_{\Gamma,\Gamma}(U,\mathcal{F}),\mathcal{I}),$$
$$(\mathcal{D}'_{\Lambda_1,{\Lambda_1}}(U_1,\mathcal{K}),\mathcal{I})\hat{\otimes}_{\beta e} (\mathcal{D}'_{\Lambda_2,{\Lambda_2}}(U_2,\mathcal{K}) ,\mathcal{I})\simeq_{top} (\mathcal{D}'_{\Lambda,\overline{\Lambda}}(U,\mathcal{K}),\mathcal{I}),$$
(resp. $$(\mathcal{D}'_{\Lambda_1,\overline{\Lambda_1}}(U_1,\mathcal{C}_1),\mathcal{I})\hat{\otimes}_\pi (\mathcal{D}'_{\Lambda_2,\overline{\Lambda_2}}(U_2,\mathcal{C}_2) ,\mathcal{I})\simeq_{alg} (\mathcal{D}'_{\lambda,\overline{\lambda}}(U,\mathcal{C}),\mathcal{I}),$$ $$(\mathcal{D}'_{\Gamma_1,\Gamma_1}(U_1,(\mathscr{O}_{\mathcal{C}_1})^o),\mathcal{I}_{ibi})\hat{\otimes}_\beta (\mathcal{D}'_{\Gamma_2,\Gamma_2}(U_2,(\mathscr{O}_{\mathcal{C}_2})^o) ,\mathcal{I}_{ibi})\simeq_{top} (\mathcal{D}'_{\gamma,\gamma}(U,(\mathscr{O}_{\mathcal{C}})^o),\mathcal{I}_{ibi}).\ \ \ \ \ \ \  \ )$$
and we will get a contradiction from these isomorphisms.
%of tensor products above and universal properties, we have continuous maps  $\mathcal{D}'_{\Gamma_1}(U_1)\widetilde{\otimes}_\pi\mathcal{D}'_{\Gamma_2}(U_2)\to \mathcal{D}'_{(\Gamma_1^c\times\Gamma_2^c)^c}(U_1\times U_2),\mathcal{D}'_{\Gamma_1}(U_1){\otimes}_\beta\mathcal{D}'_{\Gamma_2}(U_2)\to \mathcal{D}'_{\Gamma_1\times\Gamma_2}(U_1\times U_2),$[write the associated statement for the bornological topology once explicited]
%$\mathcal{E}'_{\Lambda_1}(U_1){\otimes}_\beta\mathcal{E}'_{\Lambda_2}(U_2)\to \mathcal{E}'_{\Lambda_1\times\Lambda_2}(U_1\times U_2)$ [check if lack of completeness is an issue],$\widehat{\mathcal{E}'_{\Lambda_1}(U_1)}\widetilde{\otimes}_\pi\widehat{\mathcal{E}'_{\Lambda_2}(U_2)}\to \widehat{\mathcal{E}'_{(\Lambda_1^c\times\Lambda_2^c)^c}(U_1\times U_2)}$

We start by proving the stated isomorphisms. From \eqref{contMap}, \eqref{contMap2} we have one map for the algebraic isomorphisms and we have to build the inverse. %the topology induced by  $(\mathcal{D}'_{\Gamma,\Gamma}(U,\mathcal{F}),\mathcal{I})$ on $(\mathcal{D}'_{\Gamma_1,\Gamma_1}(U_1,\mathcal{F})){\otimes} (\mathcal{D}'_{\Gamma_2,\Gamma_2}(U_2,\mathcal{F}) $ (resp. $(\mathcal{D}'_{\lambda,\overline{\lambda}}(U,\mathcal{C}),\mathcal{I})$ on $\mathcal{D}'_{\Lambda_1,\overline{\Lambda_1}}(U_1,\mathcal{C}_1)\otimes \mathcal{D}'_{\Lambda_2,\overline{\Lambda_2}}(U_2,\mathcal{C}_2)$) is weaker than the $\pi$ topology, which is equal to the $\epsilon$ topology by nuclearity (from Theorem \ref{FAGeneral2}), thus, for the first statements of isomorphisms, it suffices to check that the induced topology is also stronger than the $\epsilon$ topology (since the density of the algebraic tensor product in $\mathcal{D}'_{\Gamma,\Gamma}(U,\mathcal{F})$ is well known for instance using the case $\Gamma=\emptyset$ and our density statements in \cite[lemma 22]{Dab14a}\begin{comment}lemma \ref{normality}\end{comment}). 
Since the spaces for which we take projective tensor products are complete nuclear and thus have the approximation property of Grothendieck, it suffices to build a continuous map to the Schwartz $\epsilon$ product (which coincides with the completed projective tensor product, and by density and agreement on smooth maps, it will canonically be the inverse) 
$$(\mathcal{D}'_{\Gamma,\Gamma}(U,\mathcal{F}),\mathcal{I})\to(\mathcal{D}'_{\Gamma_1,\Gamma_1}(U_1,\mathcal{F}),\mathcal{I})\epsilon (\mathcal{D}'_{\Gamma_2,\Gamma_2}(U_2,\mathcal{F}) ,\mathcal{I})\simeq \mathcal{D}'_{\Gamma_1,\Gamma_1}(U_1,\mathcal{F})\hat{\otimes}_\epsilon \mathcal{D}'_{\Gamma_2,\Gamma_2}(U_2,\mathcal{F}) .$$
(resp. $$(\mathcal{D}'_{\lambda,\overline{\lambda}}(U,\mathcal{C}),\mathcal{I})\to (\mathcal{D}'_{\Lambda_1,\overline{\Lambda_1}}(U_1,\mathcal{C}_1),\mathcal{I})\epsilon (\mathcal{D}'_{\Lambda_2,\overline{\Lambda_2}}(U_2,\mathcal{C}_2),\mathcal{I}) \simeq \mathcal{D}'_{\Lambda_1,\overline{\Lambda_1}}(U_1,\mathcal{C}_1)\hat{\otimes}_\epsilon \mathcal{D}'_{\Lambda_2,\overline{\Lambda_2}}(U_2,\mathcal{C}_2) \ \ \ \ \ ),$$
But as we reminded, $(\mathcal{D}'_{\Gamma_1,\Gamma_1}(U_1,\mathcal{F}),\mathcal{I})\epsilon (\mathcal{D}'_{\Gamma_2,\Gamma_2}(U_2,\mathcal{F}),\mathcal{I})$ is the set of $\epsilon$-hypocontinuous bilinear maps on the product of Arens duals
$$[(\mathcal{D}'_{\Gamma_1,\Gamma_1}(U_1,\mathcal{F}),\mathcal{I})_{c}'\otimes_{\beta e}(\mathcal{D}'_{\Gamma_2,\Gamma_2}(U_2,\mathcal{F}),\mathcal{I})_{c}']'\simeq [(\mathcal{D}'_{\Lambda_1,\overline{\Lambda_1}}(U_1,\mathcal{K}),\mathcal{I})\hat{\otimes}_{\beta e} (\mathcal{D}'_{\Lambda_2,\overline{\Lambda_2}}(U_2,\mathcal{K}) ,\mathcal{I})]',$$
since by  propositon \ref{FAGeneral2} the Arens topologies coincide with $\mathcal{I}$ on the duals (and we went to the completions for completed tensor product for a more uniform notation). Respectively, $(\mathcal{D}'_{\Lambda_1,\overline{\Lambda_1}}(U_1,\mathcal{C}_1),\mathcal{I})\epsilon (\mathcal{D}'_{\Lambda_2,\overline{\Lambda_2}}(U_2,\mathcal{C}_2),\mathcal{I})$ is the set of $\epsilon$-hypocontinuous bilinear maps (which coincides here with hypocontinuous by  the computation of equicontinuous sets in \cite[lemma 27]{Dab14a} \begin{comment}lemma \ref{bornologyIdentities}\end{comment}
) on the product of Arens duals
$$[(\mathcal{D}'_{\Lambda_1,\overline{\Lambda_1}}(U_1,\mathcal{C}_1))_{c}'\otimes_\beta(\mathcal{D}'_{\Lambda_2,\overline{\Lambda_2}}(U_2,\mathcal{C}_2))_{c}']'\simeq [(\mathcal{D}'_{\Gamma_1,\Gamma_1}(U_1,(\mathscr{O}_{\mathcal{C}_1})^o),\mathcal{I}_{ibi})\hat{\otimes}_{\beta } (\mathcal{D}'_{\Gamma_2,\Gamma_2}(U_2,(\mathscr{O}_{\mathcal{C}_2})^o) ,\mathcal{I}_{ibi})]',$$
since by \cite[Corol 26]{Dab14a} %corollary \ref{FAClosedOpen2}
 the Arens topology coincides with $\mathcal{I}_{ibi}$.

The maps we want are thus obtained by composition with the map \eqref{hypoMap} (resp. \eqref{hypoMap2}), %and the canonical map $[(\mathcal{D}'_{\Lambda_1,\Lambda_1}(U_1,\mathcal{K}),\mathcal{I}_{iii})\hat{\otimes}_{\beta e} (\mathcal{D}'_{\Lambda_2,\Lambda_2}(U_2,\mathcal{K}) ,\mathcal{I}_{iii})]\to [(\mathcal{D}'_{\Lambda_1,\Lambda_1}(U_1,\mathcal{K}),\mathcal{I}_{iii})\hat{\otimes}_{\beta} (\mathcal{D}'_{\Lambda_2,\Lambda_2}(U_2,\mathcal{K}) ,\mathcal{I}_{iii})]$ in the first case. 
We could check the continuity of the first but we won't need it. %It remains to check continuity. [We skip in the second case since not necessary COMPLETE or don't state it] 

Again for the second isomorphism, both sides are complete and thus we have only to identify the topologies on the algebraic tensor product which is dense on both sides. % $(\mathcal{D}'_{\Lambda_1,{\Lambda_1}}(U_1,\mathcal{K}),\mathcal{I}_{iii}){\otimes} (\mathcal{D}'_{\Lambda_2,{\Lambda_2}}(U_2,\mathcal{K}) ,\mathcal{I}_{iii})$.
 By the maps \eqref{hypoMap}, \eqref{hypoMap2} again, the topology induced by the bornological tensor product is stronger. But we just identified their duals (not topologically though) %, since there is a bornologification on one side)%(this uses the description of the projective and epsilon product, thus the fact proved in \cite[lemma 27]{Dab14a} \begin{comment}lemma \ref{bornologyIdentities}\end{comment}
% that equicontinuous sets and strongly bounded sets in $(\mathcal{D}'_{\Lambda_i,\overline{\Lambda_i}}(U_i,\mathcal{K}),\mathcal{I}_{iii})$ coincide)
 thus both are weaker than the Mackey topology of duality by Mackey-Arens theorem. Finally, in \cite[Corol 26]{Dab14a} %corollary \ref{FAClosedOpen2} %(or even in this case \cite[lemma 10]{BrouderDabrowski})
  we checked the Mackey topology on $\mathcal{D}'_{\Lambda,\overline{\Lambda}}(U,\mathcal{K})$  (resp. $\mathcal{D}'_{\gamma,\overline{\gamma}}(U,(\mathscr{O}_\mathcal{C})^o)$) is the same as $\mathcal{I}_{iii}=\mathcal{I}_{ibi}=\mathcal{I}$ (resp. $\mathcal{I}_{ibi}$) thus all topologies coincide as expected. (Note we may also use \cite[\S 21. 4.(5)]{Kothe} to identify the Mackey topology of a space and the one of the completion on the tensor product dense subspace of $(\mathcal{D}'_{\Lambda,\overline{\Lambda}}(U,\mathcal{K}),\mathcal{I}_{iii})$ etc). %The second case is similar.
 We thus deduced the second isomorphism. %[CHECK PROBABLY DON'T NEED TOPOLOGICAL IDENTIFICATION IN THE FIRST ISOMORPHISM TO GET THE ISOM IN THE SECOND, SINCE ARENS MACKEY ONLY USES ALGEBRAIC IDENTIFICATIONS OF DUALS].

Let us now get our contradiction in showing the completion of the  maps \eqref{hypoMap}, \eqref{hypoMap2} are actually not surjective.
But from our previous proposition \ref{hypocontinuityImproved}, after completing the map obtained on induced tensor products, we checked any $u$ in the image satisfies the supplementary bound (taking $D=3^k$ as we can)
$$\text{sup}_{(\xi,\eta)\in C_{3^k,k,1}, \eta\in V_2}(1+|\xi|+|\eta|)^l|\mathcal{F}(u(\varphi_1\otimes\varphi_2))(\xi,\eta)|<\infty,$$ as soon as $\text{supp}(\varphi_2)\times V_2\cap {\Lambda_2}=\emptyset.$ %[CHECK THIS]
 It thus suffices to prove this condition is not satisfied by any $u$ in  $(\mathcal{D}'_{\Lambda,\overline{\Lambda}}(U,\mathcal{K}),\mathcal{I}_{iii}),$ (respectively in the second case exchanging $\Lambda_i$'s and $\gamma_i$'s, we only write the first case and leave the notational changes to the reader in the second). Note that for $\mathcal{I}_{ibi}$ this is not written in the previous proposition, but in the closed case, we have full hypocontinuity which goes to the bornologification and in the open case, we reason as in proposition \ref{hypocontinuity} (ii) where we only have to check separate continuity from barelledness, and then our proposition \ref{hypocontinuityImproved} is enough.

We will use a specific case of H\"ormander's example 8.2.4 \cite[p.~188]{Hormander-97} with $s=-2l, \rho=1/k.$ Fix $\xi_0$ a vector, $|\xi_0|=1$ such that say $(x_0,\xi_0)\in \Lambda_1\times U_2\times \{0\}.$ This points exists of course if $\Lambda_1\neq \emptyset,$ (we can assume that up to exchanging the indexes $1,2$). Of course,  we take $x_0=(x,y)\in U_1\times U_2$ and since we are free about $y$, we choose $y$ such that there is a non-empty closed cone $V_2$ with $\{y\}\times V_2\cap \Lambda_2=\emptyset$ and it exists since $\Lambda_2^c\neq \emptyset$ (without loss of generality, since either $\Lambda_1^c= \emptyset,$ and this is forced by our assumption, or otherwise $\Lambda_1^c\neq \emptyset,$  and if $\Lambda_2^c= \emptyset$ then $\Lambda_2\neq \emptyset$ and we could exchange again indexes $1,2$ to satisfy our two assumptions). We of course take $\varphi_1,\varphi_2$ smooth compactly supported on  a neighborhood of $x,y$ respectively.

Let $\chi\in C^\infty(\R,[0,1])$ be equal to 1
in $(-\infty,1/2)$ and to 0 in $(1,+\infty)$, 
with $0\le \chi\le 1$. %Fix $0 < \rho < 1$,
%Let $\eta\in \R^n$ be a unit vector, 
Take an orthonormal basis 
$(e_1=\xi_0,e_2,...,e_d)$ such that $(e_{1},...,e_{d_1})$ is an orthonormal basis of $ \R^{d_1}\times\{0\},$ $(e_{d_1+1},...,e_{d})$ an orthonormal basis of $\{0\}\times \R^{d_2},$  and write coordinates in this coordinate system, so that for instance $C_{3^k,k,1}=\{( |\xi_{d_1+1}|^2+...+|\xi_{d}|^2)\geq\frac{1}{3}(|\xi_1|^2+...+|\xi_{d_1}|^2)^{1/k}\}$.
Note that $A=\{( \xi_2=...=\xi_{d_1}=0, \frac{1}{2}|\xi_1|^{2\rho}\geq (|\xi_{d_1+1}|^2+...+|\xi_{d}|^2)\geq\frac{1}{3}|\xi_1|^{2/k}, (\xi_{d_1+1},...,\xi_{d})\in V_2\}\subset C_{3^k,k}\cap \{ (|\xi_{d_1+1}|^2+...+|\xi_{d}|^2)/|\xi_1|^{2\rho}\leq \frac{1}{2}\} $ has  non-empty intersection with every complement of compact sets (for $\rho=1/k$).

Define $u_{\xi_0,s}\in \mathcal{S}'(\R^n)$, for $s\in \R$, by
\begin{eqnarray*}
\widehat{u_{\xi_0,s}}(\xi) &=& (1-\chi(\xi_1)) \xi_1^{-s}
  \chi((\xi_2^2+\dots+\xi_n^2)/\xi_1^{2\rho}).
\end{eqnarray*}
 Then
$WF(u_{\xi_0,s})=\{(0;\xi); \xi_2=\dots=\xi_n=0, \xi_1>0\}=\{0\}\times \R_+^*\xi_0$
and $u_{\xi_0,s}$ coincides with a function in $\mathcal{S}(\R^n)$
outside a neighborhood of the origin~\cite[p.~188]{Hormander-97}. Of course we can consider the obvious translation $u=\chi(T_{x_0}u_{\xi_0,s})$, $\chi$ smooth with compact support $\chi(x_0)=1$ with $DWF(u)=WF(u)=\{x_0\}\times \R_+^*\xi_0$ so that $u\in \mathcal{D}'_{\Lambda,\overline{\Lambda}}(U,\mathcal{K}).$

Moreover, since $w=(1-\chi(\varphi_1\otimes \varphi_2))(T_{x_0}u_{\xi_0,s})\in \mathcal{S}(\R^d)$ from the results above, $$(1+|\xi|+|\eta|)^l|\mathcal{F}(u(\varphi_1\otimes\varphi_2))(\xi,\eta)|\geq (1+|\xi|+|\eta|)^l|\mathcal{F}(T_{x_0}u_{\xi_0,s})(\xi,\eta)| -(1+|\xi|+|\eta|)^l|\mathcal{F}(w)(\xi,\eta)|$$ and $(\xi,\eta)\mapsto (1+|\xi|+|\eta|)^l|\mathcal{F}(w)(\xi,\eta)|$ which is rapidly decreasing is not a problem at infinity.
Since we can bound below the sup we are interested in  a supremum on $A$, on which $\chi((\xi_2^2+\dots+\xi_n^2)/\xi_1^{2\rho})=1$, one deduces :$$\text{sup}_{(\xi,\eta)\in C_{3^k,k,1}, \eta\in V_2}(1+|\xi|+|\eta|)^l|\mathcal{F}(T_{x_0}u_{\xi_0,s})(\xi,\eta)|\geq \text{sup}_{(\xi,\eta)\in A}(1+|\xi|+|\eta|)^l|(1-\chi(\xi_1)) \xi_1^{-s}|=\infty,$$ at least for $s$ negative.
Thus $u$ is not in the image of the completion of  \eqref{hypoMap} concluding to our statement and to the expected contradiction.
\end{proof}

\section{Control of tensor products by wave front set conditions}
Our final goal in the next paper of this series will be to describe all the tensor products we considered here in terms of microlocal conditions, notably in order to obtain extra functional analytic properties  not given by the general theory. But since the lack of continuity presented in proposition \ref{NONContinuity} gave an obstruction to an easy identification of these tensor products (see the isomorphism from which we obtained a contradiction in the proof), we will still need in our applications a way of checking that a distribution gives an element of a completed tensor product by checking a (dual) wave front set condition. This is the goal of this section to give such conditions. 

For our convenience in order to state results in the context of most topologies of Theorem \ref{FAGeneral2}, we will write for a cone $\gamma$ \begin{equation}\label{notation}\mathcal{J}_1=\mathcal{J}_2=\mathcal{I}_{ppp}, \mathcal{J}_3=\mathcal{J}_4=\mathcal{I}_{ibi}, \ \ \ \gamma(1)=\gamma(3)=\gamma,\gamma(2)=\gamma(4)=\overline{\gamma},\end{equation}

and $\overline{j}=j$ for $j=1,4$, $\overline{j}=5-j$ for $j=2,3$ so that our duality results (for instance for Arens duals) are conveniently summarized as $$(\mathcal{D}'_{\gamma,\gamma(j)}(U,\mathcal{C};E),\mathcal{J}_{j})'_c=(\mathcal{D}'_{-\gamma^c,-\gamma^c(\overline{j})}(U,(\mathscr{O}_{\mathcal{C}})^o;E^*),\mathcal{J}_{\overline{j}}).$$

\begin{proposition}\label{InjectionProjectiveProduct}
Let $\gamma_i%\subset \Lambda^{(i)}\subset \overline{\gamma_i} 
\subset \dot{T}^*U_i$ be %$\mathbf{\Delta_2^0}$-
cones and $\mathcal{C}_i$ enlargeable polar families of closed sets of $U_i$, $E_i\to U_i$ vector bundles. %Let $\gamma_i\subset \dot{T}^*U_i$%U_i\times\R^n-\{0\}$ be closed cones, $\Lambda_i\subset U_i\times\R^n-\{0\}$ be open cones.
Let $\gamma=(\gamma_1^c\dot{\times}\gamma_2^c)^c,$ $\Gamma=((\overline{\gamma}_1)^c{\times}(\overline{\gamma}_2)^c)^c$  %$\Lambda=\Lambda^{(1)}\dot{\times}\Lambda^{(2)}$ 
and $\mathcal{C}=(\mathcal{C}_1^o\times \mathcal{C}_2^o)^{o}.$
There are continuous injections for any $j\in\{1,2,4\}$: \begin{align*}(&\mathcal{D}'_{\gamma,\gamma(j)}(U_1\times U_2, \mathcal{C};E_1\otimes E_2),\mathcal{J}_{j})\hookrightarrow(\mathcal{D}'_{\gamma_1,\gamma_1(j)}(U_1,\mathcal{C}_1;E_1),\mathcal{J}_{j})\epsilon (\mathcal{D}'_{\gamma_2,\gamma_2(j)}(U_2,\mathcal{C}_2;E_2),\mathcal{J}_{j})\\&\ \ \ \hookrightarrow(\mathcal{D}'_{\gamma_1,\gamma_1(j)}(U_1,\mathcal{C}_1;E_1),\mathcal{J}_{j})\hat{\otimes}_{\pi} (\mathcal{D}'_{\gamma_2,\gamma_2(j)}(U_2,\mathcal{C}_2;E_2),\mathcal{J}_{j})\hookrightarrow (\mathcal{D}'_{\Gamma,\Gamma}(U_1\times U_2, \mathcal{F};E_1\otimes E_2),\mathcal{I}_{ppp}),\end{align*}
and for any $j\in\{1,3\}$ :%also when $\gamma_2$ is closed or open (assuming in this second case either $\mathcal{C}_2$ or $\mathscr{O}_{\mathcal{C}_2}^o$ countably generated) :
\begin{align*}(\mathcal{D}'_{\gamma,{\gamma}}(U_1\times U_2, \mathcal{C}),\mathcal{J}_{j})\hookrightarrow(\mathcal{D}'_{\gamma_1,\overline{\gamma_1}}(U_1,\mathcal{C}_1),\mathcal{I}_{ppp})\epsilon (\mathcal{D}'_{\gamma_2,{\gamma_2}}(U_2,\mathcal{C}_2),\mathcal{J}_{j}).\end{align*} %and when both $\gamma_1,\gamma_2$ are closed :
%$$(\mathcal{D}'_{\gamma,{\gamma}}(U_1\times U_2, \mathcal{C}),\mathcal{I}_{b})\hookrightarrow(\mathcal{D}'_{\gamma_1,{\gamma_1}}(U_1,\mathcal{C}_1),\mathcal{I}_{b}=\mathcal{I}_{iii}^{born})\epsilon (\mathcal{D}'_{\gamma_2,{\gamma_2}}(U_2,\mathcal{C}_2),\mathcal{I}_{b}=\mathcal{I}_{iii}^{born}).$$
%Moreover, the first injection is algebraically an isomorphism (at least when $\mathcal{F}=\mathcal{C}_i$) and...
\end{proposition}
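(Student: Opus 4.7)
The plan is to realize each embedding by producing, for $u$ in the source, an explicit element of the target via partial pairing with elements of the Arens duals identified in theorem \ref{FAGeneral2}, with continuity and hypocontinuity controlled by the tensor product hypocontinuity of proposition \ref{hypocontinuity}. As preparation, I would record the cone identity $-\gamma^c=(-\gamma_1^c)\,\dot{\times}\,(-\gamma_2^c)$ (and its closure variant), which comes straight from the definition of $\dot{\times}$; with this in hand, for $u\in(\mathcal{D}'_{\gamma,\gamma(j)}(U_1\times U_2,\mathcal{C};E_1\otimes E_2),\mathcal{J}_j)$, $u$ is a continuous linear form on the Arens dual $(\mathcal{D}'_{-\gamma^c,-\gamma^c(\overline{j})},\mathcal{J}_{\overline{j}})$. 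Invoking proposition \ref{hypocontinuity} for the pair of cones $-\gamma_i^c$ with topology $\mathcal{J}_{\overline{j}}$ --- case (ii) when $\overline{j}\in\{3,4\}$ (both topologies $\mathcal{I}_{ibi}$), and case (iv) together with the identity $\mathcal{I}_b=\mathcal{I}_{ibi}$ on such dual support classes from theorem \ref{FAGeneral2} when $\overline{j}\in\{1,2\}$ --- yields (at least $\gamma$-)hypocontinuity of the tensor map into the Arens dual of the source. The bilinear form $\Phi_u(v_1,v_2):=\langle u,v_1\otimes v_2\rangle$ is therefore well defined and $\epsilon$-hypocontinuous: equicontinuous sets in the Arens duals are absolutely convex compact by lemma \ref{BoundedRelated}, so $\gamma$-hypocontinuity suffices to test on them. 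This exhibits $\Phi_u$ as an element of the $\epsilon$-product; continuity of $u\mapsto\Phi_u$ follows from the definition of the $\epsilon$-product topology (uniform convergence on products of equicontinuous subsets) and the control of $|\langle u,v_1\otimes v_2\rangle|$ by the hypocontinuity seminorms on the target of the tensor map; injectivity follows by density of smooth tensors $\phi_1\otimes\phi_2$ in $\mathcal{D}(U_1\times U_2;(E_1\otimes E_2)^*)$.

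The second arrow of the first claim is in fact a topological isomorphism: our spaces are complete (theorem \ref{FAGeneral2}), nuclear (ibid.), and possess the sequential approximation property (proposition \ref{ApproximationProp}), so the projective and $\epsilon$ topologies on the algebraic tensor product agree with the $\epsilon$-product topology. For the third arrow I would apply proposition \ref{continuity} to the closed cones $\overline{\gamma_i}$, yielding a continuous bilinear map into $(\mathcal{D}'_{\Gamma,\Gamma}(U,\mathcal{F};E_1\otimes E_2),\mathcal{I}_{iii})$; precomposing with the continuous inclusions $(\mathcal{D}'_{\gamma_i,\gamma_i(j)}(U_i,\mathcal{C}_i),\mathcal{J}_j)\hookrightarrow(\mathcal{D}'_{\overline{\gamma_i},\overline{\gamma_i}}(U_i,\mathcal{F}),\mathcal{I}_{ppp})$ and invoking the universal property of $\hat{\otimes}_\pi$ (using that $\mathcal{I}_{iii}$ at the target is at least as fine as $\mathcal{I}_{ppp}$) gives the claim, with injectivity again from density of smooth tensors. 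The second claim of the proposition follows the same template as the first arrow but with asymmetric topologies on the factors; the relevant hypocontinuity is supplied by case (iv) of proposition \ref{hypocontinuity} for $j=1$ (with $\alpha_1=b=ibi$ on the first Arens factor and $\alpha_2=\beta=ppp$) and by a mild variant for $j=3$ obtained by composing with the continuous identity from $\mathcal{I}_{iii}$ to $\mathcal{I}_{ppp}$.

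The main technical obstacle is the case-by-case bookkeeping: for each pair $(j,\overline{j})$, one must identify precisely which case of proposition \ref{hypocontinuity} applies, since the Arens duals mix the topologies $\mathcal{I}_{ppp}$, $\mathcal{I}_{ibi}$ and $\mathcal{I}_b$ (which coincide only on certain support classes by theorem \ref{FAGeneral2}), and then verify that the hypocontinuity furnished there does restrict to $\epsilon$-hypocontinuity on equicontinuous subsets via lemma \ref{BoundedRelated}. Once this matching is settled, every sub-case is resolved by the uniform template sketched above.
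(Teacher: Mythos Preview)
Your proposal is correct and follows essentially the same route as the paper: build the first injection by dualizing the hypocontinuous tensor map on Arens duals (proposition \ref{hypocontinuity}, with the cone identity $-\gamma^c=(-\gamma_1^c)\dot{\times}(-\gamma_2^c)$ and the support identity $(\mathscr{O}_\mathcal{C})^o=(\mathscr{O}_{\mathcal{C}_1}^o\times\mathscr{O}_{\mathcal{C}_2}^o)^{oo}$), identify $\epsilon$- and projective products by nuclearity and the approximation property, and obtain the last arrow from proposition \ref{continuity} on the closures $\overline{\gamma_i}$. The paper makes the continuity step slightly more explicit --- one must check that the tensor map sends products of equicontinuous sets to equicontinuous sets, not merely that it is hypocontinuous --- and carries out the case analysis for $j\in\{1,2,4\}$ and $j\in\{1,3\}$ via lemma \ref{BoundedRelated}; but you have correctly identified this as the only real bookkeeping and your template resolves it.
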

\begin{proof}
We know from \cite[Prop 33]{Dab14a} \begin{comment}proposition \ref{FAGeneral}\end{comment}
 that all spaces involved in the first line are nuclear for the stated topologies%$\mathcal{I}_{ppp}$
 , thus completed projective product and injective product coincide, and moreover, they have the approximation property (see e.g.\cite[p 110]{Schaeffer}), thus by \cite[prop 11 p 46]{Schwarz}, the $\epsilon$ product is a dense subspace of this completed injective product, we only have to build a map to/from Schwartz $\epsilon$-product (since the last space $(\mathcal{D}'_{\Gamma,\Gamma}(U_1\times U_2, \mathcal{F};E_1\otimes E_2),\mathcal{I}_{ppp})$ is complete). %as in the second part of the proof of our previous theorem.
 Using injectivity of the $\epsilon$ product of maps \cite[p 20]{Schwarz} (and the equality $\mathcal{I}_{pip}=\mathcal{I}_{iii}$ in the closed DWF/WF case and injections in \cite[lemma 21]{Dab14a}\begin{comment}lemma \ref{contuinuousdenseInj}\end{comment}
) we have $$(\mathcal{D}'_{\gamma_1,\gamma_1(j)}(U_1,\mathcal{C}_1;E_1),\mathcal{J}_{j})\epsilon \mathcal{D}'_{\gamma_2,\gamma_2(j)}(U_2,\mathcal{C}_2;E_2),\mathcal{J}_{j})\hookrightarrow (\mathcal{D}'_{\overline{\gamma_1},\overline{\gamma_1}}(U_1,\mathcal{C}_1;E_1),\mathcal{I}_{ppp})\epsilon (\mathcal{D}'_{\overline{\gamma_2},\overline{\gamma_2}}(U_2,\mathcal{C}_2;E_2),\mathcal{I}_{ppp}).$$ Using also our proposition \ref{continuity} and composing with the previous one, the second map is thus known (and injective via inclusions in distributions).

But thanks to the identifications of the Arens duals in the same corollary \begin{align*}(\mathcal{D}'_{\gamma_1,\gamma_1(j)}(U_1,\mathcal{C}_1;E_1)&,\mathcal{J}_{j})\epsilon (\mathcal{D}'_{\gamma_2,\gamma_2(j)}(U_2,\mathcal{C}_2;E_2),\mathcal{J}_{j})\\&\simeq [(\mathcal{D}'_{-\gamma_1^c,-\gamma_1^c(\overline{j})}(U_1,(\mathscr{O}_{\mathcal{C}_1})^o;E_1^*),\mathcal{J}_{\overline{j}})\otimes_{\beta e} (\mathcal{D}'_{-\gamma_2^c,-\gamma_2^c(\overline{j})}(U_2,(\mathscr{O}_{\mathcal{C}_2})^o;E_2^*),\mathcal{J}_{\overline{j}})]'.\end{align*}

Our first injective map is thus built in dualizing the dense range tensor multiplication map of proposition \ref{hypocontinuity} (ii,iii), using also $(\mathscr{O}_\mathcal{C})^o=(\mathscr{O}_{(\mathcal{C}_1^{o}\times \mathcal{C}_2^{o})^{o}})^o=(\mathscr{O}_{\mathcal{C}_1}^o\times\mathscr{O}_{\mathcal{C}_2}^o)^{oo}$,  (as is easily checked, using crucially enlargeability for $\supset$). %    More precisely, one gets in this way a map from $(\mathcal{D}'_{-\gamma^c,-\gamma^c}(U_1,(\mathscr{O}_{\mathcal{C}})^o),\mathcal{I}_{b})'.$
%This is exactly the space expected by theorem \ref{FAGeneral}.
To prove continuity, it suffices to prove the tensor multiplication map sends tensor products of equicontinuous sets to equicontinuous sets. For $j=4$ this is obvious since equicontinuous sets are the same as bounded sets. For $j=1,2$ % recall %it suffices to identify
we identified in lemma \ref{BoundedRelated} equicontinuous sets in $\mathcal{D}'_{-\gamma_i^c,-\gamma_i^c}(U_i,(\mathscr{O}_{\mathcal{C}_i})^o;E_i^*)$
%We know from the description of polars in the inductive limit dual to a projective limit (Kothe \cite[\S 22.7(5)]{Kothe}) that $V_i^o$ can be included in a polar thus 
to bounded sets of $(\mathcal{D}'_{\Pi_i,\Pi_i}(U_i,(\mathscr{O}_{\mathcal{C}_i})^o ;E_i^*),\mathcal{I}_{H,iii})$ with a closed 
$\Pi_i\subset -\gamma_i^c$. But, since we have hypocontinuity at level of each term of the defining inductive limits $\mathcal{I}_{iii},\mathcal{I}_{ibi}$, and since an equicontinuous set $B$ defines an equicontinuous family of maps $u\otimes \cdot$ which thus preserve those bounded sets of $(\mathcal{D}'_{\Pi_i,\Pi_i}$ and thus also preserve equicontinuous sets in $\mathcal{D}'_{-\gamma_i^c,-\gamma_i^c}(U_i,(\mathscr{O}_{\mathcal{C}_i})^o;E_i^*)$, as expected.

For the second inclusion, we can translate what we are looking for through duality (recall now $j=1,3$): %again, we have on one side a complete nuclear space, thus it suffices to get a map to the $\epsilon$-product (both in the open and closed case) :
$$(\mathcal{D}'_{\gamma,{\gamma}}(U_1\times U_2, \mathcal{C}),\mathcal{J}_{j})\hookrightarrow%(\mathcal{D}'_{\gamma_1,\overline{\gamma_1}}(U_1,\mathcal{C}_1),\mathcal{I}_{pip})\epsilon \mathcal{D}'_{\gamma_2,{\gamma_2}}(U_2,\mathcal{C}_2),\mathcal{I}_{b})\simeq  
[(\mathcal{D}'_{-\gamma_1^c,-\gamma_1^c}(U_1,(\mathscr{O}_{\mathcal{C}_1})^o),\mathcal{I}_{b})\otimes_{\beta e} (\mathcal{D}'_{-\gamma_2^c,-{\gamma_2^c}(j)}(U_2,(\mathscr{O}_{\mathcal{C}_2})^o),\mathcal{J}_{\overline{j}}=\mathcal{I}_{ppp})]'.$$

We used the computation of Arens duals summarized in %in corollary \ref{FAClosedOpen2}, lemma \ref{bornologyIdentitiesQLB} and theorem 
Theorem \ref{FAGeneral2}.% and also for identifying equicontinuous sets with bounded sets in 

Our second injective map is thus built in dualizing the dense range tensor multiplication map of proposition \ref{hypocontinuity} (iv). To check continuity, again, it suffices to see the tensor multiplication map sends products of equicontinuous sets to equicontinuous sets. But fixing $B$ equicontinuous in the first space (thus we can assume $\gamma_1$ closed by our identification of equicontinuous sets), $u\otimes.$ gives an equicontinuous family, and in the case $j=1$ via the identification of equicontinuous sets to those having wave front set in a closed subcone in the inductive limit, this reduces to the case $j=2$ (with even closed $\gamma_2$), in which case bounded sets and equicontinuous sets agree (at target partly from our reduction to $\gamma_1$ closed) . This concludes to our last result. %in the closed case and $(vi)$ in the open case, from which we get a map starting (for some 
\end{proof}

\section{Spaces of multilinear maps}

Our description of tensor products gives us access to abstract description of spaces of hypocontinuous multilinear maps on generalized H\"ormander spaces of distributions. For convenience we still use notation \eqref{notation}.

We summarize the descriptions of the spaces of interests which are $\epsilon$-products and give relations with spaces controlled by wave front set conditions. All spaces of hypocontinuous multilinear maps will be given their canonical topology as $\epsilon$ products, namely of uniform convergence on products of equicontinuous sets, see \cite{Schwarz}.

\begin{proposition}\label{multilinear}Let $\lambda_i=-\gamma_i^c, \lambda=-\gamma^c$ be cones, $E_i\to U_i,E\to U$ vector bundles  and $k_i\in\{1,2,3,4\},i=1,...,n;k\in\{2,4\}$
\begin{align*}L&({\bigotimes}_{\beta e, i\in [1,n]} (\mathcal{D}'_{\gamma_i,{\gamma}_i(k_i)}(U_i,\mathcal{C}_i;E_i),\mathcal{J}_{k_i}); (\mathcal{D}'_{\gamma,\overline{\gamma}}(U,\mathcal{C};E),\mathcal{J}_{k})) \\&\simeq  [\mathbb{{\varepsilon}}_{ i\in [1,n]} (\mathcal{D}'_{\lambda_i,{\lambda_i}(\overline{k_i})}(U_i,(\mathscr{O}_{\mathcal{C}_i})^o;E_i^*),\mathcal{J}_{\overline{k_i}})]\varepsilon (\mathcal{D}'_{\gamma,\overline{\gamma}}(U,\mathcal{C};E),\mathcal{J}_{k})%\\&=L({\bigotimes}_{\beta e, i\in [1,n]} (\mathcal{D}'_{\gamma_i,{\gamma}_i}(U,\mathcal{C}_i),\mathcal{I}_{b}); (\mathcal{D}'_{\gamma,\overline{\gamma}}(U,\mathcal{C}),\mathcal{I}_{b}))
\end{align*}
is nuclear. It is also complete when all $k_i\in \{3,4\}$ in which case it is also the space of hypocontinuous multilinear maps, the corresponding completed projective tensor product and even the space of bounded multilinear maps when $k_i=k=4.$
We also have the following continuous inclusions for $\gamma_a=(-\gamma_1\dot{\times}...\dot{\times} -\gamma_n\dot{\times} \gamma^c)^c$ and $\mathscr{C}=((\mathscr{O}_{\mathcal{C}_1})^{oo}\times...\times (\mathscr{O}_{\mathcal{C}_n})^{oo}\times\mathcal{C}^o)^o$ on $V=U_1\times ...\times U_n\times U$:
$$(\mathcal{D}'_{\gamma_a,\gamma_a(K)}(V,\mathscr{C};E_1^*\otimes ...\otimes E_n^*\otimes E),\mathcal{J}_K)\hookrightarrow L({\bigotimes}_{\beta e, i\in [1,n]} (\mathcal{D}'_{\gamma_i,{\gamma}_i(k_i)}(U_i,\mathcal{C}_i;E_i),\mathcal{J}_{k_i}); (\mathcal{D}'_{\gamma,\overline{\gamma}}(U,\mathcal{C};E),\mathcal{J}_{k}),$$
for $k_1=...=k_{n-1}$ equal in the cases $k=\overline{k_1}=\overline{k_n}=K\in\{4,2\}$, or $K=3$ with either  $k_1=k_n=4,k=2$, or $k_n\in\{4,2,3\},k_1=3,k=2$ and finally, the case 
$K=1,k_1=3,k_n\in\{3,1\},k=2.$ We can also have the case $k_i\in\{3,4\},k\in\{2,4\},K=3$ partially covered before.
\end{proposition}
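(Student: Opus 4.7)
The overall strategy is to reduce all three claims to Schwartz's theory of the $\epsilon$-product combined with the building blocks supplied by Theorem \ref{FAGeneral2} and Proposition \ref{InjectionProjectiveProduct}. The plan for the displayed isomorphism is as follows. By the very definition of $\bigotimes_{\beta e}$, a continuous linear map from it to $G = (\mathcal{D}'_{\gamma,\overline{\gamma}}(U,\mathcal{C};E),\mathcal{J}_k)$ is exactly an $\epsilon$-hypocontinuous multilinear map on the product of the factors into $G$, and the topology of uniform convergence on products of equicontinuous sets is by construction the topology of the iterated Schwartz $\epsilon$-product. By Theorem \ref{FAGeneral2}, the Arens dual of each $(\mathcal{D}'_{\gamma_i,\gamma_i(k_i)}(U_i,\mathcal{C}_i;E_i),\mathcal{J}_{k_i})$ is precisely $(\mathcal{D}'_{\lambda_i,\lambda_i(\overline{k_i})}(U_i,(\mathscr{O}_{\mathcal{C}_i})^o;E_i^*),\mathcal{J}_{\overline{k_i}})$ (this is the content of the $\overline{j}$-notation recalled in \eqref{notation}), which identifies the right-hand side of the stated isomorphism.

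For nuclearity, completeness and the boundedness identification, I would invoke Schwartz's stability results: the $\epsilon$-product of nuclear spaces is nuclear, and nuclearity of every factor is given by Theorem \ref{FAGeneral2}. Completeness when $k_i\in\{3,4\}$ follows from Schwartz's theorem that the $\epsilon$-product of complete spaces is complete, using that both $\mathcal{J}_3=\mathcal{I}_{ibi}$ and $\mathcal{J}_4=\mathcal{I}_{ibi}$ are complete topologies on the relevant spaces by Theorem \ref{FAGeneral2}, and that their Arens duals are likewise complete. In the complete nuclear setting the approximation property (Proposition \ref{ApproximationProp}) gives $\widehat{\otimes}_\pi = \widehat{\otimes}_\epsilon = \epsilon$, matching the projective tensor product. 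The identification with hypocontinuous multilinear maps when $k_i\in\{3,4\}$ comes from the fact that on the bornologifications the notions of $\epsilon$-hypocontinuity and of hypocontinuity for bounded sets collapse (equicontinuous $=$ bounded by lemma \ref{BoundedRelated}); when in addition $k_i=k=4$ everything is ultrabornological, so a bounded multilinear map is automatically hypocontinuous.

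For the continuous inclusions, the plan is induction on $n$, applying Proposition \ref{InjectionProjectiveProduct} at each step. The cone $\gamma_a=(-\gamma_1\dot{\times}\cdots\dot{\times}-\gamma_n\dot{\times}\gamma^c)^c$ and the support class $\mathscr{C}$ are precisely what one obtains by iterating the cone/support recipe of Proposition \ref{InjectionProjectiveProduct}, once one inserts the target $G$ via its duality with $(\mathcal{D}'_{\lambda,\lambda(\overline{k})}(U,(\mathscr{O}_\mathcal{C})^o;E^*),\mathcal{J}_{\overline{k}})$. Each case listed (the symmetric cases with $K\in\{2,4\}$ and $k=\overline{k_1}=\overline{k_n}=K$, and the unsymmetric ones with $K\in\{1,3\}$) corresponds exactly to one of the two injections available in Proposition \ref{InjectionProjectiveProduct}: the symmetric displays there handle $K\in\{2,4\}$ iterated with $k_1=\cdots=k_{n-1}$ identical, and the second display, which only allows $\mathcal{I}_{ppp}$ on one side paired with any $\mathcal{J}_j$ on the other, accounts for the mixed cases where $K=1,3$ and a boundary index $k_n$ can range over $\{1,2,3,4\}$ as indicated.

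The main technical obstacle is exactly this last case analysis: at each stage of the induction one must verify that the intermediate $\epsilon$-product spaces still fall under the hypotheses of Proposition \ref{InjectionProjectiveProduct}, in particular that Arens duals, enlargeable polar families, and the $F_\sigma$/$G_\delta$ character of the relevant cones are preserved under forming $(\cdot)^c\dot{\times}(\cdot)^c$ and $(\cdot\times\cdot)^{oo}$. The combinatorics of admissible $(K,k_1,k_n,k)$ tuples in the statement is not an artifact of the proof but reflects exactly which topology-pairs propagate through the inductive step; verifying this carefully, together with the matching of support conditions via $(\mathscr{O}_{(\mathscr{O}_{\mathcal{C}_1})^{oo}\times\cdots})^o$, will be the bulk of the work, while all other ingredients are direct consequences of the results already established.
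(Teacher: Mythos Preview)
Your proposal follows essentially the same route as the paper: identify the space of $\epsilon$-hypocontinuous multilinear maps with the iterated $\epsilon$-product via Schwartz \cite[Prop.~4 p.~30]{Schwarz} and the Arens-dual computations of Theorem~\ref{FAGeneral2}, deduce nuclearity by embedding into the completed injective tensor product, get completeness from the stability of the $\epsilon$-product under complete factors, and obtain the inclusions by iterating Proposition~\ref{InjectionProjectiveProduct}.

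One point where your description of the ``main technical obstacle'' is slightly off-target: the real issue in the inductive step is not the preservation of cone/support hypotheses (those are harmless), but that \emph{associativity of the $\epsilon$-product} \cite[Prop.~7 p.~38]{Schwarz} requires the factors to be complete. In the cases $K\in\{2,4\}$ every factor in the iterated $\epsilon$-product is complete and the induction runs smoothly using only the first injection of Proposition~\ref{InjectionProjectiveProduct}. In the cases $K\in\{1,3\}$, however, the initial application of the \emph{second} injection of Proposition~\ref{InjectionProjectiveProduct} produces a factor $(\mathcal{D}'_{\gamma_2,\gamma_2}(U_2,\mathcal{C}_2),\mathcal{J}_j)$ with $j\in\{1,3\}$, which is not complete. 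The paper handles this by first composing with completion maps to push indices $j\in\{1,3\}$ to $j+1\in\{2,4\}$, and, for the remaining borderline cases (e.g.\ $K=3,k_n=2,k_1=3,k=2$), by a regrouping trick: using \cite[Prop.~4 p.~30]{Schwarz} twice to move all complete factors into the target space, apply associativity there, then return. Your proposal would benefit from making this completeness/associativity bookkeeping explicit, since it is what actually dictates the precise list of admissible $(K,k_1,k_n,k)$ tuples.
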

 \begin{proof}
 We computed in Theorem \ref{FAGeneral2} the various Arens=Mackey duals. The identification with the $\epsilon$-product comes from \cite[Prop 4 p 30]{Schwarz}, using only the completeness of the target space. Since from \cite[Prop 11 p 46]{Schwarz}, the $\epsilon$-product has for completion the completed injective tensor product, in the case where spaces have the approximation property, one deduces they are all nuclear as their completion.  They are complete when all the spaces of the $\epsilon$ product are complete \cite[Prop 3 p 29]{Schwarz}. The statement of hypocontinuity comes from the identification of bounded and equicontinuous sets as explained in lemma \ref{BoundedRelated} and in the bornological case, the identification of hypocontinuous linear maps and bounded linear maps.
 
 Applying \cite[Prop 1 p 20, Prop 7 p 38]{Schwarz} we get inductively from proposition \ref{InjectionProjectiveProduct} the various stated injections. For instance in the case $k=\overline{k_1}=\overline{k_n}=K\in\{4,2\}$ we only have to use repeatedly the first injection, the point is that all the spaces of the $\epsilon$-product involved are complete%(to apply associativity of $\epsilon$-product), %$(\mathcal{D}'_{\gamma,\overline{\gamma}}(U,(\mathscr{O}_\mathcal{C'})^o))$ with topology $\mathcal{I}_{ppp}=\mathcal{I}_{pip}$ or $\mathcal{I}_{b}$ when $\gamma$ closed
 , so that we can apply associativity of $\epsilon$ product combined with the preservation of injectivity of $\epsilon$ products of continuous injections.

When $K\in\{3,1\}$ we first apply the second injection of  
 proposition \ref{InjectionProjectiveProduct}, once composed on the right with the completion maps to go from case $j\in\{3,1\}$ to $j+1\in\{4,2\}$ we can then apply again associativity freely and get the cases $K=3$ with either  $k_1=k_n=4,k=2$, or $k_n\in\{4,3\},k_1=3,k=2$ and finally, the case 
$K=1,k_1=3,k_n=3,k=2,$ and without equalities of $k_i$'s, $k_i\in\{3,4\},k\in\{2,4\},K=3$.
 
The cases $K=3,k_n=2,k_1=3,k=2$ and $K=1=k_1,k_n=3,k=2$ need a supplementary remark since one space is never complete.
  %The only special case is the third injection when $\gamma$ closed, since the corresponding $\epsilon$ product $$\mathbb{{\varepsilon}}_{ i\in [1,m],i\neq j} (\mathcal{D}'_{-(\gamma'_i)^c,\overline{-(\gamma'_i)^c}}(U,\mathscr{O}_{\mathcal{C}'_i}^o),\mathcal{I}_{ppp})\varepsilon (\mathcal{D}'_{-\gamma^c,-\gamma^c}(U,\mathscr{O}_{\mathcal{C}'_j}^o),\mathcal{I}_{ppp})\varepsilon (\mathcal{D}'_{\gamma',\overline{\gamma'}}(U,\mathcal{C}')),\mathcal{I}_{pip})))$$ is not clearly associative. 
  Here, we can gather terms which are complete in an $\epsilon$-product. Indeed, this is a consequence of \cite[ Prop 7 p 38]{Schwarz} with two applications of \cite[Prop 4 p 30]{Schwarz} which enables to gather complete spaces in the space of value, then apply associativity, and then come back to the $\epsilon$ product with associativity applied. In this way, since there is only one term not complete in the $\epsilon$ product above, one gathers all others and use the injections as above inductively.
\end{proof}
The easiest case is for the bornological complete topologies with all indexes $j=4$. At first reading, the reader should probably only consider this case where everything is straightforward. Since it is not clear at this point if this will be sufficient for applications (especially because the topology is quite tricky even to define in this case), we consider more general situations.

Our next result makes explicit nice spaces with controlled (dual) wave front set included in spaces of operators, and the various continuity of compositions on the full space and on restrictions. As we will see the only technical assumption is made for the target space where composition is made. We write $\epsilon_{eq}$ the family of $\epsilon$-equihypocontinuous \cite[p 18]{Schwarz} parts of $L({\bigotimes}_{\beta e, i\in [1,m]} (\mathcal{D}'_{\gamma'_i,\gamma'_i(k_i')}(U,\mathcal{C}'_i;E_i'),\mathcal{J}_{k_i'}); (\mathcal{D}'_{\gamma',\overline{\gamma'}}(U,\mathcal{C}';E'),\mathcal{J}_{k'})).$

\begin{proposition}\label{MultilinearComposition}In the setting of our previous proposition, and with supplementary cones $\gamma_i',\gamma', i\in[1,m],$ polar enlargeable families $\mathcal{C}'_i,\mathcal{C}',$ vector bundles $E_i'\to U_i',E'\to U'$ such that for a fixed index $j$, $\gamma_{j}'=\gamma$,
 %either closed or open,
  $\mathcal{C}'_j=\mathcal{C}$, $E_j'=E,U_j'=U$. Assume we write $\gamma''_i=\gamma'_{i}$ for $i\in[1,j-1]$, $\gamma''_i=\gamma_{i-j}$ for $i\in[j,j+n-1]$,$\gamma''_i=\gamma'_{i-n+1}$ for $i\in[j+n,m+n-1]$, and similarly for $\mathcal{C}''_j$, $E''_j$.
Consider also $k_i\in \{1,2,3,4\}, i=1...n,  k_i'\in \{1,2,3,4\},i=1...m, k,k'\in \{2,4\}$ with $k_j'=k.$ Also write $k''_i=k'_{i}$ for $i\in[1,j-1]$, $k''_i=k_{i-j}$ for $i\in[j,j+n-1]$,$k''_i=k'_{i-n+1}$ for $i\in[j+n,m+n-1].$

   Then the map below corresponding to composition in the j-th variable  is hypocontinuous when $k=4$ and $\epsilon_{eq}-\beta$-hypocontinuous (thus separately continuous) when $k=2$ (always with the $\epsilon$-product topologies of proposition \ref{multilinear}).%(with in the open $\gamma$ case  $\mathcal{I}=\mathcal{I}_{ppp}$ and $\mathcal{I}=\mathcal{I}_{b}$ in the closed case, target spaces always with $\mathcal{I}_{ppp}=\mathcal{I}_{pip}$ as above when not specified, and separately continuous when $\mathcal{I}=\mathcal{I}_{ppp}$ in the case $\gamma$ closed and assuming moreover in this second case either $\mathcal{C}$ or $\mathscr{O}_{\mathcal{C}}^o$ countably generated)
\begin{align*}\circ_j:L%_{\beta e}
(&{\bigotimes}_{\beta e, i\in [1,m]}%\prod_{ i\in [1,m]}
 (\mathcal{D}'_{\gamma'_i,\gamma'_i(k_i')}(U_i',\mathcal{C}'_i;E_i'),\mathcal{J}_{k_i'})%\times (\mathcal{D}'_{\gamma,\overline{\gamma}}(U,\mathcal{C}'_j),\mathcal{I})
; (\mathcal{D}'_{\gamma',\overline{\gamma'}}(U',\mathcal{C}';E'),\mathcal{J}_{k'}))\times \\& L({\bigotimes}_{\beta e, i\in [1,n]} (\mathcal{D}'_{\gamma_i,{\gamma}_i(k_i)}(U_i,\mathcal{C}_i;E_i),\mathcal{J}_{k_j}); (\mathcal{D}'_{\gamma,\overline{\gamma}}(U,\mathcal{C};E)),\mathcal{J}_k))) \\& \ \ \ \ \to L({\bigotimes}_{\beta e, i\in [1,n+m-1]} (\mathcal{D}'_{\gamma''_i,\gamma''_i(k''_i)}(U_i'',\mathcal{C}''_i;E''_i),\mathcal{J}_{k''_i}); (\mathcal{D}'_{\gamma',\overline{\gamma'}}(U',\mathcal{C}';E'),\mathcal{J}_{k'}))
\end{align*}
which restricts also to  $\epsilon$-hypocontinuous maps to the (often continuously) embedded subspaces given in proposition \ref{multilinear}.  %$$(\mathcal{D}'_{\gamma_a,\overline{\gamma_a}}(U^{n+1},\mathscr{C}),\mathcal{I}_{pip})\hookrightarrow L({\bigotimes}_{\beta e, i\in [1,n]} (\mathcal{D}'_{\gamma_i,{\gamma}_i}(U,\mathcal{C}_i),\mathcal{I}_{b}); (\mathcal{D}'_{\gamma,\overline{\gamma}}(U^{n+1},\mathcal{C}),\mathcal{I}_{pip}),$$
%$$(\mathcal{D}'_{\gamma_a,{\gamma_a}}(U,\mathscr{C}),\mathcal{I}_{b})\hookrightarrow L({\bigotimes}_{\beta e, i\in [1,n]} (\mathcal{D}'_{\gamma_i,{\gamma}_i}(U,\mathcal{C}_i),\mathcal{I}_{b}); (\mathcal{D}'_{\gamma,\overline{\gamma}}(U^{n+1},\mathcal{C}),\mathcal{I}_{b})), \ \ \ \gamma\ \text{closed},$$
%$$(\mathcal{D}'_{\gamma_b,{\gamma_b}}(U^{m+1},\mathscr{C}'),\mathcal{I}_{b})\hookrightarrow L_{\beta e}(\prod_{ i\in [1,m],i\neq j} (\mathcal{D}'_{\gamma'_i,\gamma'_i}(U,\mathcal{C}'_i),\mathcal{I}_{b})\times (\mathcal{D}'_{\gamma,\overline{\gamma}}(U,\mathcal{C}'_j),\mathcal{I}_{ppp}); (\mathcal{D}'_{\gamma',\overline{\gamma'}}(U,\mathcal{C}')),\mathcal{I}_{pip}))),%\ \gamma\ \text{cl. or open}
%$$
Explicitly if $\gamma_a=(-\gamma_1\dot{\times}...\dot{\times} -\gamma_n\dot{\times} \gamma^c)^c$,$\gamma_b=(-\gamma'_1\dot{\times}...\dot{\times} -\gamma'_m\dot{\times} (\gamma')^c)^c$,$\gamma_c=(-\gamma''_1\dot{\times}...\dot{\times} -\gamma''_{n+m-1}\dot{\times} (\gamma')^c)^c,\mathscr{C}=((\mathscr{O}_{\mathcal{C}_1})^{oo}\times...\times (\mathscr{O}_{\mathcal{C}_n})^{oo}\times\mathcal{C}^o)^o$, $V=U_1\times ...\times U_n\times U, \mathscr{E}=E_1^*\otimes...\otimes E_n^*\otimes E\to V$ and similarly for $\mathscr{C}',\mathscr{C}'',V',V''=U_1''\times ...\times U_{n+m-1}''\times U', \mathscr{E}',\mathscr{E}''$ the following maps coming from restriction are  $\epsilon$-hypocontinuous  :
$$\circ_j:(\mathcal{D}'_{\gamma_b,{\gamma_b}(\kappa_1)}(V',\mathscr{C}';\mathscr{E}'),\mathcal{J}_{\kappa_1})\otimes_{\beta e} (\mathcal{D}'_{\gamma_a,{\gamma_a(\kappa_2)}}(V,\mathscr{C};\mathscr{E}),\mathcal{J}_{\kappa_2})\to (\mathcal{D}'_{\gamma_c,{\gamma_c}(\kappa_3)}(V'',\mathscr{C}'';\mathscr{E}''),\mathcal{J}_{\kappa_3}),$$
with either $\kappa_1=\kappa_2=\kappa_3\in\{1,3,4\}$ or $\kappa_3=2, \kappa_1=\overline{\kappa_2}\in\{2,3\}.$ We even get full hypocontinuity when $\kappa_3\in\{3,4\}.$
%$$\circ_j:(\mathcal{D}'_{\gamma_b,\overline{\gamma_b}}(U^{m+1},\mathscr{C}'),\mathcal{I}_{pip})\otimes_{\beta e} (\mathcal{D}'_{\gamma_a,{\gamma_a}}(U^{n+1},\mathscr{C}),\mathcal{I}_{b})\to (\mathcal{D}'_{\gamma_c,\overline{\gamma_c}}(U^{n+m},\mathscr{C}''),\mathcal{I}_{pip}), \ \gamma\ \text{closed}.$$
\end{proposition}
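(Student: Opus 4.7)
The plan is to treat the two statements separately: the abstract composition on spaces of $\epsilon$-hypocontinuous multilinear maps, then its restriction to the wave front set controlled embedded subspaces provided by Proposition \ref{multilinear}.

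For the abstract statement, I would begin by identifying each space with its $\epsilon$-product description from Proposition \ref{multilinear} and check hypocontinuity of the pointwise-defined composition $(T,S)\mapsto T\circ_j S$, where $(T\circ_j S)(x_1,\ldots)=T(\ldots,S(x_j,\ldots,x_{j+n-1}),\ldots)$. Fixing a bounded set on one side and invoking the description of equicontinuous and bounded sets in lemma \ref{BoundedRelated}, the hypocontinuity reduces to preservation of equihypocontinuous families under composition, which is checked by a slot-by-slot evaluation argument using that each $\epsilon$-hypocontinuous map is separately continuous on equicontinuous sets. The $k=4$ case goes through via bornological boundedness arguments, whereas the $k=2$ case forces the $\epsilon_{eq}$-equihypocontinuity assumption on the left because the target topology $\mathcal{I}_{ppp}$ is not bornological and only equicontinuous (not merely bounded) families on the dual transfer properly; barrelledness of the sources ensures the separate continuity fallback is genuine.

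For the restricted statement, I would interpret $T\in \mathcal{D}'_{\gamma_b,\gamma_b(\kappa_1)}(V';\mathscr{E}')$ and $S\in \mathcal{D}'_{\gamma_a,\gamma_a(\kappa_2)}(V;\mathscr{E})$ as kernels of the associated multilinear maps through the embedding of Proposition \ref{multilinear}. Distributionally, $T\circ_j S$ is realised as the exterior tensor product $T\otimes S$ on $V'\times V$ followed by a partial pairing that identifies the $j$-th factor $U_j'=U$ of $V'$ with the last factor $U$ of $V$. This partial pairing decomposes as a pullback along the diagonal embedding on the $U$-factors (Proposition \ref{pullback}) composed with an $E$/$E^*$ duality integration over $U$, continuous in appropriate dual support families. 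Combining hypocontinuity of the tensor product from Proposition \ref{hypocontinuity} in the sub-case matching $\kappa_1,\kappa_2$ with the pullback continuity yields the desired $\epsilon$-hypocontinuity. The output wave front set is then controlled by the standard composition of wave front sets: a forbidden pairing of a $\gamma$-direction against a $-\gamma^c$-direction at a matching base point cannot occur by the very definitions of $\gamma_b$ and $\gamma_a$, and the directions inherited from either factor in zero-section configurations assemble into the cone $\gamma_c$ exactly as defined.

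The main obstacle is the case analysis for the triples $(\kappa_1,\kappa_2,\kappa_3)$, each of which dictates a specific sub-case of Proposition \ref{hypocontinuity} and a careful tracking of completion versus bornologification. The delicate case $\kappa_3=2$ with $\kappa_1=\overline{\kappa_2}$ pairs dual and direct topologies, and I would rely on the identification of equicontinuous sets in $\mathcal{I}_{ppp}$-duals from lemma \ref{BoundedRelated} to transfer hypocontinuity properly; upgrading to full hypocontinuity when $\kappa_3\in\{3,4\}$ should come for free from the same lemma since bounded and equicontinuous sets agree in those targets. The remaining microlocal verification---that the output wave front set lies exactly in $\gamma_c$, with particular care near zero sections where Proposition \ref{NONContinuity} warned of subtleties---will likely require a direct calculation using the defining seminorms on $\mathcal{D}'_{\gamma}$ together with the improved hypocontinuity of Proposition \ref{hypocontinuityImproved}.
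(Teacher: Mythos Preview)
Your proposal follows essentially the same architecture as the paper's proof: the abstract part via slot-by-slot evaluation and the equicontinuous/bounded set identifications of lemma \ref{BoundedRelated}, and the restricted part as tensor product (Proposition \ref{hypocontinuity}) followed by pullback along a diagonal (Proposition \ref{pullback}) followed by an integration step.

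Two points deserve sharpening. First, your ``$E/E^*$ duality integration over $U$'' is the part that needs the most care: the paper realises it not as an integration directly but as the \emph{adjoint} of a pullback $g_j^*$ (along a projection-plus-permutation map) composed with the bundle trace $E\otimes E^*\to\C^*$; this is what gives continuity between Mackey duals for free and lands you in the correct target $(\mathcal{D}'_{\gamma_c,\gamma_c(\kappa_3)},\mathcal{J}_{\kappa_3})$. Second, you omit the support-family verification, i.e.\ that $(f_j)_e^{-1}((\mathscr{C}'\times\mathscr{C})^{oo})\subset\mathscr{C}'''$ and that $g_j^{-1}(\mathscr{O}_{\mathscr{C}''}^o)\subset(\mathscr{O}_{\mathscr{C}'''})^o$; these are not automatic and occupy a substantial portion of the paper's argument, relying on enlargeability and the product description of polar families.

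Finally, your instinct to invoke Proposition \ref{hypocontinuityImproved} for the zero-section behaviour is a red herring here: the paper handles the wave front set inclusion $df_j^*(\gamma_a\dot{\times}\gamma_b)\subset\gamma_d$ by a direct case analysis on whether $\xi_j+\eta'=0$, exploiting only the defining structure of $\gamma_a,\gamma_b$ and the relation $\gamma_j'=\gamma$. The improved seminorms of Proposition \ref{hypocontinuityImproved} play no role in this proof.
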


\begin{proof}
There is no problem to define the composition in any case.%either the case $\gamma$ closed or open. %(since then we have the right completion appearing as in the source and target spaces composed). 
In order to check the space of value is indeed in $\epsilon$-equicontinuous maps by composition, it suffices to note that a product of equicontinuous sets is sent by a map in the second argument to an equicontinuous set. For, note that evaluated in all but one variable the maps above gives an equicontinuous family of continuous linear maps, which thus sends bounded sets (as equicontinuous sets) to bounded sets (see \cite[\S 39.3.(1)]{Kothe2}), but in $(\mathcal{D}'_{\gamma,\overline{\gamma}}(U,\mathcal{C};E),\mathcal{J}_{k}),k=2,4$ %in both open and closed case 
we saw bounded sets are equicontinuous as explained in lemma \ref{BoundedRelated}.%\cite[lemma 27]{Dab14a} \begin{comment}lemma \ref{bornologyIdentities}\end{comment}
.

Then using the remark above we explain the claimed $\epsilon-$hypocontinuity of a composition $A\circ_jB.$ Note that this boils down to getting equicontinuity of the family of maps obtained by evaluating all but one variable of $A\circ_jB$ to equicontinuous sets. But for this there are 2 cases, the case where all the arguments of $B$ are evaluated, in which case the equicontinuity of the image just proved concludes since the $j$-th argument of $A$ then receives such an equicontinuous family. The second case is where all the arguments of $A$ but the $j$-th one are evaluated to equicontinuous sets, and then the equicontinuouity follows by (the obvious% by definition for all V \exists U H(U)\subset V, and for U exists W B(W\subset U, thus A\circ B(X)\subset V
) composition of equicontinuous families. %as soon as we check that the equicontinuous family obtained by evaluating $B$ at various equicontinuous sets is indeed continuous with target topology $\mathcal{I}$ and not only $\mathcal{I}_{ppp}$, which concerns only the case $\gamma$ closed in which case $\mathcal{I}=\mathcal{I}_{iii}^{born}=\mathcal{I}_{ppp}^{born}.$%, but the set is equibounded (in the sense sends uniformly bounded to bounded) to $\mathcal{I}_{ppp}$ thus to the bornologification, and thus since the source space is (ultra)bornological, it is equicontinuous.% the source space is described by an inductive limit of Banach spaces, with restriction maps equibounded, thus equicontinuous, and then for each $V$, the absolutely conve hull of zero neighborhoods found in each is enough, or equvalently, take $V$ a zero neighborhood and the intersection of its inverse images $U$ by the equibounded family of maps which is thus absolutely convex, let's check it is bornivorous : for take $B$ a bounded set, in the source then the union of images B'=\cup A(B) is bounded, thus absorbed by \lambda V, i.e. B\subset \lambda A^{-1}(V) for all $A$ thus B\subset \lambda U, i.e. U absorbs any bounded sets thus is a zero neighborhood.

To prove hypocontinuity of $\circ_j$, take $B_i$ equicontinuous in $(\mathcal{D}'_{\gamma''_i,\gamma''_i(k''_i)}(U,\mathcal{C}''_i;E''_i),\mathcal{J}_{k''_i})$, $A,A'$ bounded in $L%_{\beta e}
({\bigotimes}_{\beta e, i\in [1,m]}%\prod_{ i\in [1,m]}
 (\mathcal{D}'_{\gamma'_i,\gamma'_i(k_i')}(U,\mathcal{C}'_i;E_i'),\mathcal{J}_{k_i'})%\times (\mathcal{D}'_{\gamma,\overline{\gamma}}(U,\mathcal{C}'_j),\mathcal{I})
; (\mathcal{D}'_{\gamma',\overline{\gamma'}}(U,\mathcal{C}';E'),\mathcal{J}_{k'}))$  and 

\noindent $L({\bigotimes}_{\beta e, i\in [1,n]} (\mathcal{D}'_{\gamma_i,{\gamma}_i(k_i)}(U,\mathcal{C}_i;E_i),\mathcal{J}_{k_j}); (\mathcal{D}'_{\gamma,\overline{\gamma}}(U,\mathcal{C};E),\mathcal{J}_k))$  respectively, and finally $C$ equicontinuous in $(\mathcal{D}'_{\gamma',\overline{\gamma'}}(U,\mathcal{C}';E'),\mathcal{J}_{k'})'\simeq (\mathcal{D}'_{-(\gamma')^c,-(\gamma')^c(\overline{k'})}(U,(\mathscr{O}_{\mathcal{C}'})^o;(E')^*),\mathcal{I}_{ibi})$.

We have to bound $$\sup_{b_i\in B_i,c\in C, u\in A}|\langle(u\circ_j v)(b_1,....,b_{n+m-1}),c\rangle|, \sup_{b_i\in B_i,c\in C, v\in A'}|\langle(u\circ_j v)(b_1,....,b_{n+m-1}),c\rangle|$$ since without the sup over $A,A'$ one gets a seminorm of the target space so that we have to bound this uniformly in $A,A'$ by a seminorm for $v,u$ respectively to get hypocontinuity.

The second case is easy since $A''=A'(B_j,...,B_{j+n-1})$ is bounded in the target space for $v$ namely $(\mathcal{D}'_{\gamma,\overline{\gamma}}(U,\mathcal{C};E),\mathcal{J}_{k}),k=2,4$ by definition of $A'$ bounded%, thus also for $\mathcal{I}$ since bornologification does not change bounded sets
, but, as we said, in this space, any bounded set is equicontinuous %as proved in \cite[lemma 27]{Dab14a} \begin{comment}lemma \ref{bornologyIdentities}\end{comment}
 thus so is $A''$, and :
$$\sup_{b_i\in B_i,c\in C, v\in A'}|\langle(u\circ_j v)(b_1,....,b_{n+m-1}),c\rangle|\leq \sup_{b_i\in B_i,v_j\in A'',c\in C}|\langle(u(b_1,....,b_{j-1},v_{j},b_{j+n},...,b_{n+m-1}),c\rangle|$$ is the bound by a seminorm for $u$. 

The first case will require the restriction $k=4$ to get full hypocontinuity. 
Indeed, likewise $A(B_1,....,B_{j-1},\cdot,B_{j+n},...,B_{n+m-1})$ is, by \cite[Prop 2 bis p 28]{Schwarz}, equicontinuous   from $(\mathcal{D}'_{\gamma,\overline{\gamma}}(U,\mathcal{C};E),\mathcal{I}_{ibi})$ to $(\mathcal{D}'_{\gamma',\overline{\gamma'}}(U',\mathcal{C}';E'),\mathcal{J}_{k'})$ (where we used our computations of strong duals $(\mathcal{D}'_{\gamma,\overline{\gamma}}(U,\mathcal{C};E),\mathcal{I}_{ibi})$ of Arens duals $(\mathcal{D}'_{-(\gamma)^c,-(\gamma)^c(\overline{k})}(U,(\mathscr{O}_\mathcal{C'})^o;E^*),\mathcal{I}_{ibi})$  of $(\mathcal{D}'_{\gamma,\overline{\gamma}}(U,\mathcal{C};E),\mathcal{J}_{k})$ to formulate things under Schwartz hypothesis. The strong dual computation uses proposition \ref{quasiCompletion}. This is where we will need $k=4$ so that the first space does not change).  
 thus, by  \cite[\S 39.3.(4)]{Kothe2},  the set of adjoint maps send  the equicontinuous set  like  $C$ to a bounded set $C'=[A(B_1,....,B_{j-1},\cdot,B_{j+n},...,B_{n+m-1})]'(C)$ which is again equicontinuous when $k=4$ because of the target space has also topology $\mathcal{I}_{ibi}$ where bounded sets are equicontinuous. 
 
Thus we have the expected bound %(because of the strong dual relation $(\mathcal{D}'_{-(\gamma)^c,-\overline{(\gamma)^c}}(U,(\mathscr{O}_\mathcal{C'})^o),\mathcal{J})'_b=(\mathcal{D}'_{\gamma,\overline{\gamma}}(U,(\mathscr{O}_\mathcal{C'})^o),\mathcal{I})$):
$$\sup_{b_i\in B_i,c\in C, v\in A'}|\langle(u\circ_j v)(b_1,....,b_{n+m-1}),c\rangle|\leq \sup_{b_i\in B_i,c\in C', v\in A'}|\langle v(b_j,....,b_{j+n-1}),c\rangle|.$$ %We thus see that the key property was the equality of equicontinuous and bounded sets in $(\mathcal{D}'_{\gamma,\overline{\gamma}}(U,\mathcal{C}),\mathcal{I}_{ppp})$ and in $(\mathcal{D}'_{\gamma,\overline{\gamma}}(U,\mathcal{C}),\mathcal{I})$ and its strong dual (of the same type). 
%To see $\gamma$-hypocontinuity in the case $\gamma$ closed and $\mathcal{I}=\mathcal{I}_{ppp},$ we note that for $A''$ compact thus bounded, our bound already proves continuity with this topology since the new space of linear functional is a subspace of the previous one (or for various agreements of bounded/equicontinuous sets with $\mathcal{I}_b$ from \cite[lemma 27]{Dab14a} \begin{comment}lemma \ref{bornologyIdentities}\end{comment}
%). 
In case $k=2$ for the first argument fixed to a single element thus  $A=\{u\}$ is $\epsilon$-equihypocontinuous by definition, or also more generally for $A$ $\epsilon$-equihypocontinuous, thus 

\noindent $[A(B_1,....,B_{j-1},\cdot,B_{j+n},...,B_{n+m-1})]$ is equicontinuous thus by \cite[\S 39.3.(4)]{Kothe2} the set of adjoint maps  $[A(B_1,....,B_{j-1},\cdot,B_{j+n},...,B_{n+m-1})]'$ send equicontinuous sets in the dual to equicontinuous sets which enables to conclude as before.% we see it as compact thus bounded in our previous space where the topology $\mathcal{I}_{ppp},$ replaced by $\mathcal{I}_{b},$

For the induced maps, we get continuity by realizing the restriction as a canonical composition of 3 maps. The first is the tensor map $t$ obtained in lemma \ref{hypocontinuity} (ii) (case $\kappa_3=3,4$),(iii) (case $\kappa_3=1$) and (iv) (case $\kappa_3=2$, explaining the various classes of hypocontinuity obtained) with target space  $\mathcal{D}'_{\gamma_a{\times}\gamma_b,{(\gamma_a\dot{\times}\gamma_b)}(\kappa_3)}(V\times V',(\mathscr{C}\times \mathscr{C}')^{oo};\mathscr{E}\otimes \mathscr{E}'),\mathcal{J}_{\kappa_3})$ (we even got to the completion of the target space obtained there to get a common target in all our cases. 

The second map is a pullback via the map (recall $U_j'=U$) $$f_j:\mathscr{U}:= U_1'\times ... \times U_m'\times U'\times U_1\times ...\times U_n\to V\times V'$$ with $f_j(x_1,....,x_m,y,z_1,...,z_n)=(x_1,....,x_m,y,z_1,...,z_n,x_j)$,
so that $df_j^*(\xi_1,....,\xi_m,\eta,\zeta_1,...,\zeta_n,\eta')=(\xi_1,....,\xi_j+\eta',...,\xi_m,\eta,\zeta_1,...,\zeta_n)$ and thus $df_j^*(\gamma_a\dot{\times}\gamma_b)=\{(x_1,....,x_m,y,z_1,...,z_n,\xi_1,....,\xi_j+\eta',...,\xi_m,\eta,\zeta_1,...,\zeta_n): (x_1,....,x_m,y,z_1,...,z_n,x_j,\xi_1,....,\xi_m,\eta,\zeta_1,...,\zeta_n,\eta')\in (\gamma_b\dot{\times}\gamma_a)\cap \dot{T}^*(V\times V')\}$. If $X=(x_1,....,x_m,y,z_1,...,z_n,\xi_1,....,\xi_j+\eta',...,\xi_m,\eta,\zeta_1,...,\zeta_n)\in df_j^*(\gamma_b\dot{\times}\gamma_a)\cap \dot{T}^*(V\times V')$ and assume moreover $\xi_j+\eta'=0.$ Let us show that $X\not\in (-\gamma_1'\dot{\times}...-\gamma_{j-1}'\dot{\times}\{0\}...\dot{\times} -\gamma_m'\dot{\times} (\gamma')^c\dot{\times}-\gamma_1\dot{\times}...\dot{\times}-\gamma_{n})$. If it were, we would have ($(x_i,\xi_i)\in -\gamma_i'$ or $\xi_i=0$) for all $i\neq j$, and ($(z_i,\zeta_i)\in -\gamma_i$ or $\zeta_i=0$) for all $i$ and moreover $(y,\eta)\in (\gamma')^c$ or $\eta=0$. Thus if $\eta'=\xi_j=0$ one contradicts the assumption $(x_1,....,x_m,y,z_1,...,z_n,x_j,\xi_1,....,\xi_m,\eta,\zeta_1,...,\zeta_n,\eta')\in (\gamma_b\dot{\times}\gamma_a)\cap \dot{T}^*(U^{n+m+2})$ (since from the assumptions on $(z_i,\zeta_i)$ one cannot have ($(x_j,\eta')\in \gamma^c$ or $\eta'=0$) especially $\eta'=0$ is forbidden because of the definition of $\gamma_a$ and also similarly, one cannot have ($(x_j,\xi_j)\in -\gamma_j=\gamma$ or $\xi_j'=0$ from  the definition of $\gamma_b$) or otherwise if $\eta'\neq 0$ and thus in order not to contradict $(z_1,...,z_n,x_j,\zeta_1,...,\zeta_n,\eta')\in \gamma_a$ one must have $(x_j,\eta')\in \gamma$  and similarly since then $\xi_j\neq 0$, in order not to contradict $(x_1,....,x_m,y,\xi_1,....,\xi_m,\eta)\in \gamma_b$ one must have $(x_j,\xi_j)\in (-\gamma'_j)^c=(-\gamma)^c$ and this contradicts $\xi_j=-\eta$.
 Altogether we deduced 
$$df_j^*(\gamma_a\dot{\times}\gamma_b)\cap T^*U^{j-1}\times\{0\}\times T^*U^{n+m+1-j}\subset (-\gamma_1'\dot{\times}...-\gamma_{j-1}'\dot{\times}\{0\}...\dot{\times} -\gamma_m'\dot{\times} (\gamma')^c\dot{\times}-\gamma_1\dot{\times}...\dot{\times}-\gamma_{n})^c.$$

Similarly, let us show that \begin{align*}&f_j^{-1}((\mathscr{C}'\times \mathscr{C})^{oo})\subset \mathscr{C}'''\\&:=((\mathscr{O}_{\mathcal{C}_1'})^{oo}\times...\times (\mathscr{O}_{\mathcal{C}_{j-1}'})^{oo}\times\mathcal{K}^o\times(\mathscr{O}_{\mathcal{C}_{j+1}'})^{oo}\times...\times (\mathscr{O}_{\mathcal{C}_{m}'})^{oo} \times\mathcal{C'}^o\times (\mathscr{O}_{\mathcal{C}_1})^{oo}\times...\times (\mathscr{O}_{\mathcal{C}_{n}})^{oo})^o\end{align*}
Thus take $O_i\in (\mathscr{O}_{\mathcal{C}_i'})^{oo}, V_i\in(\mathscr{O}_{\mathcal{C}_i})^{oo}, W\in \mathcal{K}^o,W'\in \mathcal{C'}^o$ and let $A=O_1\times...\times O_{j-1}\times W\times O_{j+1}\times ... O_m\times W'\times V_1\times ...\times V_n$, and let $B\in \mathscr{C}',C\in \mathscr{C}$, then $A\cap f_j^{-1}(B\times C)\subset f_j^{-1}(f_j(A)\cap (B\times C))\subset f_j^{-1}[(A'\cap B)\times (A''\cap C)]$ with $A'=O_1\times...\times O_{j-1}\times W\times O_{j+1}\times ... O_m\times W', A''= V_1\times ...\times V_n\times W$. But note that if $p_j$ is the projection on the $j$-th coordinate, we have $T:=p_j(A'\cap B)\in (\mathscr{O}_{\mathcal{C}_j'})^{o}$ since if $W''\in (\mathscr{O}_{\mathcal{C}_j'})^{oo},$ $W''\cap p_j(A'\cap B)\subset p_j(\overline{B'\cap B})$ which is compact, since $B':=O_1\times...\times O_{j-1}\times W''\times O_{j+1}\times ... O_m\times W'\in (\mathscr{C}')^o$. Likewise $S:=p_{n+1}(A''\cap C)\in \mathcal{C}=\mathcal{C}'_j $. Let $\mathscr{U}_1\times U_j'\times \mathscr{U}_2:= U_1'\times ... \times U_m'\times U',\mathscr{U}_3:= U_1\times ... \times U_n.$ But from this and  the definition of $f_j$ we obviously have \begin{align*}f_j^{-1}[(A'\cap B)\times (A''\cap C)]&=f_j^{-1}[(A'\cap B)\cap(\mathscr{U}_1\times T\times \mathscr{U}_2)\times (A''\cap C\cap (\mathscr{U}_3\times S))]\\&=f_j^{-1}[(A'\cap B)\cap(\mathscr{U}_1\times S\times \mathscr{U}_2)\times (A''\cap C\cap (\mathscr{U}_3\times T))].\end{align*}
thus $A\cap f_j^{-1}(B\times C)\subset  f_j^{-1}[\overline{(D'\cap B)}\times \overline{(D''\cap C)}]$ with $D'=O_1\times...\times O_{j-1}\times Int(S_\epsilon)\times O_{j+1}\times ... O_m\times W'\in (\mathscr{C}')^o, D''= V_1\times ...\times V_n\times Int(T_\epsilon)\in \mathscr{C}^o$,for some $\epsilon_i>0$, by enlargeability, so that $\overline{(D'\cap B)}, 
\overline{(D''\cap C)}$ are compact and thus since $Im(f_j)$ is closed and $f_j$ has a continuous inverse on its image, one deduces $f_j^{-1}[\overline{(D'\cap B)}\times \overline{(D''\cap C)}]$ and then $\overline{A\cap f_j^{-1}(B\times C)}$ compact and thus the stated inclusion $f_j^{-1}((\mathscr{C}'\times \mathscr{C})^{oo})\subset \mathscr{C}'''$ since by \cite{Dab14a}{Ex 5} %\ref{ProductCase}
 any closed set in there is included in a set of the form above $f_j^{-1}(B\times C)$. Again $\overline{A\cap (f_j^{-1}(B\times C))_\epsilon}\subset(\overline{A_{\delta(\epsilon)}\cap (f_j^{-1}(B\times C))})_{u\epsilon} $ which is also compact for $\epsilon=\epsilon(B,C)$ small enough, thus we even have $(f_j)_e^{-1}((\mathscr{C}'\times \mathscr{C})^{oo})\subset \mathscr{C}''',$ for some function $\epsilon$ given above (recall the notation $(f_j)_e^{-1}$ introduced in proposition \ref{pullback} contains an implicit $\epsilon$).

From the application of proposition \ref{pullback}, we thus got our second map after composition by a canonical map :$$f_j^*:(\mathcal{D}'_{\gamma_a{\times}\gamma_b,{(\gamma_a\dot{\times}\gamma_b)(\kappa_3)}}(V\times V',(\mathscr{C}\times \mathscr{C}')^{oo};\mathscr{E}\otimes \mathscr{E}'),\mathcal{J}_{\kappa_3})%\to (\mathcal{D}'_{df_j^*(\gamma_a{\times}\gamma_b),{df_j^*(\gamma_a\dot{\times}\gamma_b)}}(V\times \mathscr{U}_3,\mathscr{C}''),\mathcal{J}_{\kappa_3})
\to (\mathcal{D}'_{\gamma_d,\gamma_d(\kappa_3)}(V\times \mathscr{U}_3,\mathscr{C}''';f_j^*(\mathscr{E}\otimes \mathscr{E}')),\mathcal{J}_{\kappa_3})$$ with \begin{align*}\gamma_d:=&[(-\gamma_1'\dot{\times}...-\gamma_{j-1}'\dot{\times}\{0\}...\dot{\times} -\gamma_m'\dot{\times} (\gamma')^c\dot{\times}-\gamma_1\dot{\times}...\dot{\times}-\gamma_{n})^c\cap (T^*\mathscr{U}_1\times\{0\}\times T^*(\mathscr{U}_2\times \mathscr{U}_3))] \\&\cup T^*\mathscr{U}_1\times\dot{T}^*(U_j')\times T^*(\mathscr{U}_2\times \mathscr{U}_3)\end{align*} (our computation above implies the right condition in the hypothesis on the 0 section, $\gamma_d\subset \dot{T}^*(V\times \mathscr{U}_3)$). 

The third map necessary to recover the restricted composition map is easier to build. Recall $\mathscr{E}''=(E_1')^*\otimes ...\otimes (E_{j-1}')^*\otimes (E_{1})^*\otimes ...\otimes(E_{n})^*\otimes (E_{j+1}')^*\otimes...\otimes (E_{n}')^*\otimes E'\to V''$ and define $\mathscr{E}'''=(E_1')^*\otimes ...\otimes (E_{j-1}')^*\otimes (\C)^*\otimes (E_{j+1}')^*\otimes...\otimes (E_{n}')^*\otimes E'\otimes (E_{1})^*\otimes ...\otimes(E_{n})^*\to V\times \mathscr{U}_3.$ The third map is then defined from the projection map composed with a permutation $g_j(x_1,....,x_m,y,z_1,...,z_n)=(x_1,....,x_{j-1},z_1,...,z_n,x_{j+1},....,x_{m},y),$ so that with the first part of proposition \ref{pullback} again, we get $$g_j^*:(\mathcal{D}'_{-\gamma_c^c,-\gamma_c^c(\overline{\kappa_3})}(V'',(\mathscr{O}_{\mathscr{C}''})^o;(\mathscr{E}'')^*),\mathcal{J}_{\overline{\kappa_3}})\to (\mathcal{D}'_{-\gamma_d^c,-\gamma_d^c(\overline{\kappa_3})}(V\times \mathscr{U}_3,(\mathscr{O}_{\mathscr{C}'''})^o;g_j^*((\mathscr{E}'')^*)=(\mathscr{E}''')^*),\mathcal{J}_{\overline{\kappa_3}})$$ since by a computation already used in the proof of proposition \ref{InjectionProjectiveProduct} (using enlargeability) 
\begin{align*}(\mathscr{O}_{\mathscr{C}'''})^o&=(\mathcal{C}_1'\times...\times \mathcal{C}_{j-1}'\times\mathcal{F}\times\mathcal{C}_{j+1}'\times...\times {\mathcal{C}_{m}'} \times\mathscr{O}_{\mathcal{C'}}^o\times {\mathcal{C}_1}\times...\times \mathcal{C}_{n})^{oo}\\&\supset g_j^{-1}(\mathscr{O}_{\mathscr{C}''}^o)=g_j^{-1}((\mathcal{C}_1'\times...\times \mathcal{C}_{j-1}'\times {\mathcal{C}_1}\times...\times \mathcal{C}_{n} \times\mathcal{C}_{j+1}'\times...\times {\mathcal{C}_{m}'} \times\mathscr{O}_{\mathcal{C'}}^o)^{oo})\end{align*}
where we used again \cite{Dab14a}{Ex 5}%example \ref{ProductCase}
 to see any element of $(\mathcal{C}_1'\times...\times \mathcal{C}_{j-1}'\times {\mathcal{C}_1}\times...\times \mathcal{C}_{n} \times\mathcal{C}_{j+1}'\times...\times {\mathcal{C}_{m}'} \times\mathscr{O}_{\mathcal{C'}}^o)^{oo}$ is included in some product and thus $g_j^{-1}((\mathcal{C}_1'\times...\times \mathcal{C}_{j-1}'\times {\mathcal{C}_1}\times...\times \mathcal{C}_{n} \times\mathcal{C}_{j+1}'\times...\times {\mathcal{C}_{m}'} \times\mathscr{O}_{\mathcal{C'}}^o)^{oo})\subset [g_j^{-1}((\mathcal{C}_1'\times...\times \mathcal{C}_{j-1}'\times {\mathcal{C}_1}\times...\times \mathcal{C}_{n} \times\mathcal{C}_{j+1}'\times...\times {\mathcal{C}_{m}'} \times\mathscr{O}_{\mathcal{C'}}^o))]^{oo}.$
%using $[g_j^{-1}(\mathscr{D})]^o\subset (g_j^{-1}(\mathscr{D}^o))^{oo}$ %if $D\in \mathscr{D}$ and $\overline{E\cap g_j^{-1}(D)}$ compact, take $x_n\in {g_j(E)\cap D}$ and $y_n\in E$ such that $g_j(y_n)=x_n$ so that $y_n\in E\cap g_j^{-1}(D)$, thus up to extraction converge to $y$ i.e. $x_n\to g_j(y)$ and $\overline{g_j(E)\cap D}$ is compact i.e. $g_j(E)\in \mathscr{D}^o$ and $E\subset g_j^{-1}(g_j(E))$ is in $(g_j^{-1}(\mathscr{D}^o))^{oo}$

%so that $[g_j^{-1}(\mathscr{D})]]^{oo}\supset [g_j^{-1}(\mathscr{D}^o)]^o\supset g_j^{-1}(\mathscr{D}^{oo})$

Taking the adjoint of $g_j^*$ continuous between Mackey duals, one gets the third map we wanted :$(g_j)_*:=(g_j^*)^*\circ Tr:(\mathcal{D}'_{\gamma_d,\gamma_d(\kappa_3)}(V\times \mathscr{U}_3,\mathscr{C}''';f_j^*(\mathscr{E}\otimes \mathscr{E}')),\mathcal{J}_{\kappa_3})\to (\mathcal{D}'_{\gamma_c,\gamma_c(\kappa_3)}(V'',\mathscr{C}'';\mathscr{E}''),\mathcal{J}_{\kappa_3})$ where $Tr$ is the map induced from the bundle map $E\otimes E^*\to \C^*.$ %and taking the completion one gets the third map we wanted $(g_j)_*:=\widehat{(g_j^*)^*}:(\mathcal{D}'_{\gamma_d,\overline{\gamma_d}}(U^{n+m+1},\mathscr{C}'''),\mathcal{I}_{ppp})\to (\mathcal{D}'_{\gamma_c,\overline{\gamma_c}}(U^{n+m},\mathscr{C}''),\mathcal{I}_{ppp})$ so that 
Thus $\circ_j=(g_j)_*\circ f_j^*\circ t$ and it suffices to check it agrees on a dense set thus everywhere with the restricted map.
\end{proof}

Beyond composition studied in the previous lemma, we will be interested in tensor products on spaces of multilinear maps and their continuity properties. The useful easy results are gathered in the next lemma which is a consequence of associativity of $\epsilon$ product \cite[Prop 7 p 38]{Schwarz} and the canonical map between projective tensor product and $\epsilon$-product (both in the complete case (with approximation property for the second point requiring identification with completed injective product).

\begin{lemma}\label{MultilinearTensor}In the setting above, and with supplementary cones $\gamma_i',\gamma', i\in[1,m]$ polar enlargeable families $\mathcal{C}'_i,\mathcal{C}'$ %such that for a fixed index $j$ $\gamma_{j}'=\gamma$ either closed or open, $\mathcal{C}'_j=\mathcal{C}$, 
assume we write $\gamma''_i=\gamma'_{i}$ for $i\in[1,m]$, $\gamma''_{m+1}=(-\gamma')^c$ ,$\gamma''_i=\gamma_{i-m-1}$ for $i\in[m+2,m+n+1]$, and similarly for $\mathcal{C}''_j$ (especially $\mathcal{C}_{m+1}''=\mathscr{O}_{\mathcal{C}'})^o$). Then  we have a continuous map corresponding to tensor multiplication at the level of distribution spaces 
\begin{align*}.\otimes.&:L({\bigotimes}_{\beta e, i\in [1,n]} (\mathcal{D}'_{\gamma_i',{\gamma}_i'}(U,\mathcal{C}_i),\mathcal{I}_{b}); (\mathcal{D}'_{\gamma',\overline{\gamma'}}(U,\mathcal{C}')),\mathcal{I}_{ppp})))\times\\& \ \  L({\bigotimes}_{\beta e, i\in [1,n]} (\mathcal{D}'_{\gamma_i,{\gamma}_i}(U,\mathcal{C}_i),\mathcal{I}_{b}); (\mathcal{D}'_{\gamma,\overline{\gamma}}(U,\mathcal{C})),\mathcal{I}_{ppp})))\\& \ \ \ \ \to L({\bigotimes}_{\beta e, i\in [1,n+m-1]} (\mathcal{D}'_{\gamma''_i,\gamma''_i}(U,\mathcal{C}''_i),\mathcal{I}_{b}); (\mathcal{D}'_{\gamma,\overline{\gamma}}(U,\mathcal{C})\mathcal{I}_{ppp}))
\end{align*}

\end{lemma}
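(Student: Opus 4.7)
The plan is to represent each of the three multilinear map spaces as an iterated $\epsilon$-product via Proposition~\ref{multilinear}, invoke associativity of the $\epsilon$-product to get a canonical identification $L_1\,\varepsilon\,L_2 \simeq L_3$, and factor the desired bilinear map through the standard continuous arrow $L_1 \times L_2 \to L_1\,\hat\otimes_\pi\,L_2 \to L_1\,\varepsilon\,L_2$.

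First, I would apply Proposition~\ref{multilinear} with $k_i=3$ on sources (so $\mathcal{J}_3=\mathcal{I}_{ibi}$ equals $\mathcal{I}_b$ on the closed-cone spaces) and $k=2$ on targets (so $\mathcal{J}_2=\mathcal{I}_{ppp}$), so that $\overline{k_i}=2$ and all $\epsilon$-factors take the form $(\mathcal{D}'_{\lambda_i,\overline{\lambda_i}}(U,(\mathscr{O}_{\mathcal{C}_i})^o),\mathcal{I}_{ppp})$ with $\lambda_i=-\gamma_i^c$. The key combinatorial input is the identification of the distinguished $(m+1)$-th $\epsilon$-factor of $L_3$: with $\gamma''_{m+1}=(-\gamma')^c$ and $\mathcal{C}''_{m+1}=(\mathscr{O}_{\mathcal{C}'})^o$, the cone identity $-((-\gamma')^c)^c=\gamma'$ together with the double-polar identity $(\mathscr{O}_{(\mathscr{O}_{\mathcal{C}'})^o})^o=\mathcal{C}'$ (valid for $\mathcal{C}'$ polar enlargeable, via two applications of the Arens-dual computation in Theorem~\ref{FAGeneral2}) show that this factor equals exactly $Y := (\mathcal{D}'_{\gamma',\overline{\gamma'}}(U,\mathcal{C}'),\mathcal{I}_{ppp})$, which is precisely the common target factor of the multilinear maps in $L_1$. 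All remaining factors of $L_3$ match those of $L_1$ and $L_2$ in natural order.

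Since all $\epsilon$-factors are complete by Theorem~\ref{FAGeneral2}, iterated associativity and commutativity of the $\epsilon$-product \cite[Prop~7 p~38]{Schwarz} then yield a canonical isomorphism $L_1\,\varepsilon\,L_2\simeq L_3$ of complete l.c.s. Composing the canonical continuous bilinear map $L_1\times L_2\to L_1\,\hat\otimes_\pi\,L_2$ with the canonical continuous linear map $L_1\,\hat\otimes_\pi\,L_2 \to L_1\,\varepsilon\,L_2$ (available since, by Proposition~\ref{multilinear}, $L_1$ and $L_2$ are nuclear complete, so that $\hat\otimes_\pi$ coincides with $\hat\otimes_\epsilon$ and embeds continuously into $\varepsilon$) produces the desired continuous bilinear tensor multiplication. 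A check on elementary tensors $A=f_1\otimes\cdots\otimes f_m\otimes g$, $B=f_1'\otimes\cdots\otimes f_n'\otimes g'$ confirms that under these identifications $(A,B)\mapsto A\otimes B$ agrees with the intended tensor multiplication at distribution level, namely the multilinear map sending $(b_1,\dots,b_m,h,b_1',\dots,b_n')$ to $\langle A(b_1,\dots,b_m),h\rangle\cdot B(b_1',\dots,b_n')$, and extends by density and continuity to the whole space. The only substantive obstacle is the cone and support bookkeeping required to identify the distinguished middle factor of $L_3$ with $Y$; everything else is a formal consequence of the $\epsilon$-product calculus recalled in Subsection~1.2.
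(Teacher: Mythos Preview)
Your proof is correct and follows essentially the same route as the paper's own argument, which is given only as a one-line sketch before the lemma: associativity of the $\epsilon$-product \cite[Prop~7 p~38]{Schwarz} combined with the canonical map from the projective tensor product to the $\epsilon$-product (available here since the spaces are complete and nuclear with the approximation property). You have simply made explicit the combinatorial bookkeeping---in particular the identification of the $(m{+}1)$-th factor via $-((-\gamma')^c)^c=\gamma'$ and $(\mathscr{O}_{(\mathscr{O}_{\mathcal{C}'})^o})^o=\mathcal{C}'$---and the choice $k_i=3$, $k=2$ in Proposition~\ref{multilinear} needed to match the topologies $\mathcal{I}_b$ and $\mathcal{I}_{ppp}$; this is exactly what the paper's sketch leaves implicit.
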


\part{Spaces of smooth functionals with wave front set conditions.}
\section{Topology and algebraic structure}
In this section, we want to study topological (and algebraic) properties of some spaces of  conveniently smooth \cite{KrieglMichor} (vector valued) functionals on an open %convex
 $\mathscr{U}\subset \mathcal{E}(U,E)$ (open in the usual Fr\'echet  space topology thus $c^\infty$-open, \cite[Th 4.11]{KrieglMichor}) for $E\to U$ a bundle as above. These spaces will be variants of microcausal functionals (studied in this generality in \cite{RibeiroBF}) when specified to specific cones. We will moreover study two classes of spaces. The goal is to solve easily an issue appeared there for defining Poisson (and retarded) brackets with field dependent propagator. This uses crucially our study of multilinear maps on generalized H\"ormander spaces of distributions.
 
The first variant is the more similar to \cite{RibeiroBF}. Fix $m\in \N^*$. Let $\lambda=(\lambda_n)$ a family of %$\mathbf{\Delta_2^0}$
 cones $\lambda_n\subset (\dot{T}^*U^n)^J$ for some set $J,$ $\lambda_n=(\lambda_{n,j})_{j\in J}$. We will often assume :
\medskip

\begin{minipage}{15,5cm} \textbf{Assumption 1}:
 $\lambda_{n,j}\dot{\times} \lambda_{N,j}\subset \lambda_{n+N,j}, \forall j\in J;n,N\geq 1.$ %[...$\subset \Lambda_{n+m}\supset U\times\{0\}\times \Lambda_m^c\cup \Lambda_n^c\times U\times \{0\}.$]
 \end{minipage}
 \medskip

We also consider $\mathcal{C},\mathcal{C}'$ polar enlargeable families of closed sets. Note that $(\mathcal{C}^n)^{oo}$ is then also enlargeable. We will fix $\mathfrak{E}$ a separated locally convex space of value (as motivated for instance by \cite{RejznerYangMills,RejznerYangMills2,Rejzner, RejznerBF}, but note that our investigation at this stage is only really preliminary, and the assumption on $\mathfrak{E}$ won't be suitable to apply it as is to these papers, a more general treatment will require the investigation of the third part of this paper). This locally convex space will enable to treat field dependent retarded product in a uniform way as functionals valued in a locally convex space. 

For a smooth map $F:\mathscr{U}\to \mathfrak{E}$ it is known \cite[Corol 5.11]{KrieglMichor} that the $n$-th differential $F^{(n)}\in C^{\infty}(U,L_b(\mathcal{E}(U,E)^{\otimes_\beta n};\mathfrak{E}))$ is bounded (even symmetric) multilinear map. But since $\mathcal{E}(U,E)$ is a Fr\'echet space it is known that $\mathcal{E}(U,E)^{\hat{\otimes}_\beta n}=\mathcal{E}(U,E)^{\hat{\otimes}_\pi n}=\mathcal{E}(U^n,E^{\otimes n})$ (see e.g. \cite[Cor of Th 34.1,Th 51.6]{Treves} in the non-bundle case) which is bornological so that  $F^{(n)}\in C^{\infty}(U,\mathcal{E}'(U^n,(E^{\otimes n})^*)\epsilon \mathfrak{E})\subset C^{\infty}(U,\mathcal{D}'(U^n,(E^{\otimes n})^*)\epsilon \mathfrak{E}).$ Indeed smooth maps depend only of the bornology \cite[Corol 1.8]{KrieglMichor} which is the same between $\mathcal{E}'(U^n,E^{\otimes n})\epsilon \mathfrak{E}$ and $L_b(\mathcal{E}(U,E)^{\otimes_\beta n};\mathfrak{E})$ (see \cite[corol 2 p 34]{Schwarz} with our identification above and since $\mathcal{E}(U^n,E^{\otimes n})$ and its dual have their Mackey topology since they are bornological, we also use their equicontinuous sets are exactly their bounded set for the same reason).

Note that from \cite[lemma 2.3.8]{RibeiroBF} there is a natural notion of support of a functional (when $\mathfrak{E}=\C$), such that in the smooth case $\text{supp}(F)=\overline{\bigcup_{\varphi\in \mathscr{U}}\text{supp}(F^{(1)}[\varphi])}.$ For the case of more general $\mathfrak{E}$, we suppose given a bornology $\mathscr{B}$ on some subspace of  $\mathfrak{E}'.$ Then we can define for $B\in \mathscr{B},$ $\text{supp}(F,B)=\overline{\bigcup_{\varphi\in \mathscr{U}, \psi\in B}\text{supp}(F^{(1)}[\varphi](\psi))},$ where, as seen above, $F^{(1)}[\varphi](\psi)\in \mathcal{E}'(U,E^*).$ When not specified, we take the bornology $\mathscr{B}_f$ of all subsets of the dual, especially in the case $\mathfrak{E}=\C$ (in this case we only write $\text{supp}(F)$). In this subsection, this bornology
can be ignored, it will be used in the next subsection. At this stage, for instance, there is no reason to distinguish between scalarly compact support and real compact support (for $\mathscr{B}_f$). We also fix a family of locally convex spaces with continuous injections $\mathfrak{E}\to \mathfrak{E}_i, i\in J.$

We define a first space of functionals on an open $\mathscr{U}\subset \mathcal{E}(U,E)$ :\begin{align*}&\mathscr{F}_{\lambda}(\mathcal{E}(U,E),\mathscr{U},\mathcal{C},\mathcal{C}';(\mathfrak{E},(\mathfrak{E}_i)_{i\in J},\mathscr{B}))=\{F:\mathscr{U}\to \mathfrak{E} |\ F \text{smooth},\ \forall n\in \N^*,\ \forall j\in J,\\& F^{(n)}\in C^\infty(\mathscr{U}, (\mathcal{D}'_{\lambda_{n,j},\overline{\lambda_{n,j}}}(U^n,(\mathcal{C}^n)^{oo};(E^{\otimes n})^*),\mathcal{I}_{ibi})\epsilon \mathfrak{E}_i); \ \forall B\in \mathscr{B}, \text{supp}(F,B)\in \mathcal{C}'\}.\end{align*}

The second variant using more explicitly spaces of multilinear maps will be called an operadic variant, depending on $\Gamma=\{(\gamma_j,\gamma_j')\}_{j\in J}$%\in \mathcal{P}((\dot{T}^*(U))^2)$ 
a set of pairs of  cones, then we define:

\begin{align*} \mathscr{F}_{\Gamma}&(\mathscr{U},\mathcal{C};(\mathfrak{E},(\mathfrak{E}_i)_{i\in J},\mathscr{B}))=\{F:\mathscr{U}\to \mathfrak{E} |\ F\ \text{smooth}\ ,  \ \forall B\in \mathscr{B}, \text{supp}(F,B)\in\mathcal{C},\ \forall i\in J, \\&\forall n\in \N^*, F^{(n)}\in \mathcal{C}^\infty(\mathscr{U},L_{\beta e}[(\mathcal{D}'_{\gamma_i,\overline{\gamma_i}}(U;E),\mathcal{I}_{ibi})^{n-1};(\mathcal{E}'_{\gamma_i',\overline{\gamma_i'}}(U;E^*),\mathcal{I}_{ibi})]\epsilon \mathfrak{E}_i)\}.\end{align*}

In most cases all $\mathfrak{E}_i=\mathfrak{E}$ and we only write $(\mathfrak{E},\mathscr{B})$ instead of $(\mathfrak{E},(\mathfrak{E})_{i\in J},\mathscr{B})$,  and $\mathfrak{E}$ instead of $(\mathfrak{E},\mathscr{B}_f)$ as we said.

We will often assume :
\medskip

\begin{minipage}{15,5cm} \textbf{Assumption 2}:
 $-\gamma_i\subset(\gamma_i')^c, \forall i\in J.$ %[...$\subset \Lambda_{n+m}\supset U\times\{0\}\times \Lambda_m^c\cup \Lambda_n^c\times U\times \{0\}.$]
 \end{minipage}
 \medskip

%$-\Gamma_i\subset(\Gamma_i')^c,$

\begin{comment}We will be interested mainly in two topologies. In the case $\mathcal{C}'=\mathcal{F}$, the first one is the topology of uniform convergence on compacts $\mathcal{I}_c$ given by the following families of seminorms. For each $B$ bounded closed subset of $C^{\infty}(U,\R)$ (i.e. compact subset since it is a Montel space) and for any bounded subset $C$ of 
$\mathcal{D}'_{-\lambda_n^c,{-\lambda_n^c}}(U^n,(\mathscr{O}_{(\mathcal{C}^n)^{oo})})^o)$ we have the following seminorm :
$$p_{B}(F)=\sup_{\phi\in B\cap \mathscr{U}}|F(\phi)|,$$
$$p_{n,B,C}(F)= \sup_{\phi\in B\cap \mathscr{U}}\sup_{v\in C}|\langle F^{(n)}(\phi),v\rangle|.$$
\end{comment}

Let us give examples :

\begin{ex}\label{microlocal}\textbf{Local functionals :}
Let $J=\{1\}$, and $\lambda_n=\lambda_{n,1}=C_n=\{(x,...,x; \xi_1,...,\xi_n)\in \dot{T}^*(U^n): x\in U, \xi_1+...+\xi_n=0\}$ the conormal bundle to the small diagonal and write $\lambda=C.$ Note that $C_n\subset (-\gamma^{\dot{\times}(n-1)}\dot{\times}\gamma^c)^c$ as soon as $\gamma$ is a semigroup for addition. Then $\mathscr{F}_{C}(\mathcal{E}(U,E),\mathscr{U},\mathcal{K},\mathcal{K};\C)$ has an interesting closed subspace (for the topologies defined below)~:
$$\mathscr{F}_{\mu loc}(\mathscr{U};\C)=\{F\in \mathscr{F}_{C}(\mathcal{E}(U,E),\mathscr{U},\mathcal{K},\mathcal{K};\C): F \text{additive}\}$$
is one of the possible definitions of microlocal functionals in \cite{RibeiroBF}, where $F$ is said to be additive if $\forall \varphi_2\in \mathscr{U}$ and $\varphi_1,\varphi_3\in \mathcal{E}(U,E)$ such that $\varphi_1+\varphi_2,\varphi_2+\varphi_3, \varphi_1+\varphi_2+\varphi_3\in \mathscr{U}$  and $\text{supp}(\varphi_1)\cap\text{supp}(\varphi_3)=\emptyset $ then :
$$F(\varphi_1+\varphi_2+\varphi_3)=F(\varphi_1+\varphi_2)+F(\varphi_3+\varphi_2)-F(\varphi_2).$$
We will thus see below it is complete nuclear for the topologies we will soon define. Note that $C$ does not satisfy assumption 1.
\end{ex}

\begin{ex}\label{microcausal}\textbf{Microcausal functionals :}
Let $J=\{1,2\}$, $U=M$ a (time oriented) globally hyperbolic manifold (for a metric $g$) and $\mu_{n,1}^c=\overline{V_+}^n=\{(x_1,...,x_n; \xi_1,...,\xi_n)\in \dot{T}^*(M^n):  \xi_i\in \overline{V_+(x_i)}\}$ with  $\overline{V_+(x_i)}$ the closure of  $V_+(x_i)$ the future light cone at $x_i$  and  $\mu_{n,2}^c=\overline{V_-}^n$ analogously. Then the minimal variant of the space of microcausal functionals (see e.g. \cite{RibeiroBF}) with $DWF(F^{(n)})\subset \mu_{n,1}\cap\mu_{n,2}$ (instead of the same condition on wave front sets in the usual approach) is: $$\mathscr{F}_{\mu c}((M,g),\mathscr{U};\mathfrak{E})=\mathscr{F}_{\mu}(\mathcal{E}(U,E),\mathscr{U},\mathcal{K},\mathcal{K};\mathfrak{E}).$$

We have the operadic variant  if we define $\gamma_{mc,1}=\overline{V_+}, \gamma_{mc,1}'=(\overline{V_-})^c,\gamma_{mc,2}=\overline{V_-}, \gamma_{mc,2}'=(\overline{V_+})^c,$ gathered in $\Gamma_{mc}=\{(\gamma_{mc,j},\gamma_{mc,j}')\}_{j=1,2}$ which satisfies assumption 2. We can define :

$$\mathscr{F}_{\mu c}((M,g),\mathscr{U};\mathfrak{E})\hookrightarrow\mathscr{F}_{mc}((M,g),\mathscr{U};\mathfrak{E}):=\mathscr{F}_{\Gamma_{mc}}(\mathscr{U},\mathcal{K};\mathfrak{E}).$$
The injection (which will soon be continuous) follows from proposition \ref{multilinear} since for instance $((-\gamma_{mc,1})^{\dot{\times}(n-1)}\dot{\times}(\gamma_{mc,1}')^c)=(\overline{V_-}^{\dot{\times}n})^c=\mu_{n,2}.$
\end{ex}

\begin{ex}\label{Opmicrocausal}\textbf{Operadically Microcausal functionals :}

The inconvenience in some applications of the last example is that $\gamma\neq \gamma'$ so that it is not possible to compose multilinear maps obtained via derivatives even in the case $E=\C$. For $G$ a family of globally hyperbolic metrics on $M$, consider $J=G\times\{1,2\}.$
We define $\gamma_{omc,(g,1)}=\gamma_{omc,(g,1)}'={V_+(g)}, \gamma_{omc,(g,2)}=\gamma_{omc,(g,2)}'={V_-(g)},$ gathered in $\Gamma_{omc}=\{(\gamma_{omc,(g,j)},\gamma_{omc,(g,j)}')\}_{(g,j)\in J}$, which satisfies assumption 2, and the space of operatorially microcausal functionals :
$$\mathscr{F}_{omc}(M,G,\mathscr{U};\mathfrak{E}):=\mathscr{F}_{\Gamma_{omc}}(\mathscr{U},\mathcal{K};\mathfrak{E}).$$
We also have a variant with closed cones $\Gamma_{omc2}=\{(\overline{\gamma_{omc,(g,j)}},\overline{\gamma_{omc,(g,j)}'})\}_{(g,j)\in J}$:
$$\mathscr{F}_{omc2}(M,G,\mathscr{U};\mathfrak{E}):=\mathscr{F}_{\Gamma_{omc2}}(\mathscr{U},\mathcal{K};\mathfrak{E})\hookrightarrow\mathscr{F}_{mc}((M,g),\mathscr{U};\mathfrak{E}), (g\in G).$$

To compare with examples already considered, let us make explicit the variants with wave front set conditions.

 Define $o\mu_{n,(g,i)}=((-\gamma_{omc,(g,i)})^{\dot{\times}(n-1)}\dot{\times}(\gamma_{omc,(g,i)}')^c)=({V_\pm(g)}^{\dot{\times}(n-1)}\dot{\times}{V_\mp(g)}^c)^c$ and then $o\mu_n=(o\mu_{n,(g,i)})_{(g,i)\in J},$ with variant $o\mu'_{n,(g,i)}=((-\gamma_{omc2,(g,i)})^{\dot{\times}(n-1)}\dot{\times}(\gamma_{omc2,(g,i)}')^c)=({\overline{V_\pm(g)}}^{\dot{\times}(n-1)}\dot{\times}[{\overline{V_\mp(g)}}]^c)^c$ and $o\mu'_n=(o\mu'_{n,(g,i)})_{(g,i)\in J},$ and then deduce
$$\mathscr{F}_{o\mu c}(M,G,\mathscr{U};\mathfrak{E}):=\mathscr{F}_{o\mu}(\mathcal{E}(U,E),\mathscr{U},\mathcal{K},\mathcal{K};\mathfrak{E})\hookrightarrow\mathscr{F}_{omc}(M,G,\mathscr{U};\mathfrak{E}),$$
$$\mathscr{F}_{o\mu c2}(M,G,\mathscr{U};\mathfrak{E}):=\mathscr{F}_{o\mu'}(\mathcal{E}(U,E),\mathscr{U},\mathcal{K},\mathcal{K};\mathfrak{E})\hookrightarrow\mathscr{F}_{omc2}(M,G,\mathscr{U};\mathfrak{E})\cap \mathscr{F}_{\mu c}((M,g),\mathscr{U};\mathfrak{E}), (g\in G).$$

Note that $\mathscr{F}_{o\mu c2}(M,\{g\},\mathscr{U};\mathfrak{E})$ is the space introduced in \cite[p 14]{ColliniH} (once considered the symmetry of differentials that allows not to state symmetric conditions with respect to which variable is the last variable). Note that $\mathscr{F}_{\mu loc}(\mathscr{U};\C)\subset \mathscr{F}_{o\mu c}(M,G,\mathscr{U};\C)$ using as explained  in the example for multilocal functionals that $V_\pm(g)$ are semigroups for addition.
\end{ex}

In the case $\mathcal{C}'=\mathcal{F}$, there are 2 natural topologies. We can put  the topology of uniform convergence on images of finite dimensional compacts by smooth functions $\mathcal{I}_{s}$ or the topology of uniform convergence on images of compacts by smooth curves $\mathcal{I}_{c}$. $\mathcal{I}_{s}$ (resp. $\mathcal{I}_{c}$) is the initial topology generated, for any $f_\phi:\R^n\to \mathcal{E}(U,E)$ for $n\geq 1$ (resp. for $n=1$) by the maps $f_\phi^{*(k,i)}:\mathscr{F}_{\lambda}(\mathcal{E}(U,E),\mathscr{U},\mathcal{C},\mathcal{C}';(\mathfrak{E},\mathscr{B}))\to \mathcal{E}[\R^n,(\mathcal{D}'_{\lambda_{k,i},\overline{\lambda_{k,i}}}(U^k,(\mathcal{C}^k)^{oo};(E^{\o n})^*),\mathcal{I}_{ibi})\epsilon \mathfrak{E}]$ (this last space being given the usual topology of uniform convergence on compacts of all derivatives). We have $$\mathcal{E}[\R^n,(\mathcal{D}'_{\lambda_{k,i},\overline{\lambda_{k,i}}}(U^k,(\mathcal{C}^k)^{oo};(E^{\o n})^*),\mathcal{I}_{ibi})\epsilon \mathfrak{E}]\simeq \mathcal{E}[\R^n]\epsilon[(\mathcal{D}'_{\lambda_{k,i},\overline{\lambda_{k,i}}}(U^k,(\mathcal{C}^k)^{oo};(E^{\otimes n})^*),\mathcal{I}_{ibi})\epsilon \mathfrak{E}]$$ even if $\mathfrak{E}$  is not quasi-complete\footnote{Indeed it is enough to deal with the case $k=0$ and $\mathcal{E}[\R^n,\mathfrak{E}]$ defines an element of $\mathcal{E}[\R^n]\epsilon  \mathfrak{E}=L_\epsilon(\mathfrak{E}_c',\mathcal{E}[\R^n])$ but conversely, an element of the $\epsilon$ product defines a scalarly smooth map, thus composed with curves, a scalarly smooth curve, but from \cite[Corol 1.8]{KrieglMichor} smoothness of curves depends only of the bornology, thus our element of $\mathcal{E}[\R^n]\epsilon  \mathfrak{E}$ defines a conveniently smooth map (with the original topology of $E$) and, from \cite[Corol 3.14]{KrieglMichor}, this is the usual notion of smooth maps. Then the identification of topologies follows as in \cite[Th 44.1]{Treves}}, %see Treves, \cite[Th 44.1, 50.1(f) and Corollary p 530]{Treves}) and 
and the second bracket can be removed in the quasi-complete case (\cite{Schwarz} for associativity of $\epsilon$ product). Of course  
$f_\phi^{*(k,i)}(F)=F^{(k)}\circ f_\phi.$

For the second type of spaces, we likewise define $\mathcal{I}_{s}$, $\mathcal{I}_{c}$ using : $$f_\phi^{*(k,i)}:\mathscr{F}_{\Gamma}(\mathscr{U},\mathcal{C};(\mathfrak{E},\mathscr{B}))\to \mathcal{E}(\R^n,L_{\beta e}[(\mathcal{D}'_{\gamma_i,\overline{\gamma_i}}(U;E),\mathcal{I}_{ibi})^{n-1};(\mathcal{E}'_{\gamma_i',\overline{\gamma_i'}}(U;E^*),\mathcal{I}_{ibi})]\epsilon \mathfrak{E}).$$

The topologies above are also defined by the following families of seminorms. For $K\subset \R^m$ a compact and $f:\R^m\to \mathscr{U}\subset \mathcal{E}(U)$ a smooth map, $C$ an equicontinuous set in $(\mathcal{D}'_{-\lambda_n^c,-\overline{\lambda_n^c}}(U^n,(\mathcal{C}^n)^{o};E^{\o n})),\mathcal{I}_{ibi})$, $CE$ an equicontinuous set in $\mathfrak{E}'_c$, one defines for the first case: %[CHECK THIS IS IN ANY CASE THE RIGHTTOPO of Target]

$$p_{f,K,CE}(F)=\sup_{\phi\in f(K), e\in CE}| \langle F(\phi),e\rangle|,$$
$$p_{n,f,K,C,CE}(F)= \sup_{\phi\in f(K)}\sup_{v\in C, e\in CE}|\langle F^{(n)}(\phi),v\otimes e\rangle|.$$

For the second case, we keep the seminorms $p_{f,K,CE}$ and replace the second family, using equicontinuous=bounded sets $C\subset (\mathcal{D}'_{\gamma_i,\overline{\gamma_i}}(U),\mathcal{I}_{ibi})$ and $D\subset (\mathcal{D}'_{-(\gamma_i')^c,-\overline{(\gamma_i')^c}}(U),\mathcal{I}_{ibi})$ %(which coincide with absolutely convex compact  sets by \cite[Prop 32]{Dab14a} \begin{comment}proposition \ref{bornologyIdentitiesArens}\end{comment}
 %for $\Gamma_i,-(\Gamma_i')^c$ open and corollary \ref{FAClosedOpen2}  if they are closed)
 (cf lemma \ref{BoundedRelated}) and consider the seminorms~:

$$p_{n,f,K,C,D}(F)= \sup_{\phi\in f(K)}\sup_{v\in D,u_1,...,u_{n-1}\in C,e\in CE}|\langle F^{(n)}(\phi)[e],u_1\otimes...\otimes u_{n-1}\otimes v\rangle|.$$

It is crucial in the case $\mathfrak{E}$ not necessarily quasi-complete that we see $$L_{\beta e}[(\mathcal{D}'_{\gamma_i,\overline{\gamma_i}}(U;E))^{n-1};\mathcal{E}'_{\gamma_i',\overline{\gamma_i'}}(U;E^*)]\epsilon \mathfrak{E}\simeq L_\epsilon(\mathfrak{E}'_c;L_{\beta e}[(\mathcal{D}'_{\gamma_i,\overline{\gamma_i}}(U;E),\mathcal{I}_{ibi})^{n-1};(\mathcal{E}'_{\gamma_i',\overline{\gamma_i'}}(U;E^*),\mathcal{I}_{ibi})])$$ thanks to \cite[Corol 2 p 34]{Schwarz}.

For cases with specified support, $\mathcal{C}'$ one of course uses inductive and projective limits, for $\alpha=s,c$ (with the same name for induced topologies on a subspace)
\begin{align*}(\mathscr{F}_{\lambda}&(\mathcal{E}(U,E),\mathscr{U},\mathcal{C},\mathcal{C}';(\mathfrak{E},\mathscr{B})),\mathcal{I}_{\alpha i})=\underrightarrow{\lim}_{C\in \mathcal{C'}} (\mathscr{F}_{\lambda}(\mathcal{E}(U,E),\mathscr{U},\mathcal{C},\{F\in \mathcal{F}, F\subset C\};(\mathfrak{E},\mathscr{B})),\mathcal{I}_{\alpha})%\\&\to (\mathcal{D}'_{\gamma,\Lambda}(U,\mathcal{C}),\mathcal{I}_{H,\alpha e}):=\underleftarrow{\lim}_{O\in \mathcal{C}^o}(\mathcal{D}'_{\gamma,\Lambda}(U,(\mathscr{O}_{\{(\overline{O})_\epsilon, \epsilon>0\}})^o)),\mathcal{I}_{H,\alpha i})
 \\&\to (\mathscr{F}_{\lambda}(\mathcal{E}(U,E),\mathscr{U},\mathcal{C},\mathcal{C}';(\mathfrak{E},\mathscr{B})),\mathcal{I}_{\alpha p}):=\underleftarrow{\lim}_{O\in (\mathcal{C}')^o}(\mathscr{F}_{\lambda}(\mathcal{E}(U,E),\mathscr{U},\mathcal{C},\{O\}^o;(\mathfrak{E},\mathscr{B})),\mathcal{I}_{\alpha i}),\\(\mathscr{F}_{\Gamma}&(\mathscr{U},\mathcal{C};(\mathfrak{E},\mathscr{B})),\mathcal{I}_{\alpha i})=\underrightarrow{\lim}_{C\in \mathcal{C}} (\mathscr{F}_{\Gamma}(\mathscr{U},\{F\in \mathcal{F}, F\subset C\};(\mathfrak{E},\mathscr{B})),\mathcal{I}_{\alpha})%\\&\to (\mathcal{D}'_{\gamma,\Lambda}(U,\mathcal{C}),\mathcal{I}_{H,\alpha e}):=\underleftarrow{\lim}_{O\in \mathcal{C}^o}(\mathcal{D}'_{\gamma,\Lambda}(U,(\mathscr{O}_{\{(\overline{O})_\epsilon, \epsilon>0\}})^o)),\mathcal{I}_{H,\alpha i})
 \\&\to (\mathscr{F}_{\Gamma}(\mathscr{U},\mathcal{C};(\mathfrak{E},\mathscr{B})),\mathcal{I}_{\alpha p}):=\underleftarrow{\lim}_{O\in \mathcal{C}^o}(\mathscr{F}_{\Gamma}(\mathscr{U},\{O\}^o;(\mathfrak{E},\mathscr{B})),\mathcal{I}_{\alpha i}).\end{align*}

The spaces above are related in using proposition \ref{InjectionProjectiveProduct} and its consequence explained in proposition \ref{multilinear}. We recover in special cases statements of our previous examples.

\begin{ex}\textbf{Wave-Front variant of operadically controlled functionals} If for $\Gamma$ as  above, one defines $\lambda(\Gamma)$ by $\lambda_{n,i}(\Gamma)=((-\gamma_i)^{\dot{\times}(n-1)}\dot{\times}(\gamma_i')^c)^c; i\in J.$ Note that we easily check assumption $1$ namely  $\lambda_{n,i}(\Gamma)\dot{\times}\lambda_{N,i}(\Gamma)\subset(\lambda_{n,i}(\Gamma)^c\dot{\times}\lambda_{N,i}(\Gamma)^c)^c\subset \lambda_{n+N,i}(\Gamma)$ as soon as $-\gamma_i\subset(\gamma_i')^c,$ namely under assumption 2. %$A\times B\cup A\times \{0\}\cup \{0\}\times B\cap A^c\times B^c\cup A^c\times \{0\}\cup \{0\}\times B^c=\emptyset$ 
One gets continuous embeddings for $\alpha\in\{si,ci,sp,cp\}$:
\begin{align*}(\mathscr{F}_{\lambda(\Gamma)}&(\mathcal{E}(U,E),\mathscr{U},\mathcal{K},\mathcal{C};(\mathfrak{E},\mathscr{B})),\mathcal{I}_{\alpha })\hookrightarrow(\mathscr{F}_{\Gamma}(\mathscr{U},\mathcal{C};(\mathfrak{E},\mathscr{B})),\mathcal{I}_{\alpha }),\end{align*}
where we used proposition \ref{multilinear} and \cite[Prop 1 p 20]{Schwarz} for $\epsilon$ product of injective continuous maps.%\begin{align*}L_{\beta e}&(\mathcal{D}'_{\gamma_i,\overline{\gamma_i}}(U;E),\mathcal{I}_{ibi})^{n-1};(\mathcal{E}'_{\gamma_i',\overline{\gamma_i'}}(U;E^*),\mathcal{I}_{ibi})\epsilon \mathfrak{E})\\&\simeq  L_{b}(\mathcal{D}'_{\gamma_i,\overline{\gamma_i}}(U;E),\mathcal{I}_{ibi})^{n-1};L_b((\mathcal{E}'_{\gamma_i',\overline{\gamma_i'}}(U;E^*),\mathcal{I}_{ibi})'_c; \mathfrak{E})\\&\simeq  L_{b}(\mathcal{D}'_{\gamma_i,\overline{\gamma_i}}(U;E),\mathcal{I}_{ibi})^{n-1}\times (\mathcal{E}'_{\gamma_i',\overline{\gamma_i'}}(U;E^*),\mathcal{I}_{ibi})'_c; \mathfrak{E})\\&\simeq  L_{\beta e}(\mathcal{D}'_{\gamma_i,\overline{\gamma_i}}(U;E),\mathcal{I}_{ibi})^{n-1}\times (\mathcal{E}'_{\gamma_i',\overline{\gamma_i'}}(U;E^*),\mathcal{I}_{ibi})'_c; \mathfrak{E})\end{align*} based our identification of various equicontinuous and bounded sets, \cite[Corol 2 p 34]{Schwarz}

\end{ex}

Our first main general result generalizing and improving the known  properties of microcausal functionals is the following :

\begin{theorem}\label{AppliedFA}%With  the assumptions above (especially $-\Gamma_i\subset(\Gamma_i')^c,$ and $\lambda_n^i\dot{\times} \lambda_N^i\subset \lambda_{n+N}^i, \forall i=1,...,m;n,N\geq 1.$)
The spaces $\mathscr{F}_{\lambda}(\mathcal{E}(U,E),\mathscr{U},\mathcal{K},\mathcal{C};(\mathfrak{E},\mathscr{B}))
,\mathscr{F}_{\Gamma}(\mathscr{U},\mathcal{C};(\mathfrak{E},\mathscr{B}))$ with topologies $\mathcal{I}_{sp}$ or $\mathcal{I}_{cp}$ (or $\mathcal{I}_{si}$ or $\mathcal{I}_{ci}$ if $\mathcal{C}$ is countably generated) are complete locally convex spaces (resp. nuclear) as soon as $\mathfrak{E}$ is. Assume moreover $\mathfrak{E}$ is a quasi-complete locally convex algebra with continuous product and assume assumption 1 (resp. assumption 2), then $\mathscr{F}_{\lambda}(\mathcal{E}(U,E),\mathscr{U},\mathcal{K},\mathcal{C};(\mathfrak{E},\mathscr{B}))$ is an algebra with hypocontinuous multiplication defined pointwise for $\mathcal{I}_{si}$ or $\mathcal{I}_{ci}$ (resp. $\mathscr{F}_{\Gamma}(\mathscr{U},\mathcal{C};(\mathfrak{E},\mathscr{B}))$  an algebra with continuous multiplication if $\mathcal{C}=\mathcal{F}$ for $\mathcal{I}_{s}$ or $\mathcal{I}_{c}$ and otherwise hypocontinuous for $\mathcal{I}_{si}$ or $\mathcal{I}_{ci}$).%(resp. $\mathscr{F}_{\Gamma}(\mathscr{U},\mathcal{C};\mathfrak{E})$  an algebra with continuous multiplication). %and $\mathscr{F}_{\lambda}(C^{\infty}(U,\R),\mathscr{U},\mathcal{C},\mathcal{C}')%\mathscr{F}_{\Lambda}(C^{\infty}(U,\R),\mathscr{U})
%,\mathcal{I}_{c})$ is a complete locally convex [bornological when $\lambda_n$ open???] Schwartz space, both of them being an algebra with hypocontinuous multiplication.
\end{theorem}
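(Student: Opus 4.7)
The plan is to derive completeness and nuclearity from the fact that the topologies $\mathcal{I}_s,\mathcal{I}_c$ are by definition initial topologies for the family of maps $f_\phi^{*(k,i)}$ valued in spaces of the form $\mathcal{E}(\mathbb{R}^n,Y_{k,i})$, where $Y_{k,i}$ is either $(\mathcal{D}'_{\lambda_{k,i},\overline{\lambda_{k,i}}}(U^k,(\mathcal{C}^k)^{oo};(E^{\otimes k})^*),\mathcal{I}_{ibi})\epsilon\,\mathfrak{E}$ in the first case, or the multilinear map space $L_{\beta e}[\cdots]\epsilon\,\mathfrak{E}$ of Proposition \ref{multilinear} in the operadic case. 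Each such $Y_{k,i}$ is complete nuclear as soon as $\mathfrak{E}$ is: the distribution spaces are complete nuclear by Theorem \ref{FAGeneral2}, multilinear-map spaces are complete nuclear by Proposition \ref{multilinear}, and the $\epsilon$-product preserves completeness and nuclearity with a complete nuclear factor (\cite{Schwarz}, Prop. 3 p.~29 and its consequences). Since $\mathcal{E}(\mathbb{R}^n,\cdot)$ preserves these properties too, the product of the $Y_{k,i}$'s is complete nuclear, and it suffices to show the image is closed.

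Closedness of the image comes from identifying the subspace as the preimage of several closed conditions: a Cauchy net $F_\alpha$ in $\mathscr{F}$ has, for every smooth curve $f_\phi$, $f_\phi^{*(k,i)}(F_\alpha)$ converging in $\mathcal{E}(\mathbb{R}^n,Y_{k,i})$; compatibility between different $f_\phi$'s forces the limit $F$ to be conveniently smooth with each $F^{(k)}$ valued in the correct $Y_{k,i}$. The support condition $\mathrm{supp}(F,B)\in\mathcal{C}'$ (resp.\ $\mathcal{C}$) is then handled via the projective/inductive limits over the support families: in the projective case $\mathcal{I}_{\alpha p}$ the family of first derivatives being uniformly supported in $\{O\}^o$ is a closed condition, so completeness and nuclearity transfer from each factor; in the inductive case $\mathcal{I}_{\alpha i}$, countable generation of $\mathcal{C}$ lets us apply the regularity arguments of \cite[Th 23, \S 3]{Dab14a} and the closed graph/countable inductive limit preservation of nuclearity to conclude.

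For the algebra structure under Assumption 1, the Leibniz rule
\[
(FG)^{(n)}(\phi)=\sum_{k=0}^n\binom{n}{k}\,\mu_{\mathfrak{E}}\circ\bigl(F^{(k)}(\phi)\otimes G^{(n-k)}(\phi)\bigr),
\]
where $\mu_{\mathfrak{E}}\colon\mathfrak{E}\otimes\mathfrak{E}\to\mathfrak{E}$ is the continuous multiplication, exhibits each $(FG)^{(n)}(\phi)$ as a sum of tensor products. Assumption 1 ensures $\lambda_{k,j}\dot{\times}\lambda_{n-k,j}\subset\lambda_{n,j}$, so Proposition \ref{hypocontinuity}(ii) applied termwise with topology $\mathcal{I}_{ibi}$, combined with the $\epsilon$-product version against $\mathfrak{E}$ (using quasi-completeness to allow $\epsilon$-product manipulations via \cite[Prop 2 p.~28]{Schwarz}) yields the required hypocontinuity of $(F,G)\mapsto (FG)^{(n)}$ at each $\phi$. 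The smoothness of $FG$ follows from the convenient-smoothness calculus \cite[Cor.~5.11]{KrieglMichor}, and the support behaves correctly because $\mathrm{supp}(FG,B)\subset\mathrm{supp}(F,B)\cup\mathrm{supp}(G,B)\in\mathcal{C}'$ in the inductive limit. For $\mathscr{F}_\Gamma$, one uses Lemma \ref{MultilinearTensor} instead, whose applicability is exactly Assumption 2: $-\gamma_i\subset(\gamma_i')^c$ is what makes the relevant tensor multiplication of multilinear maps land in the correct target space. When $\mathcal{C}=\mathcal{F}$, the full-support case, we can replace the hypocontinuous tensor product with the continuous one of Proposition \ref{continuity} (applied componentwise in the multilinear form), yielding continuity of multiplication for $\mathcal{I}_s,\mathcal{I}_c$; otherwise only the hypocontinuous version survives the passage through support inductive limits.

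The main technical obstacle is the bookkeeping in the operadic case: one must verify that the Leibniz expansion, interpreted at the level of symmetric $\mathfrak{E}$-valued multilinear forms, actually produces elements of $L_{\beta e}[(\mathcal{D}'_{\gamma_i,\overline{\gamma_i}})^{n-1};\mathcal{E}'_{\gamma_i',\overline{\gamma_i'}}]\epsilon\,\mathfrak{E}$ with the right continuity estimates. This requires combining the tensor product of multilinear maps (Lemma \ref{MultilinearTensor}) with the identification of $\mathfrak{E}$-valued $\epsilon$-products as spaces of scalarly multilinear maps (\cite[Corol 2 p.~34]{Schwarz}), and keeping track of equicontinuous sets of $\mathfrak{E}'_c$ throughout, so that uniformity over bounded $F$ or $G$ can be deduced from uniformity of the bilinear tensor-multiplication map on the product of equicontinuous sets identified in Lemma \ref{BoundedRelated}. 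Once this identification is set up cleanly, the hypocontinuity in the theorem reduces to a direct application of the stability results of Section 2.
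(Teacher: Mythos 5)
Your overall route coincides with the paper's: completeness and nuclearity are obtained by viewing $\mathcal{I}_s,\mathcal{I}_c$ as initial topologies (locally convex kernels) for the maps $f_\phi^{*(k,i)}$ into complete nuclear spaces $\mathcal{E}(\R^n,Y_{k,i})$ and checking closedness of the image, with the support cases handled by the projective/inductive limits; the algebra structure comes from the Leibniz rule together with Proposition \ref{hypocontinuity} in the $\lambda$-case and Lemma \ref{MultilinearTensor} in the operadic case, and Melnikov's result reduces the supported cases to $\mathcal{C}=\mathcal{F}$. Two points, however, are imprecise enough to deserve flagging. First, the step that actually carries the weight in the vector-valued setting is not the identification of $\epsilon$-products with scalarly multilinear maps but Schwartz's crossing theorem $\Gamma_{\beta,\pi}$ of \cite[Prop 2 p 18]{Schwartz3}: it is what converts the pair (hypocontinuous tensor multiplication on the distribution factors, continuous multiplication $M$ on $\mathfrak{E}$) into a single hypocontinuous map on the $\epsilon$-products, and its hypocontinuous (rather than merely separately continuous) version requires that bounded sets of $X\hat{\otimes}_\pi\mathfrak{E}$ be $\beta$-$\pi$-decomposable, which follows from nuclearity of the Arens dual of $X$ via \cite[Prop 1.4 p 16]{Schwartz3}. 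Your citation of \cite[Prop 2 p.~28]{Schwarz} points to the wrong tool, and without the decomposability argument the passage from separate to hypo-continuity is not justified; this is exactly the place where continuity (not hypocontinuity) of the product on $\mathfrak{E}$ is indispensable.

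Second, your claim that continuity of the product on $\mathscr{F}_\Gamma(\mathscr{U},\mathcal{F};\cdot)$ for $\mathcal{I}_s,\mathcal{I}_c$ is obtained by substituting Proposition \ref{continuity} for the hypocontinuous tensor product does not work as stated: the derivatives in the operadic case live in spaces of multilinear maps, not in generalized H\"ormander spaces, so Proposition \ref{continuity} does not apply ``componentwise,'' and even at the distributional level its enlarged target cone $\dot{T}^*U-\gamma_1^c\times\gamma_2^c$ would lose the wave-front-set control needed to land back in $L_{\beta e}[(\mathcal{D}'_{\gamma_i,\overline{\gamma_i}})^{n-1};\mathcal{E}'_{\gamma_i',\overline{\gamma_i'}}]\epsilon\,\mathfrak{E}$. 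The continuity in this case comes instead from the fact that Lemma \ref{MultilinearTensor} is itself a continuous (not merely hypocontinuous) map, combined with the continuous crossing map $\Gamma_{\pi,\pi}$ and the canonical injection supplied by Assumption 2. With these two corrections your argument matches the paper's proof.
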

The reader will note in the proof that, in order to use \cite{Schwartz3}, it is crucial that in all cases $\mathfrak{E}$ has a continuous and not only hypocontinuous multiplication map.

\begin{proof}

 \setcounter{Step}{0} 
 \begin{step}
Nuclearity.
\end{step}
By definition, In the case $\mathcal{C}'=\mathcal{F}$, $\mathcal{I}_s,\mathcal{I}_{c}$ are locally convex kernels of nuclear spaces $\mathcal{E}[\R^n,(\mathcal{D}'_{\lambda_{k,i},\overline{\lambda_{k,i}}}(U^k,(\mathcal{C}^k)^{oo};(E^{\o n})^*),\mathcal{I}_{ibi})\epsilon \mathfrak{E}],$ (including $\mathcal{E}[\R^n, \mathfrak{E}],$ in case $k=0$) 

\noindent $\mathcal{E}(\R^n,L_{\beta e}[(\mathcal{D}'_{\gamma_i,\overline{\gamma_i}}(U;E),\mathcal{I}_{ibi})^{n-1};(\mathcal{E}'_{\gamma_i',\overline{\gamma_i'}}(U;E^*),\mathcal{I}_{ibi})]\epsilon \mathfrak{E})$. From \cite{Schwarz} and our identification of these spaces with $\epsilon$ products, the completion of these spaces are the same with $\mathfrak{E}$ replaced by its completion which is still nuclear. Then $\epsilon$-products are completed projective products of nuclear spaces thus nuclear.% thus they are nuclear (see Pietsch \cite[Prop 5.2.3]{Pietsch}). %The proof for the second kind of spaces is similar.
Then from projective limits, subspaces and countable projective limits, one deduces all the support cases.

 \begin{step}
Completeness.
\end{step}
In the case $\mathcal{C}'=\mathcal{F}$ all topologies are defined as locally convex kernels. 
As made explicit below,%for $\mathcal{I}_s$.  $\mathcal{I}_c$ is the locally convex kernel of the maps  $T_{n,K}$
%$T_{n,K}(F)=F^{(n)}|_K\in C^{0}(K,\widehat{\mathcal{E}'_{\Lambda_n}(U^n)}),$ for $K$ compact in $C^{\infty}(U).$
it remains to check those locally convex kernels are closed subspaces of corresponding products (then they will be complete as closed subspaces of complete locally convex spaces, see \cite[Prop 5.4]{Treves}, since continuous and smooth functions considered valued in complete locally convex spaces are complete, see Tr\`eves \cite[Prop 44.1]{Treves}).  Said otherwise the proof is the same as for the classical smooth structure in convenient analysis (cf \cite[Def 3.11]{KrieglMichor}).

\begin{step}
(Hypo)continuity.
\end{step}
We only consider the cases with support $\mathcal{F}$, the general case follows by regular (since countable strict) inductive limits and \cite{Melnikov}.

Using composition, product \cite[corol 3.13]{KrieglMichor} and smoothness of bounded multilinear maps \cite[lemma 5.5]{KrieglMichor}, one gets pointwise product of smooth maps are smooth and the derivative given by ordinary Leibniz rule. 

From Schwartz' crossing theorems \cite[Prop 2 p 18]{Schwartz3}, and replacing by completeness and nuclearity an $\epsilon$ product by a quasi-completed projective tensor product (here any $\hat{\otimes}$ is such a quasi-completion), on gets a canonical map  :

\begin{align*}\Gamma_{\beta,\pi}&:[(\mathcal{D}'_{\lambda_{n,j},\overline{\lambda_{n,j}}}(U^n,(\mathcal{C}^n)^{oo};(E^{\otimes n})^*),\mathcal{I}_{ibi})\hat{\o}_\pi \mathfrak{E}]\otimes_{\beta} (\mathcal{D}'_{\lambda_{m,j},\overline{\lambda_{m,j}}}(U^m,(\mathcal{C}^m)^{oo};(E^{\otimes m})^*),\mathcal{I}_{ibi})\epsilon \mathfrak{E})\\&\to [\mathcal{D}'_{\lambda_{n,j},\overline{\lambda_{n,j}}}(U^n,(\mathcal{C}^n)^{oo};(E^{\otimes n})^*),\mathcal{I}_{ibi})\hat{\o}_\beta(\mathcal{D}'_{\lambda_{m,j},\overline{\lambda_{m,j}}}(U^m,(\mathcal{C}^m)^{oo};(E^{\otimes m})^*),\mathcal{I}_{ibi})]\epsilon [\mathfrak{E}\hat{\otimes}_{\pi}\mathfrak{E}]
\end{align*}

Note that the extension to the hypocontinuous product follows from $1,2$  in his result because $sup(\beta,\pi)=\beta$ and because any bounded set in  $(\mathcal{D}'_{\lambda_{n,j},\overline{\lambda_{n,j}}}(U^n,(\mathcal{C}^n)^{oo};(E^{\otimes n})^*),\mathcal{I}_{ibi})\hat{\o}_\pi \mathfrak{E}$ is a $\beta-\pi$ decomposable (even  $\gamma-\pi$-decomposable) by \cite[Prop 1.4 p 16]{Schwartz3} by nuclearity of the strong=Arens dual of $(\mathcal{D}'_{\lambda_{n,j},\overline{\lambda_{n,j}}}(U^n,(\mathcal{C}^n)^{oo};(E^{\otimes n})^*),\mathcal{I}_{ibi})$. Now if $M$ is the multiplication map on $\mathfrak{E}$, we thus obtain a map $[\iota\circ (.\otimes .)\epsilon M]\circ \Gamma_{\beta,\pi}$ from the tensor multiplication map of \ref{hypocontinuity} and a canonical injection from assumption 1, with value in
\begin{align*}&(\mathcal{D}'_{\lambda_{n,j}\dot{\times}\lambda_{m,j},\overline{\lambda_{n,j}\dot{\times}\lambda_{m,j}}}(U^{n+m},(\mathcal{C}^{n+m})^{oo};(E^{\otimes (n+m)})^*),\mathcal{I}_{ibi})\epsilon \mathfrak{E}\\&\to (\mathcal{D}'_{\lambda_{n+m,j},\overline{\lambda_{n+m,j}}}(U^{n+m},(\mathcal{C}^{n+m})^{oo};(E^{\otimes (n+m)})^*),\mathcal{I}_{ibi})\epsilon \mathfrak{E}.\end{align*}
 This is the expected space of value.
 Using Leibniz rule, in order to get the derivative of $(FG)^{(n+m)}$ and reasoning  in composing the above map with $\phi\mapsto (F^{(n)}(\phi)\otimes G^{(m)}(\phi))$  one deduces the stated hypocontinuity in the first case.
 
 In the second case, one reasons similarly with $\Gamma_{\pi,\pi}$ which is continuous instead \cite[Prop 2.3 p 18-19]{Schwartz3}, the tensor multiplication map of lemma \ref{MultilinearTensor} and the canonical injection one needs to use with assumption 2 is the composition with $(\mathcal{D}'_{\gamma_i,\overline{\gamma_i}}(U;E),\mathcal{I}_{ibi})\to (\mathcal{D}'_{(-\gamma_i')^c,\overline{(-\gamma_i')^c}}(U;E),\mathcal{I}_{ibi}).$ 
%We start with the seminorms $p_{n,f,K,C}(F).$ We can assume $K=[0,1]^m$. Then take $v\in C$ and compute or $s=(s_1,...,s_m)\in K$:

%\begin{align*}&\langle F^{(n)}(f(s)),v\rangle=\langle F^{(n)}(f(0)),v\rangle
%\\ &+\sum_{i=1}^m\int_0^1dt\langle F^{(n+1)}(f(s_1,...,s_{i_1},ts_i,0,...,0)),\frac{\partial f}{\partial x_i}(s_1,...,s_{i-1},ts_i,0,...,0)\otimes v\rangle\end{align*}

%Consider the compact $L=\cup_{i=1}^m\frac{\partial f}{\partial x_i}(K)\subset C^{\infty}(U,\R)$. %Especially, this is also a compact in $\mathcal{D}'_{-\Lambda_1^c}(U)$
% Thus $M=L\otimes C$ is compact in  $\mathcal{D}'_{-U\times\{0\}\times(\Lambda_n)^c}(U^{n+1})$ (since $C$ bounded and by hypocontinuity)
%thus also in $\mathcal{D}'_{-(\Lambda_{n+1})^c}(U^{n+1}),$ by our assumption.

\end{proof}
\section{Supplementary algebraic structure on spaces of smooth functionals}
Our final task is to get a general result to recover Poisson brackets or retarded brackets %, $*$-product 
on our spaces of functionals%, depending on the conditions
. We especially want to cover field dependent variants as in \cite{RibeiroBF}. The reader not familiar with vector valued distributions should assume $\mathfrak{E}=\C$ later at first reading.

If $\mathfrak{E}$ is a quasi-complete algebra with continuous multiplication $M$, then one can extend duality pairings to an hypocontinuous $\mathfrak{E}$-valued pairing defining for
$f\in (\mathcal{E}'_{\gamma_i',\overline{\gamma_i'}}(U;E),\mathcal{I}_{ibi})\epsilon \mathfrak{E}$, $g\in (\mathcal{D}'_{(-\gamma_i')^c,\overline{-(\gamma_i')^c}}(U;E),\mathcal{I}_{ibi})\epsilon \mathfrak{E}$ a pairing  $\langle f, g\rangle \in \mathfrak{E}$. This comes from \cite{Schwartz3} (using $(\langle.,.\rangle \epsilon M)\circ \Gamma_{\beta,\pi}$) as in the proof of our previous theorem.

Assume moreover given an action $A:\mathfrak{F}\hat{\otimes}_{\pi}\mathfrak{E}\to \mathfrak{E}$ with $\mathfrak{F}$ a quasi-complete locally convex space.

One can  extend $\circ_j$ from lemma \ref{MultilinearComposition} to :
\begin{align*}\circ_j:& L%_{\beta e}
({\bigotimes}_{\beta e, i\in [1,m]}%\prod_{ i\in [1,m]}
 (\mathcal{D}'_{\gamma'_i,\overline{\gamma'_i}}(U_i',\mathcal{C}'_i;E_i'),\mathcal{I}_{ibi})%\times (\mathcal{D}'_{\gamma,\overline{\gamma}}(U,\mathcal{C}'_j),\mathcal{I})
; (\mathcal{D}'_{\gamma',\overline{\gamma'}}(U',\mathcal{C}';E'),\mathcal{I}_{ibi}))\epsilon \mathfrak{F} \times \\&  L({\bigotimes}_{\beta e, i\in [1,n]} (\mathcal{D}'_{\gamma_i,\overline{\gamma_i}}(U_i,\mathcal{C}_i;E_i),\mathcal{I}_{ibi}); (\mathcal{D}'_{\gamma,\overline{\gamma}}(U,\mathcal{C};E)),\mathcal{I}_{ibi}))\epsilon \mathfrak{E} \\& \ \ \ \ \to  L({\bigotimes}_{\beta e, i\in [1,n+m-1]} (\mathcal{D}'_{\gamma''_i,\overline{\gamma''_i}}(U_i'',\mathcal{C}''_i;E''_i),\mathcal{I}_{ibi}); (\mathcal{D}'_{\gamma',\overline{\gamma'}}(U',\mathcal{C}';E'),\mathcal{I}_{ibi}))\epsilon \mathfrak{E}
\end{align*}
using again $\Gamma_{\beta,\pi}$ of \cite[Prop 2 p 18]{Schwartz3}  to obtain the above map as $([\circ_j]\epsilon A)\circ \Gamma_{\beta,\pi}.$ One proves that this map is  hypocontinuous  by composition and using $\Gamma_{\beta,\pi}$ hypocontinuous as proved before.

Note also that $L_{\beta e}[(\mathcal{E}'_{\gamma_i',\overline{\gamma_i'}}(U;E^*),\mathcal{I}_{ibi});(\mathcal{D}'_{\gamma_i,\overline{\gamma_i}}(U;E),\mathcal{I}_{ibi})]\epsilon \mathfrak{F}\subset L_{\beta e}[\mathcal{E}(U;E^*);\mathcal{D}'(U;E)]\epsilon \mathfrak{F}$ and one can define \begin{align*}\mathfrak{F}(\Gamma)=[&\cap_{i\in J}L_{\beta e}[(\mathcal{E}'_{\gamma_i',\overline{\gamma_i'}}(U;E^*),\mathcal{I}_{ibi});(\mathcal{D}'_{\gamma_i,\overline{\gamma_i}}(U;E),\mathcal{I}_{ibi})]\epsilon \mathfrak{F}]\\&\cap[\cap_{i\in J}L_{\beta e}[(\mathcal{E}'_{(-\gamma_i^c),\overline{(-\gamma_i^c)}}(U;E^*),\mathcal{I}_{ibi});(\mathcal{D}'_{(-\gamma_i')^c,\overline{(-\gamma_i')^c}}(U;E),\mathcal{I}_{ibi})]\epsilon \mathfrak{F}]\\&\cap[\cap_{i\in J}L_{\beta e}[(\mathcal{E}'_{\gamma_i',\overline{\gamma_i'}}(U;E^*),\mathcal{I}_{ibi});(\mathcal{D}'_{(-\gamma_i')^c,\overline{-(\gamma_i')^c}}(U;E),\mathcal{I}_{ibi})]\epsilon \mathfrak{F}]\end{align*}
 in this space with the projective kernel topology.
We can associate continuous maps for $j\in J$ \begin{align*}&\mathfrak{F}(\Gamma)\to \mathfrak{F}_{(j,1)}(\Gamma):=[L_{\beta e}[(\mathcal{E}'_{\gamma_i',\overline{\gamma_i'}}(U;E^*));(\mathcal{D}'_{\gamma_i,\overline{\gamma_i}}(U;E))]\epsilon \mathfrak{F}]\\&\ \ \ \ \ \ \ \ \ \ \ \ \ \ \ \ \ \ \ \ \ \ \ \ \cap[L_{\beta e}[(\mathcal{E}'_{(-\gamma_i^c),\overline{(-\gamma_i^c)}}(U;E^*));(\mathcal{D}'_{(-\gamma_i')^c,\overline{(-\gamma_i')^c}}(U;E))]\epsilon \mathfrak{F}],
\\&\mathfrak{F}(\Gamma)\to \mathfrak{F}_{(j,2)}(\Gamma):=L_{\beta e}[(\mathcal{E}'_{\gamma_i',\overline{\gamma_i'}}(U;E^*),\mathcal{I}_{ibi});(\mathcal{D}'_{(-\gamma_i')^c,\overline{-(\gamma_i')^c}}(U;E),\mathcal{I}_{ibi})]\epsilon \mathfrak{F}.\end{align*}
 
  We reach the point where we need to introduce a bornology $\mathscr{B}(\Gamma,\mathscr{B},\mathcal{C})$ on a subspace of $\mathfrak{F}$ and depending on $(\mathfrak{E},\mathscr{B})$. For sets $A\subset\mathcal{E}'_{\gamma_i',\overline{\gamma_i'}}(U;E^*)\epsilon \mathfrak{E}, A'\subset\mathcal{E}'_{\gamma_i',\overline{\gamma_i'}}(U;E^*)\epsilon \mathfrak{E}$ uniformly supported in $\mathcal{C}$ for $\mathscr{B}$, namely e.g. for any $B\in \mathscr{B},$ $\overline{\bigcup_{a\in A,b\in B}\text{supp}(a(b))}\in \mathcal{C}.$
Then the bornology $\mathscr{B}(\Gamma,\mathscr{B},\mathcal{C})$ is by definition generated by sets C of form $\{F\mapsto \psi(\langle F\circ_1(a),a'\rangle), a\in A,a'\in A', \psi\in B\}$ for $A,A'$ as above and $B\in \mathscr{B}.$ Those bornologies have only been introduced to state a natural condition for preservation of support conditions.

\begin{theorem}\label{AppliedComposition}
Assume $\mathfrak{E},\mathfrak{F}$ quasi-complete with continuous multiplications $\mathfrak{F}\hat{\otimes}_{\pi}\mathfrak{E}\to \mathfrak{E},\mathfrak{E}\hat{\otimes}_{\pi}\mathfrak{E}\to \mathfrak{E}$.
Let $\Gamma'=\{(\gamma_i,-\gamma_i^c),(\gamma_i,\gamma_i')\}_{i\in J}$ indexed by $J\times \{1,2\}$ and 
$$R\in {F}_{\Gamma'}(\mathscr{U},\mathcal{C};(\mathfrak{F}(\Gamma),(\mathfrak{F}_g(\Gamma))_{g\in J\times\{1,2\}},\mathscr{B}(\Gamma,\mathscr{B}_f,\mathcal{C}))).$$ %(resp. )

Then the product defined by $$[R(F,G)](\varphi)=\langle R(\varphi)[F^{(1)}(\varphi)],G^{(1)}(\varphi)\rangle$$
is an hypocontinuous product on $\mathscr{F}_{\Gamma}(\mathscr{U},\mathcal{C};(\mathfrak{E},\mathscr{B}_f))$ %(resp . $\mathscr{F}_{\lambda}(\mathcal{E}(U,E),\mathscr{U},\mathcal{K},\mathcal{C};(\mathfrak{E},\mathscr{B}_f))
%$)
 with topologies $\mathcal{I}_{si}$ or $\mathcal{I}_{ci}$.

\end{theorem}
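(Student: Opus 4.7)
The plan is to show that $R(F,G)\in\mathscr{F}_{\Gamma}(\mathscr{U},\mathcal{C};(\mathfrak{E},\mathscr{B}_f))$ by verifying smoothness, the wave-front conditions on all derivatives, and the support condition; hypocontinuity of the bilinear map $(F,G)\mapsto R(F,G)$ for $\mathcal{I}_{si}$ (and identically for $\mathcal{I}_{ci}$) will then be read off the hypocontinuity of the constituent operations together with the Leibniz expansion.

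I would first observe that $\varphi\mapsto R(\varphi)\in\mathfrak{F}(\Gamma)$ and $\varphi\mapsto F^{(1)}(\varphi),\,G^{(1)}(\varphi)\in(\mathcal{E}'_{\gamma_i',\overline{\gamma_i'}}(U;E^*),\mathcal{I}_{ibi})\epsilon\mathfrak{E}$ are smooth by hypothesis, and that the extended application map $\mathfrak{F}(\Gamma)\otimes_{\beta}[\mathcal{E}'_{\gamma_i',\overline{\gamma_i'}}(U;E^*)\epsilon\mathfrak{E}]\to \mathcal{D}'_{\gamma_i,\overline{\gamma_i}}(U;E)\epsilon\mathfrak{E}$ built just before the theorem (via $\Gamma_{\beta,\pi}$ of \cite[Prop 2 p 18]{Schwartz3} and the action $A:\mathfrak{F}\hat{\otimes}_{\pi}\mathfrak{E}\to\mathfrak{E}$), followed by the extended $\mathfrak{E}$-valued duality pairing, are bounded bilinear maps, hence smooth in the convenient-calculus sense of \cite{KrieglMichor}. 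Convenient composition then yields smoothness of $R(F,G)$.

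Next, iterating the chain and Leibniz rules produces, for each $n\geq 1$,
\[
[R(F,G)]^{(n)}(\varphi)(h_1,\dots,h_n)=\sum_{A\sqcup B\sqcup C=\{1,\dots,n\}}\bigl\langle R^{(|A|)}(\varphi)[h_A]\bigl[F^{(|B|+1)}(\varphi)[h_B]\bigr],\,G^{(|C|+1)}(\varphi)[h_C]\bigr\rangle,
\]
and I would identify each summand as an element of $L_{\beta e}[(\mathcal{D}'_{\gamma_i,\overline{\gamma_i}}(U;E))^{n-1};\mathcal{E}'_{\gamma_i',\overline{\gamma_i'}}(U;E^*)]\epsilon\mathfrak{E}$ for every $i\in J$. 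Here Proposition~\ref{MultilinearComposition} does the work: the composition $\circ_1$ of $R^{(|A|)}(\varphi)$ (viewed in one of its two cone-pair incarnations from $\Gamma'=\{(\gamma_i,-\gamma_i^c),(\gamma_i,\gamma_i')\}$) with $F^{(|B|+1)}(\varphi)$ produces an $\epsilon$-hypocontinuous multilinear map; the appropriate piece of the intersection $\mathfrak{F}(\Gamma)$ is then used so that the output of $R^{(|A|)}[h_A]$ applied to $F^{(|B|+1)}[h_B]$ has a wave-front set permitting pairing with $G^{(|C|+1)}[h_C]$; the final pairing contracts the last slot into $\mathfrak{E}$ via the action $A$. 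The support condition $\text{supp}(R(F,G))\in\mathcal{C}$ follows from the bornology $\mathscr{B}(\Gamma,\mathscr{B}_f,\mathcal{C})$, designed exactly so that forms $F\mapsto\psi(\langle F\circ_1 a,a'\rangle)$, with $a,a'$ running in bounded sets of $F^{(1)}$- and $G^{(1)}$-values (uniformly $\mathcal{C}$-supported by hypothesis on $F,G$), preserve the $\mathcal{C}$-support from $R$.

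Hypocontinuity of $(F,G)\mapsto R(F,G)$ for $\mathcal{I}_{si}$ finally comes from the defining seminorms: a seminorm $p_{n,f,K,C,D}$ of the output, substituted into the Leibniz expansion above and combined with the hypocontinuity of the extended $\circ_1$ and extended duality pairing (proved via $\Gamma_{\beta,\pi}$ just before the statement), estimates term-by-term as a product of seminorms on $R$ (fixed), on $F$, and on $G$, with one of $F,G$ ranging in a bounded set, as required. The main obstacle is the case analysis of the derivative step: each of the three constituents of the intersection $\mathfrak{F}(\Gamma)$, crossed with each of the two cone-pairs of $\Gamma'$, is invoked for a different family of partitions $(A,B,C)$, and the stratified list of hypocontinuity cases in Proposition~\ref{MultilinearComposition} is precisely tailored to match this stratification of the target space $\mathscr{F}_{\Gamma}$.
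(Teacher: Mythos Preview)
Your approach is essentially the paper's: smoothness via convenient calculus, Leibniz expansion of derivatives, case analysis using the three pieces of $\mathfrak{F}(\Gamma)$, support via the bornology $\mathscr{B}(\Gamma,\mathscr{B}_f,\mathcal{C})$, and hypocontinuity read off from the building blocks. Two points deserve sharpening.

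First, your support argument is incomplete. The first derivative of $R(F,G)$ splits into three terms,
\[
\psi(\langle R^{(1)}(\varphi)[h,F^{(1)}(\varphi)],G^{(1)}(\varphi)\rangle),\quad
\psi(\langle R(\varphi)[F^{(2)}(\varphi)[h]],G^{(1)}(\varphi)\rangle),\quad
\psi(\langle R(\varphi)[F^{(1)}(\varphi)],G^{(2)}(\varphi)[h]\rangle),
\]
and you only address the first via the bornology on $R$. The other two need the separate observation that if $\text{supp}(h)\cap\text{supp}(F,\mathfrak{E}')=\emptyset$ then $F^{(2)}(\varphi)[h]=0$ in $\mathcal{E}'_{\gamma_i',\overline{\gamma_i'}}(U;E^*)\epsilon\mathfrak{E}$ (by differentiating $\psi(F^{(1)}(\varphi)[h])=0$ and using density of $\mathcal{D}(U)$ in the dual), and similarly for $G$. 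Only then does $\text{supp}(R(F,G),B)\subset\text{supp}(R,C)\cup\text{supp}(F,\mathfrak{E}')\cup\text{supp}(G,\mathfrak{E}')\in\mathcal{C}$.

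Second, your case analysis is correctly located but imprecisely stated. The paper organises it not by the partition $(A,B,C)$ per se but by which block contains the \emph{last} variable (the one landing in the $\mathcal{E}'_{\gamma_i'}$ slot of the target multilinear space). If that variable sits under $G^{(|C|+1)}$, one uses the first constituent of $\mathfrak{F}(\Gamma)$ (output of $R$ in $\mathcal{D}'_{\gamma_i}$); if under $F^{(|B|+1)}$, one first partially transposes $F^{(|B|+1)}$ and then uses the second constituent (output of $R$ in $\mathcal{D}'_{(-\gamma_i')^c}$); if under $R^{(|A|)}$, one uses $\mathfrak{F}_{(i,2)}(\Gamma)$ together with the cone pair $(\gamma_i,\gamma_i')\in\Gamma'$. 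Your sentence ``crossed with each of the two cone-pairs of $\Gamma'$'' suggests the right mechanism but does not pin down this three-way split, which is what actually forces the definition of $\mathfrak{F}(\Gamma)$ as a triple intersection.
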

We of course wrote $R(\varphi)[F^{(1)}(\varphi)]$ for $R(\varphi)\circ_1[F^{(1)}(\varphi)].$ Note we could formulate an obvious variant for $\mathscr{F}_{\lambda(\Gamma)}(\mathcal{E}(U,E),\mathscr{U},\mathcal{K},\mathcal{C};(\mathfrak{E},\mathscr{B}_f))$ (but notationally less convenient since $\epsilon$ products don't recover wave front set conditions and we could not formulate the assumption on $R$ as a vector valued statement). 
\begin{proof}
The smoothness of $R(F,G)$ comes from composition of bounded(= smooth) multilinear maps and smooth maps. By Leibniz rule, the derivatives are expressed using $\langle R^{(l)}(\varphi)[F^{(m+1)}(\varphi)],G^{(n+1)}(\varphi)\rangle.$ Let us explain the meaning of this expression in more detail.
First note that
\begin{align*}&R^{(l)}(\varphi)\in L_{\beta e}[(\mathcal{D}'_{\gamma_i,\overline{\gamma_i}}(U;E),\mathcal{I}_{ibi})^{l-1};(\mathcal{E}'_{(-\gamma_i)^c,\overline{(-\gamma_i)^c}}(U;E^*),\mathcal{I}_{ibi})]\epsilon \mathfrak{F}_{(i,1)}(\Gamma)\hookrightarrow\\& L_{\beta e}[(\mathcal{D}'_{\gamma_i,\overline{\gamma_i}}(U;E),\mathcal{I}_{ibi})^{l-1};(\mathcal{E}'_{(-\gamma_i)^c,\overline{(-\gamma_i)^c}}(U;E^*),\mathcal{I}_{ibi})]\epsilon[L_{\beta e}[(\mathcal{E}'_{\gamma_i',\overline{\gamma_i'}}(U;E^*),\mathcal{I}_{ibi});(\mathcal{D}'_{\gamma_i,\overline{\gamma_i}}(U;E),\mathcal{I}_{ibi})]\epsilon \mathfrak{F}]\\&\simeq L_{\beta e}[(\mathcal{D}'_{\gamma_i,\overline{\gamma_i}}(U;E),\mathcal{I}_{ibi})^{l}\times(\mathcal{E}'_{\gamma_i',\overline{\gamma_i'}}(U;E^*),\mathcal{I}_{ibi});(\mathcal{D}'_{\gamma_i,\overline{\gamma_i}}(U;E),\mathcal{I}_{ibi})]\epsilon \mathfrak{F}
\end{align*}
Thus from the extended map before the proof $R^{(l)}(\varphi)[F^{(m+1)}(\varphi)]=R^{(l)}(\varphi)\circ_{l+1}[F^{(m+1)}(\varphi)]\in L_{\beta e}[(\mathcal{D}'_{\gamma_i,\overline{\gamma_i}}(U;E),\mathcal{I}_{ibi})^{l+m};(\mathcal{D}'_{\gamma_i,\overline{\gamma_i}}(U;E), \mathcal{I}_{ibi})]\epsilon \mathfrak{E}$ and from the way it is composed from smooth and bounded multilinear maps, it is again (conveniently) smooth in $\varphi.$

Since (using a partial transpose which gives the same map by symmetry of derivatives) we have  $G^{(n+1)}(\varphi)\in  L_{\beta e}[(\mathcal{D}'_{\gamma_i,\overline{\gamma_i}}(U;E),\mathcal{I}_{ibi})^{n-1}\times(\mathcal{D}'_{(-\gamma_i')^c,\overline{-(\gamma_i')^c}}(U;E), \mathcal{I}_{ibi});(\mathcal{E}'_{(-\gamma_i)^c,\overline{(-\gamma_i)^c}}(U;E^*) )]\epsilon \mathfrak{E}.$

This shows one can also pair $\langle R^{(l)}(\varphi)[F^{(m+1)}(\varphi)],G^{(n+1)}(\varphi)\rangle$ with the $\mathfrak{E}$-valued duality pairing to get a smooth map with value in $L_{\beta e}[(\mathcal{D}'_{\gamma_i,\overline{\gamma_i}}(U;E),\mathcal{I}_{ibi})^{l+m+n-1};(\mathcal{E}'_{\gamma_i',\overline{\gamma_i'}}(U;E^*),\mathcal{I}_{ibi})]\epsilon \mathfrak{E}$ if the last variable (evaluated by duality) is a variable below $G^{(n+1)}(\varphi)$ (one uses multiple times associativity of $\epsilon$-products to do this).

If the last variable is below $F$, we have a partial transpose on the term with $F^{(m+1)}(\varphi)$ as above for $G$, and uses then a $$R^{(l)}(\varphi)\in  L_{\beta e}[(\mathcal{D}'_{\gamma_i,\overline{\gamma_i}}(U;E),\mathcal{I}_{ibi})^{l}\times(\mathcal{E}'_{(-\gamma_i^c),\overline{(-\gamma_i)^c}}(U;E^*),\mathcal{I}_{ibi});(\mathcal{D}'_{(-\gamma_i')^c,-\overline{(\gamma_i')^c}}(U;E),\mathcal{I}_{ibi})]\epsilon \mathfrak{F}$$ and the reasoning is similar. Finally, if the last variable is below $R$, one finally uses $\mathfrak{F}_{(i,2)}(\Gamma)$ to get :\begin{align*}R^{(l)}(\varphi)\in&  L_{\beta e}[(\mathcal{D}'_{\gamma_i,\overline{\gamma_i}}(U;E),\mathcal{I}_{ibi})^{l}\\&\times\mathcal{D}'_{(-\gamma_i')^c,\overline{(-\gamma_i')^c}}(U;E),\mathcal{I}_{ibi})^{l}\times(\mathcal{E}'_{\gamma_i',\overline{\gamma_i'}}(U;E^*),\mathcal{I}_{ibi});(\mathcal{D}'_{(-\gamma_i')^c,-\overline{(\gamma_i')^c}}(U;E),\mathcal{I}_{ibi})]\epsilon \mathfrak{F}.\end{align*} From all  hypocontinuities of our building maps, it is then obvious to get the stated hypocontinuity in the full support case. It remains to check the support condition for the  distributions given for $\psi\in B\subset\mathscr{B}$ \begin{align*}h&\mapsto \psi([R(F,G)]^{(1)}(\varphi)[h])\\&=\psi(\langle R^{(1)}(\varphi)[h,F^{(1)}(\varphi)],G^{(1)}(\varphi)\rangle) +\psi(\langle R(\varphi)[F^{(2)}(\varphi)[h]],G^{(1)}(\varphi)\rangle) + \psi(\langle R(\varphi)[F^{(1)}(\varphi)],G^{(2)}(\varphi)[h]\rangle)\end{align*}
The two last terms are supported respectively in $\text{supp}(F,\mathfrak{E}'),$ $\text{supp}(G,\mathfrak{E}')$. Indeed let $h\in \mathcal{D}(U)$ with $\text{supp}(h)\cap \text{supp}(F,\mathfrak{E}')=\emptyset$, and any $g\in \mathcal{D}(U), \psi\in \mathfrak{E}'$  then $\psi(F^{(1)}(\varphi)[h])=0$ thus if we differentiate in $\phi$ $\langle F^{(2)}(\varphi)[h], g\otimes \psi\rangle=\psi(F^{(2)}(\varphi)[h,g])=0$ where $F^{(2)}(\varphi)[h]$ is seen in $(\mathcal{E}'_{(-\gamma_i)^c,\overline{(-\gamma_i)^c}}(U;E^*) )]\epsilon \mathfrak{E}$ and thus by definition and density  of $\mathcal{D}(U)\subset (\mathcal{E}'_{(-\gamma_i)^c,\overline{(-\gamma_i)^c}}(U;E^*) )'_c$ one gets 
$F^{(2)}(\varphi)[h]=0$ in this space and thus $\psi(\langle R(\varphi)[F^{(2)}(\varphi)[h]],G^{(1)}(\varphi)\rangle)=0.$

%for instance $f\mapsto \psi(\langle R(\varphi)[f],G^{(1)}(\varphi)\rangle$ for $\psi\in B, \varphi\in\mathscr{U}$ form a family of continuous forms on $(\mathcal{E}'_{\gamma_i',\overline{\gamma_i'}}(U;E^*),\mathcal{I}_{ibi}))\epsilon \mathfrak{E}=(\mathcal{E}'_{\gamma_i',\overline{\gamma_i'}}(U;E^*),\mathcal{I}_{ibi}))\hat{\o}_\pi \mathfrak{E}$
It remains to consider the first term. %but the support condition for a vector valued distribution implies for any $\psi\in \mathfrak{E}'$, $\psi(\langle R(.)[F^{(1)}(\varphi)],G^{(1)}(\varphi)\rangle)$ %have its first derivative supported (uniformly in $\psi$ $F,G$) in a element of $\mathcal{C}$ which is called $\text{supp}(A)$

Let $C=\{A\mapsto \psi(\langle A\circ_1[F^{(1)}(\varphi_1)],G^{(1)}(\varphi_2)\rangle), \varphi_i\in \mathscr{U},\psi \in B\}$ then $C\in \mathscr{B}(\Gamma,\mathscr{B},\mathcal{C})$ by definition since $\{F^{(1)}(\varphi_1), \varphi_i\in \mathscr{U}\}\subset (\mathcal{E}'_{(-\gamma_i)^c,\overline{(-\gamma_i)^c}}(U;E^*) )\epsilon \mathfrak{E}$ and is uniformly supported uniformly in elements of $\mathscr{B}$ in $\mathcal{C}.$

Thus for $\psi'\in C$, $h\in \mathcal{D}(U),$ $\text{supp}(h)\cap \text{supp}(R,C)=\emptyset$ we have $\psi'(R^{(1)}(\varphi)[h])=[\psi'(R(\varphi))]^{(1)}[h]=0$ and especially $\psi(\langle R^{(1)}(\varphi)[h,F^{(1)}(\varphi)],G^{(1)}(\varphi)\rangle=0$. As a conclusion , we have $$\text{supp}(R(F,G),B)=\overline{\bigcup_{\varphi\in \mathscr{U},\psi\in B}\text{supp}([R(F,G)]^{(1)}(\varphi)[\psi])}\subset C\cup \text{supp}(F,\mathfrak{E}')\cup\text{supp}(G,\mathfrak{E}')\in \mathcal{C},$$
 since $\mathcal{C}$ is polar thus stable by finite unions and this concludes to the expected support condition. This also shows compatibility with the strict inductive limits and thus by \cite{Melnikov}, hypocontinuity needs only be checked in the case $\mathcal{C}=\mathcal{F}$ and this is obvious by our formulas for derivative in terms of our various hypocontinuous products. %need even \gamma-\beta hypocontinuity for \circ_j to have smooth images of finite dimensional compact sets and even less than $\gamma$
\end{proof}

\section{Application to Retarded and advanced products }\label{Retarded}As we will explain in the next remark, the construction of retarded products explained in \cite{RibeiroBF} fails, as stated there, on microcausal functionals. Let us apply our result in the case \textrm{$\mathfrak{E}= \mathfrak{F}=\C$} for simplicity. We mostly follow their physical setting.

We stick to the setting of \cite{BarGinouxPfaffe} (but with reversed convention on what is called retarded/advanced to follow \cite{RibeiroBF}) to which we refer for terminology, for the first result which gathers in our functional analytic language known results similar to \cite[Corol 3.2.5]{RibeiroBF}. Especially, $E^*$ is now obtained using a volume form for any (fixed) Riemanian metric on $M$ (cf. p 167) and a vector bundle isomorphism $*:E^*\to E$ for some scalar product identifying sections of $E^*$ and $E$.

We call $L(g)=[\overline{V_+(g)}\cap (V_+(g))^c] \cup[\overline{V_-(g)}\cap (V_-(g))^c]\subset \dot{T}^*M$  the light cone bundle  (with the notation of example \ref{Opmicrocausal}).

\begin{proposition}\label{ExtendedRetarded}
Let $(M,g)$ a globally hyperbolic Lorentzian manifold, $P$ a normally hyperbolic operator acting on sections of a vector bundle $E\to M$ (especially its principal symbol is related as usual to $\hat{g}\o Id_E$ for $\hat{g}$ globally hyperbolic). We assume $ V_{\mp}(\hat{g})\subset V_{\mp}(g).$ Then the retarded Green's operator $\Delta_{P}^{ret}$ (resp. the advanced Green's operator $\Delta_{P}^{adv}$) from $\mathcal{D}(M,E)\to \mathcal{E}(M,E)$ with $\text{supp}(\Delta_{P}^{ret}(v))\subset J^{+,\hat{g}}(\text{supp}(v))\subset J^{+,{g}}(\text{supp}(v))$ (resp. $\text{supp}(\Delta_{P}^{adv}(v))\subset J^{-,\hat{g}}(\text{supp}(v))$) extends continuously to $\mathcal{D}'(M,\mathcal{K}_P(g),E)\to \mathcal{D}'(M,\mathcal{K}_P(g);E)$ (resp. $\mathcal{D}'(M,\mathcal{K}_F(g),E)\to \mathcal{D}'(M,\mathcal{K}_F(g);E)$ with the notation of \cite[Ex 16]{Dab14a}\begin{comment}example \ref{enlargeable3}\end{comment}
. 
Moreover  for any $\gamma\subset L(\hat{g})^c$ or $\gamma\supset L(\hat{g})$, any $\gamma\subset\Lambda\subset \overline{\gamma},$ either equal to $\overline{\gamma}$ or satisfying the same condition as $\gamma$, any $\mathcal{C}_P\in\{\mathcal{K}_P(g),\mathcal{SK}_P(g)\},\mathcal{C}_F\in\{\mathcal{K}_F(g),\mathcal{SK}_F(g)\}$ and finally any  $\mathcal{I}$ among $\mathcal{I}_{ppp},\mathcal{I}_{ibi}$ it can also be extended to :
$$\Delta_{P}^{ret}:(\mathcal{D}_{\gamma,\Lambda}'(M,\mathcal{C}_P,E),\mathcal{I})\to (\mathcal{D}_{\gamma,\Lambda}'(M,\mathcal{C}_P;E),\mathcal{I}),$$
$$\Delta_{P}^{adv}:(\mathcal{D}_{\gamma,\Lambda}'(M,\mathcal{C}_F,E),\mathcal{I})\to (\mathcal{D}_{\gamma,\Lambda}'(M,\mathcal{C}_F;E),\mathcal{I}).$$
Finally, assume a family of $P$ as above is continuously parametrized by a compact set as seen in $L(\mathcal{D}(M,\mathcal{K};E),\mathcal{D}(M,\mathcal{K};E))$ with the weak topology and gives bounded set of $$\Delta_{P}^{ret}\in L_{b}(\mathcal{D}(M,E),\mathcal{E}(M,E)).$$ If moreover the inclusion of light cones are uniform for the symbols of $P$ in the family then the corresponding extensions form also bounded sets in the spaces they are stated to live in, for the topology of convergence on bounded sets.
\end{proposition}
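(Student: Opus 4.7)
The plan is to proceed in three stages: extend the retarded propagator by duality to get the support-only statement, propagate the wave front set control via H\"ormander's theorem, and deduce continuity and the parametrized bound from the bornological description of lemma \ref{BoundedRelated}.

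First, I would construct the basic extension using the formal transpose. For $u\in\mathcal{D}'(M,\mathcal{K}_P(g);E)$ and $\phi\in\mathcal{D}(M;E^*)$, the adjoint $\Delta_{P^*}^{adv}\phi$ is a smooth section with support in $J^{-,\hat g}(\text{supp}(\phi))\subset J^{-,g}(\text{supp}(\phi))$, hence future-compact. Since past-compact and future-compact closed sets intersect in a compact set in a globally hyperbolic spacetime, the pairing $\langle u,\Delta_{P^*}^{adv}\phi\rangle$ makes sense and defines $\Delta_P^{ret}u$ as a distribution with support in $J^{+,\hat g}(\text{supp}(u))$, which lies in $\mathcal{K}_P(g)$ thanks to the assumption $V_{-}(\hat g)\subset V_{-}(g)$. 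The spacelike-past-compact case $\mathcal{SK}_P$ is analogous, using the characterization $\mathcal{SK}_P=(\mathscr{O}_{\mathcal{K}_F})^o$ recalled in the preliminaries so that the support transport can be checked by polar duality.

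Second, I would establish wave front set control. Set $v=\Delta_P^{ret}u$; by construction $Pv=u$, and the characteristic set of $P$ is exactly $L(\hat g)$. On $L(\hat g)^c$, $P$ is elliptic, so elliptic regularity forces $WF(v)\cap L(\hat g)^c\subset WF(u)\cap L(\hat g)^c\subset\gamma$. H\"ormander's propagation of singularities then states that $WF(v)\cap L(\hat g)\setminus WF(u)$ is a union of null $\hat g$-bicharacteristics. If $\gamma\supset L(\hat g)$, this yields $WF(v)\subset WF(u)\cup L(\hat g)\subset\gamma$ at once. If $\gamma\subset L(\hat g)^c$, then $WF(u)\cap L(\hat g)=\emptyset$, and any point of $WF(v)\cap L(\hat g)$ would lie on an entire bicharacteristic inside $WF(v)$; projecting to $M$ and extending backward along the flow, we obtain a null geodesic staying in $\text{supp}(v)\subset J^{+,\hat g}(\text{supp}(u))$, but past-compactness of $\text{supp}(u)$ forces this geodesic eventually to leave $J^{+,\hat g}(\text{supp}(u))$, where $v$ is smooth, contradicting membership in $WF(v)$. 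The dual wave front set $DWF$ is handled identically using the graded $H^s$-version of propagation of singularities, and the intermediate cone $\Lambda$ is then treated either by closure (when $\Lambda=\overline\gamma$, exploiting that $\overline{WF(v)}$ and $\overline{DWF(v)}$ satisfy the same bounds) or verbatim under its own structural hypothesis.

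Finally, continuity for $\mathcal{I}_{ppp}$ and $\mathcal{I}_{ibi}$ follows from the bounded-set description of lemma \ref{BoundedRelated}: it suffices to check that $\Delta_P^{ret}$ sends a set bounded in the bornological inductive-limit sense (uniformly supported in some $C\in\mathcal{C}_P$, with $H^n$-wave front sets bounded in a common closed subcone of $\gamma$) into a set of the same form, which is the uniform version of the arguments above. Combined with density of $\mathcal{D}(M;E)$ (proposition \ref{ApproximationProp}) and the quasi-LB/webbed structure (theorem \ref{FAGeneral2}, proposition \ref{QLBEpsilon}), one closes the argument by a closed graph step for the original topologies. The parametrized statement is then immediate: the uniform inclusion of light cones makes the elliptic and propagation of singularities estimates uniform in the parameter, so boundedness in $L_b(\mathcal{D}(M,E),\mathcal{E}(M,E))$ together with the weak continuity of the family lifts to boundedness in the target operator topology of convergence on bounded sets. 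The main obstacle I anticipate is precisely the uniformity of the bicharacteristic/past-compact argument for the dual wave front set on bounded sets; carrying the $H^s$-propagation of singularities through with control by a fixed closed subcone requires a careful use of the bornological description of lemma \ref{BoundedRelated} rather than a naive pointwise estimate.
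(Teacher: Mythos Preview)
Your three-stage strategy---duality extension, propagation of singularities, then continuity via the bornological/webbed structure---matches the paper's approach, and your wave-front argument (including the bicharacteristic exit argument when $\gamma\subset L(\hat g)^c$) is more explicit than the paper, which simply cites Duistermaat and Strohmaier for this step.

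The differences worth noting are in the continuity and parametrized-boundedness steps. For continuity, rather than checking directly that bounded sets go to bounded sets, the paper first reduces to the case $\gamma$ open via the projective description \eqref{Alternativeppp} (and to $\Lambda=\overline\gamma$ by completion), and then applies De Wilde's closed graph theorem in one stroke: the source is ultrabornological, the target is quasi-LB hence strictly webbed, and the graph is closed because the map is already known to be continuous at the $\mathcal{D}'$ level. This is cleaner than your plan of verifying the bounded-set description of lemma \ref{BoundedRelated} by hand and only then invoking closed graph. For the parametrized family, your claim that uniform cone inclusion ``makes the estimates uniform'' is the right intuition but skips the mechanism: the paper uses barrelledness to reduce to \emph{pointwise} boundedness, then applies the explicit parametrix construction of H\"ormander's Theorem 8.3.1 uniformly over the compact parameter set to get equicontinuity in the $\mathcal{I}_{ppp}$, $\gamma$ open case; the remaining cases follow by completion and the projective description. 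Crucially, the $\mathcal{I}_{ibi}$ boundedness is \emph{not} obtained by bornologification (which does not transport bounds of families of operators) but by a duality trick: one passes to the adjoints $(\Delta_P^{ret})^*$, uses barrelledness on the dual side to upgrade strong pointwise boundedness to equicontinuity, and then transfers back. Your sketch does not address this last point, and it is the one place where a naive argument would stall.
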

Note that our statement implies that the if part of \cite[Corol 3.2.5 (c)]{RibeiroBF} is wrong in taking $\gamma=V_+$ there may be points in the wave front set in $\overline{V_+}$ (with $DWF$ in $V_+$) without any point in the light cone in $WF(\Delta_P^{ret}(v))$ outside of $\overline{V_+}.$

\begin{proof}
From \cite[Prop 3.4.8 p 91]{BarGinouxPfaffe}, we know sequential continuity $\Delta_{P}^{ret}: \mathcal{D}(M,E)\to \mathcal{D}(M,\mathscr{O}_{\mathcal{SK}_P(\hat{g})},E)\to \mathcal{D}(M,\mathscr{O}_{\mathcal{SK}_P({g})},E)$  (resp. $\Delta_{P}^{adv}: \mathcal{D}(M,E)\to \mathcal{D}(M,\mathscr{O}_{\mathcal{SK}_F(g)},E)$, as stated we only know with value in $\mathcal{SK}(\hat{g})$ but the defining support condition implies this automatically). Note that in a bounded family case one gets something bounded here since the support control by $g$ gives a uniform control on support. Since the spaces involved are (quasi)-complete separated locally convex space whose strong dual is a Schwartz space using \cite[Prop 8]{Dab14a}\begin{comment}proposition \ref{FApropertiesSupport}\end{comment}
, an application of \cite[lemma 21]{BrouderDabrowski} implies they are Mackey-sequentially continuous, thus bounded, thus continuous since the spaces are (ultra)bornological. Thus we have well defined continuous adjoint maps (on Mackey=strong duals) $(\Delta_{P}^{ret})^*: \mathcal{D}'(M,\mathcal{K}_F({g}),E^*)\to\mathcal{D'}(M,E^*)$ (resp. $(\Delta_{P}^{adv})^*: \mathcal{D}'(M,\mathcal{K}_P({g}),E^*)\to\mathcal{D'}(M,E^*)$).
But from \cite[lemma 3.4.4 p 89]{BarGinouxPfaffe}, they correspond to extensions of $\Delta_{P^*}^{adv}$ (resp. $\Delta_{P^*}^{ret}$). Exchanging $P,P^*$ and using again the support condition on the dense space of smooth maps one deduces the first extension with modified target.
Moreover, since $J^{\pm,\hat{g}}(J^{\pm,g}(.))\subset J^{\pm,g}(.)$, evaluated at compacts or for $\Sigma$ Cauchy surface for  $g$ thus $\hat{g}$, $\mathcal{SK}_P(\hat{g})\subset \mathcal{SK}_P({g})$ and thus by duality  $\mathcal{K}_P({g})\subset \mathcal{K}_P(\hat{g})$ and both support conditions are stable by  $\Delta_{P}^{ret}$. Thus one gets the stated spaces of value of the extension. The bounded family yields bounded family of extensions as above.

 From propagation of singularity theorems for wave front set of solutions (cf. e.g.  \cite{Duistermaat} or \cite[Th 13,15]{stromaier}) and the statement on support, it is well known that the spaces claimed to be left stable by $\Delta_{P}^{ret/adv}$ are indeed stable in case $\gamma=\Lambda.$ It remains to see continuity, first for $\mathcal{I}_{ppp}$. 
Since open cones containing $\gamma$ can follow the same constraint (since $L(\hat{g})$ closed) we are reduced using \eqref{Alternativeppp} to the case $\gamma$ open (using also the completion to go to $\Lambda=\overline{\gamma}.)$  But in the open case $\mathcal{I}_{pmp}=\mathcal{I}_{iii}$ is both ultrabornological by  \cite[Th 23]{Dab14a}\begin{comment}Theorem \ref{DualityBD}\end{comment}
 and quasi-LB thus strictly webbed by Theorem \ref{FAGeneral2} (since either the support condition or its dual is countably generated). Applying De Wilde's closed graph theorem \cite[\S 35.2.(2)]{Kothe2}, it suffices to check our map is sequentially closed. But from the continuity at level $\mathcal{D}'$ and the continuous injection, this is obvious. Going to 
$\Lambda=\overline{\gamma}$ by completion, it only remains to check a known wave front set condition to get general $\Lambda$ (since the topology in this case is induced). The case $\mathcal{I}_{ibi}$ is a consequence by bornologification.

For the boundedness statement, it only remains to consider the extensions with wave front set conditions. We start again from the case $\mathcal{I}_{ppp}$ for $\gamma=\Lambda\supset L(\hat{g})$ (for all symbols in the family of $P$) open. By barrelledness and a standard result \cite[\S 39.3.(2)]{Kothe2}, it suffices to see the family is pointwise bounded (and we even get an equicontinuous family not only a bounded one in this case). Thus take $u$ with $WF(u)=\Gamma\subset \gamma$, we want to prove boundedness of $\{\Delta_{P}^{ret}(u)\}\subset \mathcal{D}_{\Gamma,\Gamma}'(M,\mathcal{F};E)$ (since from the known support conditions and the inductive limit definition in the $\gamma$  open case this will be enough). But our family of $P$ are continuously parametrized on a compact in the sense stated above which implies that  described in coordinates, so are the coefficients of the differential operators and all their derivatives in positions. But now, a cone in $\gamma^c$ has been assumed not to intersect any $char(P)$ in the family, thus from the continuity in parameters, we can apply the argument in \cite[Th 8.3.1]{Hormander}, get (8.3.5) uniformly on parameters the boundedness of coefficients of $R_j$'s also uniformly on our compact set of parameters, and thus also the conclusion, which gives exactly the expected boundedness $\mathcal{D}_{\Gamma,\Gamma}'(M,\mathcal{F};E)$ (the seminorms of $\mathcal{D}'$ have been treated before). 

We thus got $\{\Delta_{P}^{ret}\}\subset L_b(\mathcal{D}_{\gamma,\gamma}'(M,\mathcal{C}_P;E);\mathcal{D}_{\gamma,\gamma}'(M,\mathcal{C}_P;E))$ is bounded.
From the description of bounded sets in the completion in lemma \ref{quasiCompletion} one obtains the case $\Lambda=\overline{\gamma}$ by completion, then by the projective description \eqref{Alternativeppp} for any $\gamma$ in the case $\Lambda=\overline{\gamma}$ and then as above by induced topology for any $\Lambda.$

For the topology $\mathcal{I}_{ibi}$, a bornologification argument doesn't work here but we reason as follows. Consider first the case $\Lambda=\overline{\gamma}$ and look at adjoint maps $(\Delta_{P}^{ret})^*$ thus defined for ultrabonological (thus barrelled) topologies $\mathcal{I}_b$. The boundedness proved for the adjoints prove strong pointwise boundedness (and the strong topology is known to be $\mathcal{I}_b$ by \cite[Prop 33]{Dab14a}\begin{comment}proposition \ref{FAGeneral}\end{comment}
). From \cite[\S 39.3.(2)]{Kothe2} as above the family $(\Delta_{P}^{ret})^*$ is thus equicontinuous thus bounded, thus bounded as above at the completion level and since in this case equicontinuous sets and bounded sets coincide, we deduce the adjoint are bounded for the convergence on bounded sets for $\mathcal{I}_{ibi}$ in the case $\Lambda=\overline{\gamma}.$ The general $\Lambda$ case follows since the topology is induced as before. This finishes all the cases in the case $\gamma\supset L(\hat{g}).$ By duality and exchanging advanced and retarded propagators, one gets the case $\gamma\subset L(\hat{g})^c$ in considering $\lambda=-\gamma^c\supset L(\hat{g}).$ The assumption on $P$ gives the boundedness in $\lambda$ of coefficients of the adjoint differential operators and their derivatives that is what we used for the support case, and since we already know the case of full wave front set, one gets $\Delta_{P}^{ret}\in L_{b}(\mathcal{E}'(M,E),\mathcal{D}'(M,E))$ bounded and thus by duality the assumed boundedness at the adjoint level. 
As in the proof before one gets boundedness for $\mathcal{I}_{ibi}$ of the adjoints, thus of $\Delta_{P}^{ret}$ and one goes from this to $\mathcal{I}_{ppp}$ using the previous was its bornologification.

 %it is well known in the full support case that the extensions are sequentially continuous for H\"ormander's pseudotopology, say $\Delta_{P}^{ret}:\mathcal{D}_{(V_{\pm}(g))^{c},(V_{\pm}(g))^{c}}(M,\mathcal{F},E)\to (\mathcal{D}_{(V_{\pm}(g))^{c},(V_{\pm}(g))^{c}}'(M,\mathcal{F};E),$ (the classical result of propagation of singularity being giving it is well-defined, note we use closed cones). This agrees with sequential convergence for our topology $\mathcal{I}_H$, and then from support properties from $\mathcal{F}$ replaced by  $\mathcal{C}\in\{\mathcal{SK}_P,\mathcal{K}_P\}$ with $\mathcal{I}_{H,iii}$ thus as above with boundedness from \cite[lemma 21]{BrouderDabrowski} and thus by bornologification to continuity : $$\Delta_{P}^{ret}:(\mathcal{D}_{(V_{\pm}(g))^{c},(V_{\pm}(g))^{c}}(M,\mathcal{C},E),\mathcal{I}_{ibi})\to (\mathcal{D}_{(V_{\pm}(g))^{c},(V_{\pm}(g))^{c}}'(M,\mathcal{C};E),\mathcal{I}_{ibi}).$$

%By duality, one also deduces the result for their Mackey duals, finishing the case with $\gamma=V_{\pm}(g)$ (an open cone for which the previous direct reasonning is not available).
\end{proof}

The next result is a variant adapted to our setting of \cite[Prop 3.2.8]{RibeiroBF}. Since the proof is not included there and would need, in our view, to go back to parameter dependence of $\Delta_{P}^{ret}$ along smooth curve, at least at the level of smooth functions that can in principal be obtained by using parameter dependent variants of all local geometric objects (exponential maps, parallel transport) used in \cite{BarGinouxPfaffe} and controlling global geometric data uniformly by given controlling globally hyperbolic metrics, we don't enter in these details and only assume a reasonable minimal result needed to make work functional analytic argument. Going much further in the construction of retarded Green's operator is clearly not the purpose of this article.

 Following them and \cite{DutschBF}, we call generalized Lagrangian a  map $\mathscr{L}:\mathcal{D}(U)\to \mathscr{F}_{\mu loc}( \mathscr{U},\C)$ (this space is defined in example \ref{microlocal}) with $\text{supp}(\mathscr{L}(f))\subset \text{supp}(f)$ and $\mathscr{L}$ additive in $f$ as defined in the cited example.

From the definition we know that $\mathscr{L}(f)^{(2)}(\varphi)\in \mathcal{D}'_{C_2,C_2}(M^2,\mathcal{K},(E^{\otimes 2})^*)$ and is smooth in $\varphi.$ Noting that $C_2\subset (-\gamma\dot{\times}\gamma^c)^c$ for any cone $\gamma\subset \dot{T}^*M$, and composing with the canonical map built in proposition \ref{multilinear} one gets an image $$P_{\mathscr{L}(f),\varphi}\in L((\mathcal{D}'_{\gamma,\overline{\gamma}}(M,\mathcal{C};E),\mathcal{I}_{ibi}),(\mathcal{D}'_{\gamma,\overline{\gamma}}(M,\mathcal{C};E^*),\mathcal{I}_{ibi})).$$
Noting that if $f,g\in \mathcal{D}(M)$ equal $1$ on a neighborhood of $\text{supp}(u)\in \mathcal{K}$, $P_{\mathscr{L}(f),\varphi}(u)=P_{\mathscr{L}(g),\varphi}(u)$ starting with $u$ smooth, since by \cite[lemma 3.1.2]{RibeiroBF} $\text{supp}(\mathscr{L}^{(1)}(f)-\mathscr{L}^{(1)}(g))\subset \text{supp}(f-g)$ so that just after one derivative evaluating it to $u$ make vanish the difference. 
As a consequence, take now $u\in \mathcal{D}'_{\gamma,\overline{\gamma}}(M,\mathcal{C};E), v\in [\mathcal{D}'_{\gamma,\overline{\gamma}}(M,\mathcal{C};E)]'$ thus supported respectively in $C\in \mathcal{C}, D\in (\mathscr{O}_\mathcal{C})^o$ with $C_\epsilon\cap D_\epsilon$ compact, and approximating by $u,v$ smooth one gets, assuming this time $f,g$ equal to $1$ on a neighborhood of $C_\epsilon\cap D_\epsilon$ $\langle P_{\mathscr{L}(f),\varphi}(u),v\rangle =\langle P_{\mathscr{L}(f),\varphi}(uh),v\rangle =\langle P_{\mathscr{L}(g),\varphi}(uh),v\rangle=\langle P_{\mathscr{L}(g),\varphi}(u),v\rangle,$ using $h$ equal to $1$ on a neighborhood of $\text{supp}(v)$ supported in $D_\epsilon$ and the fact that additivity of $\mathscr{L}(f),\mathscr{L}(g)$ implies its second derivative is supported on the diagonal \cite[prop 2.3.11]{RibeiroBF}.

Arguing with the inductive limit definition of the topologies (also on the dual) one can thus define $\langle P_{\mathscr{L},\varphi}(u),v\rangle=\langle P_{\mathscr{L}(f),\varphi}(u),v\rangle,$ $f\in \mathcal{D}(M)$ equal $1$ on a sufficiently huge neighborhood of $\text{supp}(u)\cap \text{supp}(v)$ and one gets for any cone $\gamma$  a continuous map (agreeing for different $\gamma$ and polar enlargeable families $\mathcal{C}$):
$$P_{\mathscr{L},\varphi}\in L((\mathcal{D}'_{\gamma,\overline{\gamma}}(M,\mathcal{C};E),\mathcal{I}_{ibi}),\mathcal{D}'_{\gamma,\overline{\gamma}}(M,\mathcal{C};E^*),\mathcal{I}_{ibi})).$$

\begin{proposition}\label{SmoothRetarded} Let $(M,g),(M,g')$ globally hyperbolic. Let $\mathscr{L}$ a generalized Lagrangian (on  $\mathscr{U}\subset \mathcal{E}(M,E)$ smooth sections of a bundle $E\to M$).
Assume given  a smooth curve on $\R$, $\lambda\mapsto \varphi_\lambda\in\mathscr{U}$ and the associated  $P:\lambda\mapsto P_\lambda:=P_{\mathscr{L},\varphi_\lambda}$ and assume $*P_\lambda$ are  normally hyperbolic  differentials operators on $E\to M$ with metric $\hat{g}_\lambda$ globally hyperbolic and with $${V_{\pm}(g')}\supset {V_{\pm}(\hat{g}_\lambda)}\supset  {V_{\pm}(g)}.$$ %Also assume $P\in\C^\infty[\R;L_{\beta e}(\mathcal{D}(M;E);\mathcal{D}(M;E))]$ with all derivatives being differential operators and an analogous assumption for $P^*$.
Then for any $\gamma\in\{V_{\pm}(g), V_{\pm}(g)^c\},$ we have the smoothmness of $$P\in C^\infty[\R;L_{\beta e}(\mathcal{D}'_{\gamma,\overline{\gamma}}(M,\mathcal{C};E);\mathcal{D}'_{\gamma,\overline{\gamma}}(M,\mathcal{C};E^*))]$$ Also assume $\Delta_{*P_\lambda}^{ret}\in L_b(\mathcal{D}(M,E);\mathcal{E}(M,E))$  is locally bounded in $\lambda$, then for any $\gamma$ as above $\mathcal{C}_P\in\{\mathcal{K}_P(g'),\mathcal{SK}_P(g')\}$, the extension $\Delta_{P_\lambda}^{ret}=(\Delta_{*P_\lambda}^{ret})*\in L_{\beta e}[(\mathcal{D}_{\gamma,\overline{\gamma}}(M,\mathcal{C}_P,E^*),\mathcal{I}_{ibi})\to (\mathcal{D}_{\gamma,\overline{\gamma}}'(M,\mathcal{C}_P;E),\mathcal{I}_{ibi})]$ of the previous proposition is smooth in $\lambda$ too with :$$\frac{\partial}{\partial \lambda} \Delta_{P_\lambda}^{ret}=
-\Delta_{P_\lambda}^{ret}\dot{P}_\lambda\Delta_{P_\lambda}^{ret}.$$

There are corresponding results for advanced propagators.
\end{proposition}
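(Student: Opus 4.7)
The plan is to split the statement into two parts: smoothness of $\lambda \mapsto P_\lambda$, which is essentially inherited from the smoothness of $\varphi_\lambda$ and of the Lagrangian, and smoothness of $\lambda \mapsto \Delta^{ret}_{P_\lambda}$, which I would obtain from the classical resolvent identity together with the boundedness results of proposition~\ref{ExtendedRetarded}.

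First I would handle $\lambda \mapsto P_\lambda$. Since $\mathscr{L}(f) \in \mathscr{F}_{\mu loc}$, the derivative $\mathscr{L}(f)^{(2)}$ is smooth from $\mathscr{U}$ to $(\mathcal{D}'_{C_2,C_2}(M^2,\mathcal{K};(E^{\otimes 2})^*), \mathcal{I}_{ibi})$ by definition of the functional space. Composing with the smooth curve $\lambda \mapsto \varphi_\lambda$ and then with the continuous embedding constructed in the paragraph preceding the proposition, built from proposition~\ref{multilinear} (the inclusion $C_2 \subset (-\gamma \dot{\times} \gamma^c)^c$ holds for every cone $\gamma$), produces a smooth curve in $L_{\beta e}((\mathcal{D}'_{\gamma,\overline{\gamma}}(M,\mathcal{C};E),\mathcal{I}_{ibi}); (\mathcal{D}'_{\gamma,\overline{\gamma}}(M,\mathcal{C};E^*),\mathcal{I}_{ibi}))$. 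The $f$-independence of $P_{\mathscr{L}(f),\varphi}$ noted before the proposition, combined with the inductive limit structure on support classes, allows one to pass from $P_{\mathscr{L}(f),\varphi_\lambda}$ to $P_{\mathscr{L},\varphi_\lambda}$ while keeping smoothness.

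Next, for $\lambda \mapsto \Delta^{ret}_{P_\lambda}$, the main tool is the resolvent identity
\begin{equation*}
\Delta^{ret}_{P_\mu} - \Delta^{ret}_{P_\lambda} = -\Delta^{ret}_{P_\mu}(P_\mu - P_\lambda)\Delta^{ret}_{P_\lambda},
\end{equation*}
obtained by applying $\Delta^{ret}_{P_\mu}$ on the left of the elementary identity $P_\mu \Delta^{ret}_{P_\lambda} - \mathrm{Id} = (P_\mu - P_\lambda)\Delta^{ret}_{P_\lambda}$ valid on $\mathcal{D}(M,E)$, then extended by continuity and density using proposition~\ref{ExtendedRetarded}. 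Dividing by $\mu - \lambda$ and letting $\mu \to \lambda$, I would combine three inputs: (a) the smoothness of $\lambda \mapsto P_\lambda$ from the previous step, so that $(P_\mu - P_\lambda)/(\mu-\lambda) \to \dot{P}_\lambda$ in the operator topology; (b) the locally bounded family hypothesis on $\Delta^{ret}_{*P_\lambda} \in L_b(\mathcal{D}(M,E); \mathcal{E}(M,E))$, promoted to the distribution-level extensions via the bounded family statement of proposition~\ref{ExtendedRetarded} (where the sandwiching assumption $V_\pm(g') \supset V_\pm(\hat{g}_\lambda) \supset V_\pm(g)$ provides the uniform light-cone control required there); and (c) the hypocontinuity of composition of operators on these spaces. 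Together these imply convergence of the right-hand side of the resolvent identity, divided by $\mu - \lambda$, to $-\Delta^{ret}_{P_\lambda} \dot{P}_\lambda \Delta^{ret}_{P_\lambda}$ in $L_{\beta e}$.

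To upgrade from differentiability to $C^\infty$, I would induct on the order: the derivative formula is itself a hypocontinuous composition of $C^\infty$ curves in operator topologies, so that if $\lambda \mapsto \Delta^{ret}_{P_\lambda}$ is $C^k$, the formula exhibits it as $C^{k+1}$. Working in the convenient-smoothness framework of \cite{KrieglMichor}, smoothness is checked on smooth curves and this suffices. The main obstacle will be the passage to the limit within $L_{\beta e}$: since composition of operators is only hypocontinuous, a naive limit argument is insufficient, and one must carefully exploit the local equicontinuity (i.e.\ local boundedness for Mackey duality) of the propagator family together with the uniform-on-bounded-sets convergence of the difference quotient of $P_\lambda$ in order to secure joint convergence of the triple product. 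A secondary concern is that the composed operators respect the support classes $\mathcal{C}_P$ at every stage; this is ensured because $\dot{P}_\lambda$, arising from the Lagrangian second derivative concentrated on the diagonal (by additivity of $\mathscr{L}$), acts as a differential operator and thus preserves support.
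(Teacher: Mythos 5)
Your proposal is correct and follows essentially the same route as the paper: smoothness of $\lambda\mapsto P_\lambda$ via the local-functional structure and a uniform choice of cutoff on the relevant (equicontinuous) test sets, then the resolvent identity extended by density to $\mathcal{C}_P$-supported distributions, combined with the uniform boundedness of the propagator family from proposition~\ref{ExtendedRetarded} and a Mackey-convergence/difference-quotient argument in the convenient calculus, followed by induction using the (hypo)continuity of composition. The ``main obstacle'' you flag is precisely what the paper resolves by bounding the second difference quotient after subtracting the candidate derivative.
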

\begin{proof}
The smoothness of $P$ follows from the construction above and the definition of seminorms that only involves evaluating on sets uniformly supported in $\mathcal{C}$ and $(\mathscr{O}_\mathcal{C})^o.$ More precisely, one defines the candidate derivatives as above by inserting $f$ and uses a weak characterization as in \cite[corol 1.9 p 14]{KrieglMichor} but instead of evaluating boundedness on all elements of the dual, one evaluates it by the definition for the $\epsilon$ product on tensor product of equicontinuous sets for which the $f$ could be chosen uniformly. To get smoothness of $\Delta_{P_\lambda}^{ret}$ it suffices to prove the formula for the first derivative and apply induction using smoothness (boundedness, even hypocontinuity) of the composition map by proposition \ref{MultilinearComposition} between our spaces.

We first note that on compact sets of $\lambda$'s we have uniform boundedness in the stated spaces of $\Delta_{P_\lambda}^{ret}$ since $V_{\pm}(g)\subset L(\hat{g}_\lambda)^c$ for all $\lambda$ and the corresponding inclusion for complements that enable to use the boundedness statement in proposition \ref{ExtendedRetarded}.

To compute the first derivative, one also wants to check a resolvent equation : $$\Delta_{P_\lambda}^{ret}=\Delta_{P_\mu}^{ret}+\Delta_{P_\lambda}^{ret}(P_\mu-P_\lambda)\Delta_{P_\mu}^{ret}.$$

Indeed from the continuity of the operators used one can extend the defining relation $\Delta_{P_\lambda}^{ret}P_\lambda=Id$ to functions with support in $\mathcal{C}_P$
so that we have  $\Delta_{P_\mu}^{ret}=\Delta_{P_\lambda}^{ret}P_\lambda\Delta_{P_\mu}^{ret}$ and  also (the easier relation when applied to compact) $\Delta_{P_\lambda}^{ret}=\Delta_{P_\lambda}^{ret}P_\mu\Delta_{P_\mu}^{ret}$. Taking the difference, one gets the relation, first on compactly supported function, and then by continuity and density on distributions supported on $\mathcal{C}_P.$ 

%Moreover note that for any $B$ bounded $\Delta_{P_\lambda}^{ret}(B)$ is is bounded uniformly on compacts for $\lambda$, there is a compact sequence $\lambda_n$ and a sequence $v_n\in B$ with $\Delta_{P_{\lambda_n}}^{ret}(v_n)$ unbounded

Then from the equation, and continuity of composition, one checks $\Delta_{P_\lambda}^{ret}$ is continuous (for Mackey convergence since $\frac{1}{\lambda-\mu}(\Delta_{P_\mu}^{ret}-\Delta_{P_\lambda}^{ret})$ is bounded, using the uniform boundedness derived above from proposition \ref{ExtendedRetarded} and some boundedness from \cite[p 9]{KrieglMichor} since $P_\lambda$ smooth) and then from the usual computation of difference quotients, one gets the usual derivatives again by hypocontinuity of compositions and Mackey convergence argument using the proof \cite[Corol 1.9]{KrieglMichor} to get that after dividing by $\lambda-\mu$ the next expression is again bounded~: \begin{align*}\frac{1}{\lambda-\mu}&(\Delta_{P_\lambda}^{ret}-\Delta_{P_\mu}^{ret})+\Delta_{P_\mu}^{ret}\dot{P}_\mu\Delta_{P_\mu}^{ret}\\&=\Delta_{P_\mu}^{ret}[\frac{1}{\lambda-\mu}(P_\mu-P_\lambda)+\dot{P}_\mu]\Delta_{P_\mu}^{ret}+ (\Delta_{P_\lambda}^{ret}-\Delta_{P_\mu}^{ret})\frac{1}{\lambda-\mu}(P_\mu-P_\lambda)\Delta_{P_\mu}^{ret}.\end{align*}

\end{proof}
We are now ready to get retarded products from the results of our previous section. Some of the assumptions are probably redundant but we don't want to go back to the construction of retarded products in this paper to make them minimal, even though this is probably easy following \cite{BarGinouxPfaffe}.
\begin{proposition}\label{RetardedAlgebra}
Let $(M,g),(M,g')$ globally hyperbolic. Let $\mathscr{L}$ a generalized Lagrangian (on  $\mathscr{U}\subset \mathcal{E}(M,E)$ smooth sections of a bundle $E\to M$).
Assume that for any smooth curve  $\lambda\mapsto \varphi_\lambda\in\mathscr{U}$ and the associated   $*P_{\mathscr{L},\varphi_\lambda}$ are  normally hyperbolic  differential operators on $E\to M$ with metric $\hat{g}_\lambda$ globally hyperbolic and with ${V_{\pm}(g')}\supset {V_{\pm}(\hat{g}_\lambda)}\supset  {V_{\pm}(g)}.$ Finally assume that for any curve as above $\Delta_{*P_\lambda}^{ret}\in L_b(\mathcal{D}(M,E);\mathcal{E}(M,E))$  is locally bounded in $\lambda$. 

Let $\Gamma_{omc}$ be defined as in example \ref{Opmicrocausal} for $G=\{g\}$
 and define $R(\varphi)=\Delta^{ret}_{*P_{\mathscr{L},\varphi}}*$
and recall $\Gamma_{omc}'=\{(V_+(g),(V_-(g))^c),(V_+(g),V_+(g)),(V_-(g),(V_+(g))^c),(V_-(g),V_-(g))\}$ indexed by $\{+,-\}\times\{1,2\}$
Then $R\in {F}_{\Gamma_{omc}'}(\mathscr{U},\mathcal{K};(\C(\Gamma_{omc}),(\C_g(\Gamma_{omc})_{g\in \{+,-\}\times\{1,2\}},\mathscr{B}(\Gamma_{omc},\mathscr{B}_f,\mathcal{K})))$ and thus defines an hypocontinuous retarded bracket  on $\mathscr{F}_{\Gamma_{omc}}(\mathscr{U},\mathcal{K};\C)$  with topologies $\mathcal{I}_{si}$ or $\mathcal{I}_{ci}$. Assuming the corresponding assumption for an advanced bracket giving a map $A$, $\mathscr{F}_{\Gamma_{omc}}(\mathscr{U},\mathcal{C};(\mathfrak{E},\mathscr{B}_f))$ is thus a Poisson algebra (with hypocontinuous multiplication maps) with Poisson bracket $\{.,.\}=R(.,.)-A(.,.).$
\end{proposition}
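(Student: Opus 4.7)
The strategy is to recognize that once $R$ is shown to lie in $\mathscr{F}_{\Gamma_{omc}'}(\mathscr{U},\mathcal{K};(\C(\Gamma_{omc}),(\C_g(\Gamma_{omc}))_g,\mathscr{B}(\Gamma_{omc},\mathscr{B}_f,\mathcal{K})))$, Theorem \ref{AppliedComposition} directly yields the hypocontinuous retarded bracket $[R(F,G)](\varphi)=\langle R(\varphi)[F^{(1)}(\varphi)],G^{(1)}(\varphi)\rangle$, and the analogous statement for $A$ combined with the hypocontinuous pointwise product from Theorem \ref{AppliedFA} assembles the Poisson algebra structure. So the proof reduces entirely to this membership verification.

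Smoothness in $\varphi$ is tested along smooth curves $\lambda\mapsto \varphi_\lambda$ (convenient calculus on the $c^\infty$-open $\mathscr{U}$). Proposition \ref{SmoothRetarded} applies because its hypothesis is precisely what is being assumed: along every such curve, $\lambda\mapsto R(\varphi_\lambda)=\Delta^{ret}_{*P_{\mathscr{L},\varphi_\lambda}}*$ is smooth in each $L_{\beta e}(\mathcal{D}'_{\gamma,\overline{\gamma}};\mathcal{D}'_{\gamma,\overline{\gamma}})$ with $\gamma\in\{V_\pm(g),V_\pm(g)^c\}$, with derivatives given by iterated sandwiches $\Delta^{ret}\,\dot P\,\Delta^{ret}\,\dot P\,\cdots\,\Delta^{ret}$. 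These iterated expressions inherit the operator-space target structure by the hypocontinuous composition of Proposition \ref{MultilinearComposition}.

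The cone conditions reduce to the geometric observation that $V_\pm(g)\subset V_\pm(\hat g_\lambda)\subset V_\pm(g')$ with $V_\pm(\hat g_\lambda)$ open yields both $V_\pm(g)\subset L(\hat g_\lambda)^c$ and, by complementation, $V_\pm(g)^c\supset L(\hat g_\lambda)$. Thus both alternative cone regimes of Proposition \ref{ExtendedRetarded} apply using the ambient metric $g'$, producing continuous extensions on $(\mathcal{D}'_{\gamma,\overline{\gamma}}(M,\mathcal{K}_P(g');E),\mathcal{I}_{ibi})$ for every such $\gamma$. Passing through the involution $\gamma\leftrightarrow -\gamma^c$, one fills in each of the three types of operator spaces entering the definition of $\C(\Gamma_{omc})$, namely $L_{\beta e}(\mathcal{E}'_{\gamma_i',\overline{\gamma_i'}};\mathcal{D}'_{\gamma_i,\overline{\gamma_i}})$, $L_{\beta e}(\mathcal{E}'_{-\gamma_i^c,\overline{-\gamma_i^c}};\mathcal{D}'_{(-\gamma_i')^c,\overline{(-\gamma_i')^c}})$, and the mixed variant, via the single operator $\Delta^{ret}_{*P}*$ precomposed with the transpose identification $E\simeq E^*$ furnished by $*$. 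The support condition $\mathrm{supp}(R,B)\in\mathcal{K}$ for $B\in\mathscr{B}(\Gamma_{omc},\mathscr{B}_f,\mathcal{K})$ follows because $P_{\mathscr{L},\varphi}^{(k)}[h_1,\ldots,h_k]$ is supported in $\bigcap_i\mathrm{supp}(h_i)$ (additivity of $\mathscr{L}$ forces diagonal concentration of derivatives), so sandwiched between retarded propagators acting on uniformly compactly supported test vectors $a,a'$, the $\varphi$-support is controlled by $J^+_{g'}(K)\cap J^-_{g'}(K')$ for compacts $K,K'$, which is compact by global hyperbolicity of $g'$.

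With the membership of $R$ established, Theorem \ref{AppliedComposition} supplies the hypocontinuous retarded bracket on $\mathscr{F}_{\Gamma_{omc}}(\mathscr{U},\mathcal{K};\C)$ for the topologies $\mathcal{I}_{si},\mathcal{I}_{ci}$, and the symmetric argument supplies $A$. The Peierls bracket $\{F,G\}=R(F,G)-A(F,G)$ is then hypocontinuous; its antisymmetry on the on-shell quotient is the classical identity that $\Delta^{ret}-\Delta^{adv}$ is antisymmetric, the Leibniz rule is immediate from $F^{(1)}$ being a derivation combined with the hypocontinuous bilinearity of the pairing, and Jacobi is the usual identity obtained by expanding with the resolvent equation of Proposition \ref{SmoothRetarded}. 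The main obstacle is the bookkeeping in the cone verification of the third paragraph: matching the eight index pairs of $\Gamma_{omc}'$ against the three operator-space factors of $\C(\Gamma_{omc})$ via the involution $\gamma\leftrightarrow -\gamma^c$, where one must check that all combinations are covered by just the two cone regimes permitted in Proposition \ref{ExtendedRetarded}; everything else is routine once this matching is laid out.
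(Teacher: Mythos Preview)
Your outline follows the paper's proof closely and identifies the correct tools (Proposition \ref{SmoothRetarded} for smoothness along curves, Proposition \ref{MultilinearComposition} for compositions, Theorem \ref{AppliedComposition} for the bracket, and global hyperbolicity of $g'$ for the support bound). However, you underestimate the part you call ``bookkeeping'': it is the heart of the argument, and your treatment of it has two real gaps.

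First, the iterated derivatives of $R$ are not just sandwiches $\Delta^{ret}\dot P\Delta^{ret}\cdots$: differentiating $R^{(1)}=-R\dot P R$ again produces terms involving $d^{(k)}P_{\mathscr{L},\varphi}$ for all $k\ge 1$, and the required membership of these in the correct multilinear spaces is not automatic from the case $k=1$. The paper obtains this from locality of $\mathscr{L}$: one has $\mathscr{L}(f)^{(k+2)}(\varphi)\in\mathcal{D}'_{C_{k+2},C_{k+2}}$ with $C_{k+2}$ the conormal to the small diagonal, and then Proposition \ref{multilinear} embeds this in the appropriate $L_{\beta e}$-spaces. You mention diagonal concentration only for the support argument, not for the cone verification where it is equally essential.

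Second, and more importantly, the membership $R\in\mathscr{F}_{\Gamma_{omc}'}$ requires not that $R(\varphi)\in\C(\Gamma_{omc})$ (which you verify) but that for \emph{each} index $g\in\{+,-\}\times\{1,2\}$ the derivative $R^{(n)}$ lands in the multilinear space valued in the \emph{specific} factor $\C_g(\Gamma_{omc})$. For $g=(\pm,1)$ versus $g=(\pm,2)$ these target spaces are different, and the paper obtains them by three distinct placements of $d^{(k)}P$ (via partial transposes in different variables), composed in a specific order. It is also essential here that the compositions are carried out with the non-compact support condition $\mathcal{SK}_P(g')$, which is left stable by $\Delta^{ret}$; with only compact supports the intermediate compositions would not make sense. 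Your claim that ``all combinations are covered by just the two cone regimes'' of Proposition \ref{ExtendedRetarded} is correct for $R(\varphi)$ itself but does not by itself yield these $\C_g$-valued derivative estimates.
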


\begin{proof}
Note that, using $[V_+(g)]^c\supset V_-(g)$%, replacing $\Gamma_{omc}'$  by $\{(V_+(g),(V_+(g))),(V_-(g),(V_-(g)))\}$ does not change anything and  
we  have the simplified formula \begin{align*}\mathfrak{\C}(\Gamma_{omc})=&L_{\beta e}[(\mathcal{E}'_{V_+(g),\overline{V_+(g)}}(U;E^*),\mathcal{I}_{ibi});(\mathcal{D}'_{V_+(g),\overline{V_+(g)}}(U;E),\mathcal{I}_{ibi})]\\&\cap L_{\beta e}[(\mathcal{E}'_{[V_+(g)]^c,\overline{[V_+(g)]^c}}(U;E^*),\mathcal{I}_{ibi});(\mathcal{D}'_{[V_+(g)]^c,\overline{[V_+(g)]^c}}(U;E),\mathcal{I}_{ibi})]\\&\cap L_{\beta e}[(\mathcal{E}'_{V_-(g),\overline{V_-(g)}}(U;E^*),\mathcal{I}_{ibi});(\mathcal{D}'_{V_-(g),\overline{V_-(g)}}(U;E),\mathcal{I}_{ibi})]\\&\cap L_{\beta e}[(\mathcal{E}'_{[V_-(g)]^c,\overline{[V_-(g)]^c}}(U;E^*),\mathcal{I}_{ibi});(\mathcal{D}'_{[V_-(g)]^c,\overline{[V_-(g)]^c}}(U;E),\mathcal{I}_{ibi})]\end{align*} 

and recall the notation (we now often implicitly assume topology $\mathcal{I}_{ibi}$):\begin{align*}&\mathfrak{\C}_{(\pm,1)}(\Gamma_{omc})=L_{\beta e}[\mathcal{E}'_{V_{\pm}(g),\overline{V_{\pm}(g)}}(U;E^*);\mathcal{D}'_{V_{\pm}(g),\overline{V_{\pm}(g)}}(U;E)]\\&\ \ \ \ \ \ \ \ \ \ \ \ \ \ \ \ \ \ \ \ \ \ \ \ \ \ \ \ \ \ \cap L_{\beta e}[\mathcal{E}'_{[V_{\mp}(g)]^c,\overline{[V_{\mp}(g)]^c}}(U;E^*);\mathcal{D}'_{[V_{\mp}(g)]^c,\overline{[V_{\mp}(g)]^c}}(U;E)],\\&
\mathfrak{\C}_{(\pm,2)}(\Gamma_{omc})=L_{\beta e}[\mathcal{E}'_{[V_{\pm}(g)],\overline{[V_{\pm}(g)]}}(U;E^*);\mathcal{D}'_{[V_{\mp}(g)]^c,\overline{[V_{\mp}(g)]^c}}(U;E)].\end{align*}

Thus our previous proposition \ref{SmoothRetarded} already explains why $R$ is (conveniently) smooth with value $\mathfrak{\C}(\Gamma_{omc}).$ Then we have to control the space of value of $R^{(n)}(\varphi).$
 From the computation of $R^{(1)}$ and induction, one sees that it is a linear combination of  compositions of $R(\varphi)$ and differentials of $$\langle d^{(k)}P_{\mathscr{L},\varphi}(h_1,...,h_k)(u),v\rangle=\mathscr{L}(f)^{(k+2)}(\varphi)[h_1,....h_k,u,v]$$ with $f$ chosen depending only of the supports of $u,v$ as above. Using the condition on $\mathscr{L}$, $d^{(k)}P_{\mathscr{L},\varphi}$ is smooth in $\varphi$ with value in \begin{align}\label{dP14}(\mathcal{D}'_{C_{k+2},C_{k+2}}(M^{k+2},\mathcal{K}; &(E^{\o (k+2)})^*),\mathcal{I}_{ibi})\hookrightarrow L({\bigotimes}_{\beta e, i\in [1,k]} (\mathcal{D}'_{V_{\pm}(g),\overline{V_{\pm}(g)}}(M,\mathcal{F};E),\mathcal{I}_{ibi})\nonumber\\&\otimes_{\beta e}(\mathcal{D}'_{V_{\pm}(g),\overline{V_{\pm}(g)}}(M,\mathcal{SK}_P(g');E),\mathcal{I}_{ibi}); (\mathcal{D}'_{V_{\pm}(g),\overline{V_{\pm}(g)}}(M,\mathcal{SK}_P(g');E^*),\mathcal{I}_{ibi})).\end{align}
The first space of value is only with $f$ fixed, but the second one obtained via proposition \ref{multilinear} also holds by commutation of (regular) inductive limit on support and hypocontinuous map for general $u,v$ and also using locality of $\mathscr{L}(f)$ as for the definition of $P$.

Now, as explained above a typical derivative is a linear combination of composition of the form $[R(\varphi)\circ[d^{(k)}P_{\mathscr{L},\varphi}\circ_{k+1}R(\varphi)]\circ_{k+1}d^{(l)}P_{\mathscr{L},\varphi}]\circ_{l+k+1}R(\varphi)$ with an arbitrary length of composition (the term above appears in the $(k+l)$-th differential. From what we stated and proposition \ref{MultilinearComposition} (especially boundedness of composition), the differential above is smooth in $\varphi$ with value in a space of the form $$L({\bigotimes}_{\beta e, i\in [1,k+l]} (\mathcal{D}'_{V_{\pm}(g),\overline{V_{\pm}(g)}}(M,\mathcal{F};E))\otimes_{\beta e}(\mathcal{D}'_{V_{\pm}(g),\overline{V_{\pm}(g)}}(M,\mathcal{SK}_P(g');E)); (\mathcal{D}'_{V_{\pm}(g),\overline{V_{\pm}(g)}}(M,\mathcal{SK}_P(g');E^*))).$$
Note it is here that it is crucial for composition that we developed a theory for general support left stable by propagators like $\mathcal{SK}_P(g').$
Using a partial transpose in the last coordinate, we also have \begin{align}\label{dP15}d^{(k)}P_{\mathscr{L},\varphi}\in& L({\bigotimes}_{\beta e, i\in [1,k]} (\mathcal{D}'_{V_{\pm}(g),\overline{V_{\pm}(g)}}(M,\mathcal{F};E))\nonumber\\& \otimes_{\beta e}(\mathcal{D}'_{(V_{\mp}(g))^c,\overline{(V_{\mp}(g))^c}}(M,\mathcal{SK}_P(g');E)); (\mathcal{D}'_{(V_{\mp}(g))^c,\overline{(V_{\mp}(g))^c}}(M,\mathcal{SK}_P(g');E^*))).\end{align}

Arguing for compositions as before, this gives the smoothness of derivatives valued in $\mathfrak{\C}_{(\pm,1)}(\Gamma_{omc})$ corresponding to cones $(V_{\pm}(g),(V_{\mp}(g))^c)\in \Gamma_{omc}'.$

Arguing as for the first map, we also know that 
\begin{align*}d^{(k)}P_{\mathscr{L},\varphi}\in& L({\bigotimes}_{\beta e, i\in [1,k-1]} (\mathcal{D}'_{V_{\pm}(g),\overline{V_{\pm}(g)}}(M,\mathcal{F};E))\otimes_{\beta e}(\mathcal{D}'_{(V_{\pm}(g),\overline{V_{\pm}(g)}}(M,\mathcal{K}_F(g');E))\\& \otimes_{\beta e}(\mathcal{D}'_{(V_{\pm}(g)),\overline{(V_{\pm}(g))}}(M,\mathcal{SK}_P(g');E)); (\mathcal{D}'_{V_{\pm}(g),\overline{V_{\pm}(g)}}(M,\mathcal{K};E^*))),\end{align*}

  and thus again by partial transpose but with respect to the next-to-last coordinate, we get :
\begin{align}\label{dP16}d^{(k)}P_{\mathscr{L},\varphi}\in& L({\bigotimes}_{\beta e, i\in [1,k-1]} (\mathcal{D}'_{V_{\pm}(g),\overline{V_{\pm}(g)}}(M,\mathcal{F};E))\otimes_{\beta e}(\mathcal{D}'_{(V_{\mp}(g))^c,\overline{(V_{\mp}(g))^c}}(M,\mathcal{F};E))\nonumber\\& \otimes_{\beta e}(\mathcal{D}'_{(V_{\pm}(g)),\overline{(V_{\pm}(g))}}(M,\mathcal{SK}_P(g');E)); (\mathcal{D}'_{(V_{\mp}(g))^c,\overline{(V_{\mp}(g))^c}}(M,\mathcal{SK}_P(g');E^*)).\end{align}
Then by composing \eqref{dP14},\eqref{dP16} and \eqref{dP15} in this order, one gets the smoothness of derivatives of $R$ in the space  valued in $\mathfrak{\C}_{(\pm,2)}(\Gamma_{omc})$ corresponding to cones $(V_{\pm}(g),(V_{\pm}(g)))\in \Gamma_{omc}'.$

It only remains to check the global support condition namely for $B\in \mathscr{B}(\Gamma_{omc},\mathscr{B}_f,\mathcal{K})$, $\text{supp}(R,B)\in \mathcal{K}.$

Without loss of generality, one can assume $B=\{F\mapsto (\langle F\circ_1(a),a'\rangle), a\in A,a'\in A'\}$ for $A\subset\mathcal{E}'_{V_{\pm}(g),\overline{V_{\pm}(g)}}(U;E^*), A'\subset\mathcal{E}'_{V_{\pm}(g),\overline{V_{\pm}(g)}}(U;E^*)$ uniformly supported in $K_1,K_2\in\mathcal{K}$.

From the computation of the first derivative of $R(\varphi)$, one considers the support of the distribution for $a\in A,a'\in A'$ $$h\mapsto\langle R(\varphi)\circ dP_{\mathscr{L},\varphi}(h)\circ R(\varphi)[a], a'\rangle=\mathscr{L}(f)^{(3)}(\varphi)(h,R(\varphi)[a],A(\varphi)[a']).$$
But by support property of retarded and advanced propagators, $\text{supp}(R(\varphi)[a])\subset J^{+,g'}(K_1),$ $\text{supp}(A(\varphi)[a'])\subset J^{-,g'}(K_2),$ and by locality of $\mathscr{L}(f)$ one deduces the support above is in $J^{-,g'}(K_2)\cap J^{+,g'}(K_1)$ which is compact by global hyperbolicity of $g'$.
Once the various hypocontinuity proved, we can reason as in \cite{RibeiroBF} to check one obtains a Poisson algebra structure.
  \end{proof}
\begin{remark}
The argument to prove microcausal functionals form a Poisson algebra even stable by a retarded product $R$ in \cite{RibeiroBF} have the following flaw. The stability of the retarded product is not convincingly checked. As we pointed out to the authors, in the formula for $R(F,G)^{(3)}{[\phi]}(\psi_1,\psi_2,\psi_3)$ there is a term $-F^{(2)}[\phi](\psi_1,D^1\Delta^R[\phi](\psi_2,G^{(2)}[\phi](\psi_3)))= \mathscr{L}(f)^{(3)}(\Delta^A[\phi]F^{(2)}[\phi](\psi_1),\psi_2,\Delta^R[\phi](G^{(2)}[\phi](\psi_3))).$ The distribution  for this term in $R(F,G)^{(3)}{[\phi]}$ is  the multiplication of  $\mathscr{L}(f)^{(3)}$ with $\Delta^A[\phi]F^{(2)}[\phi]\otimes 1\otimes \Delta^R[\phi]G^{(2)}[\phi].$

Of course one can take $\xi_1, \xi_2$ space-like vectors such that $\xi_1+\xi_2=\xi_3\in V_{+}(x)$. Because of the $(x,x,\xi,-\xi)$ terms in the wave front set of $\Delta^A$, $\Delta^R$, the wave front set of  $\Delta^A[\phi]F^{(2)}[\phi]$ contains a priori all the wave front set of $F^{(2)}[\phi]$, which, in the microcausal functional case,  can contain  $(x,x,\xi_1,\xi_3)$, and respectively $G^{(2)}[\phi]$ can contain $(x,x,\xi_2,\xi_3)$. From there one can find a point  $(x,x,x,\xi_3,\xi_1+\xi_2,\xi_3)\in V_{+}(x)^3$ in the wave front set of the term above (the $\xi_1+\xi_2$ appearing because of locality of $\mathscr{L}$). The motivation of the study in this last section comes from the observation of this issue in the argument. One can of course take polynomial functionals with exactly distributional kernels of derivatives given by H\"ormander's example with only one point above in the wave front set in order to show this argument indeed prevents microcausal functionals to be stable by $R$.
\end{remark}

\end{document}